\providecommand{\tabularnewline}{\\}
\theoremstyle{plain}
\newtheorem{thm}{\protect\theoremname}
\theoremstyle{plain}
\newtheorem{fact}[thm]{\protect\factname}
\theoremstyle{definition}
\newtheorem{defn}[thm]{\protect\definitionname}
\theoremstyle{plain}
\newtheorem{prop}[thm]{\protect\propositionname}
\theoremstyle{plain}
\newtheorem{cor}[thm]{\protect\corollaryname}
\theoremstyle{remark}
\newtheorem{rem}[thm]{\protect\remarkname}
\theoremstyle{plain}
\newtheorem{lem}[thm]{\protect\lemmaname}
\date{}
\providecommand{\corollaryname}{Corollary}
\providecommand{\definitionname}{Definition}
\providecommand{\factname}{Fact}
\providecommand{\lemmaname}{Lemma}
\providecommand{\propositionname}{Proposition}
\providecommand{\remarkname}{Remark}
\providecommand{\theoremname}{Theorem}
\begin{document}

\title{Modal Logic via Global Consequence}
\author{Xuefeng Wen\\
Institute of Logic and Cognition\\
Sun Yat-sen University\\
wxflogic@gmail.com}
\maketitle
\begin{abstract}
In modal logic, semantic consequence is usually defined locally by
truth preservation at all worlds in all models (with respect to a
class of frames). It can also be defined globally by truth preservation
in all models (with respect to a class of frames). The latter is called
global consequence, which is much less studied than the standard local
one. In this paper we first study the relationship between local and
global consequence. Then we give some correspondence results for global
consequence. Finally, we illustrate two applications of global consequence,
connecting it with informational consequence and update consequence
proposed in formal semantics. Some results in the paper are already
known, which are collected in the paper for the sake of completeness.
The others appear to be new. We suggest that global consequence is
not only interesting theoretically, but also useful for application.
\end{abstract}

\section{Introduction}

Given a class of frames $\mathsf{F}$, the inference from $\Gamma$
to $\varphi$ is valid with respect to $\mathsf{F}$, if for every
world $w$ in every model $\mathfrak{M}=(W,R,V)$ such that $\mathfrak{F}=(W,R)$
is a frame in $\mathsf{F}$, if all formulas in $\Gamma$ are true
at $w$ in $\mathfrak{M}$ then $\varphi$ is also true at $w$ in
$\mathfrak{M}$. This is called the local consequence (or local validity)
in modal logic, which is the standard one. Another notion called global
consequence (or global validity) in modal logic is also defined in
the literature (e.g. in \cite{Blackburn2001a}). The inference from
$\Gamma$ to $\varphi$ is globally valid with respect to $\mathsf{F}$,
if for every model $\mathfrak{M}=(W,R,V)$ such that $\mathfrak{F}=(W,R)$
is a frame in $\mathsf{F}$, if all formulas in $\Gamma$ are true
in $\mathfrak{M}$ then $\varphi$ is also true in $\mathfrak{M}$,
where a formula is true in a model if it is true at all worlds in
the model. Compare to local consequence, global consequence is much
less studied. Notable exceptions include, \cite{Kracht2007}, \cite{Fitting1983}
and \cite{Ma2019}. Kracht \cite{Kracht2007} studied global consequence
from an algebraic point of view systematically. Fitting \cite{Fitting1983}
integrated local and global consequence into a ternary relation, and
proved completeness for various kinds of proof systems. Ma and Chen
\cite{Ma2019} presented Gentzen-style sequent calculi for global
consequence. This paper studies global consequence within the standard
relational semantics of modal logic, emphasizing its connection with
local consequence and some other consequence notions, which were proposed
for natural language arguments.

In the sequel, we consider only normal modal logics. Let $\mathcal{L}_{0}$
be the classical propositional language, $\mathcal{L}_{\Box}$ the
basic modal language. We use $\vDash$ (with or without subscripts)
for local consequence and $\vDash^{g}$ (with or without subscripts)
for global consequence, respectively. We use $\Vdash$ for satisfaction
relation. We write $\mathfrak{M},w\Vdash\Gamma$ if $\mathfrak{M},w\Vdash\varphi$
for all $\varphi\in\Gamma$. We write $\mathfrak{M}\Vdash\varphi$
if $\mathfrak{M},w\Vdash\varphi$ for all $w$ in $\mathfrak{M}$,
and $\mathfrak{F}\Vdash\varphi$ if $\mathfrak{M}\Vdash\varphi$ for
all $\mathfrak{M}$ based on $\mathfrak{F}$. We denote by $\vdash_{\mathbf{S}}$
the (local) syntactic consequence for the axiomatic system $\mathbf{S}$.
We denote by $\mathsf{K}$ be the class of all frames, and $\mathsf{M}$
the class of all models. We assume the readers are familiar with notations
for typical classes of frames and axiomatic systems. For example,
$\mathsf{K4}$ refers to the class of transitive frames, and $\mathsf{S5}$
the class of frames with equivalent relations; $\mathbf{K4}$ and
$\mathbf{S5}$ denote their corresponding axiomatic systems, respectively.
Some other notations: $\Box^{0}\varphi=\varphi$, $\Box^{n+1}\varphi=\Box\Box^{n}\varphi$,
$\Box_{r}\varphi:=\varphi\land\Box\varphi$, $\Box\Gamma:=\{\Box\varphi\mid\varphi\in\Gamma\}$,
$\Box_{r}\Gamma:=\{\Box_{r}\varphi\mid\varphi\in\Gamma\}$, $\Box^{\omega}\Gamma:=\{\Box^{n}\psi\mid n\in\mathbb{N},\psi\in\Gamma\}$,
$\Box^{\omega}\varphi:=\Box^{\omega}\{\varphi\}$.

The remaining part of the paper is organized as follows. Section~\ref{sec:global-local}
shows the relationship between local consequence and global consequence.
Section~\ref{sec:Global-Correspondence} gives a general correspondence
result for global consequence and its typical instances. Section~\ref{sec:Applications}
illustrates two applications of global consequence, connecting it
with informational consequence and update consequence proposed in
formal semantics. Section~\ref{sec:Conclusion} concludes the paper.
Some results in the paper are already known, which are collected in
the paper for the sake of completeness. The others are supposed to
be new.

\section{\label{sec:global-local}Relationship Between Local and Global Consequence}

For a start, the following are well known results that connect local
and global consequence.
\begin{fact}
For any class of frames $\mathsf{F}$, for any $\Gamma\cup\{\varphi\}\subseteq\mathcal{L}_{\Box}$, 
\begin{enumerate}
\item $\vDash_{\mathsf{F}}^{g}\varphi$ iff $\vDash_{\mathsf{F}}\varphi$;
\item $\Gamma\vDash_{\mathsf{F}}\varphi$ implies $\Gamma\vDash_{\mathsf{F}}^{g}\varphi$.
\end{enumerate}
\end{fact}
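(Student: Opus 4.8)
The plan is to prove both parts directly from the definitions, since neither needs any frame-theoretic machinery beyond the stipulation that truth in a model means truth at every world. For part (1), I would first specialize the definition of global consequence to the empty premise set: the condition ``all formulas in $\emptyset$ are true in $\mathfrak{M}$'' is vacuously met, so $\vDash_{\mathsf{F}}^{g}\varphi$ reduces to the assertion that $\mathfrak{M}\Vdash\varphi$ for every model $\mathfrak{M}$ based on a frame in $\mathsf{F}$. Doing the same for local consequence, the antecedent ``all formulas in $\emptyset$ are true at $w$'' is again vacuous, so $\vDash_{\mathsf{F}}\varphi$ reduces to $\mathfrak{M},w\Vdash\varphi$ for every world $w$ in every such model. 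Since, by definition, $\mathfrak{M}\Vdash\varphi$ is exactly ``$\mathfrak{M},w\Vdash\varphi$ for all $w$ in $\mathfrak{M}$'', the two reduced statements are literally identical, which gives the biconditional. The content of (1) is thus only that validity of a single formula is insensitive to the local/global distinction, because that distinction concerns solely how the premises are read.

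For part (2), I would fix an arbitrary model $\mathfrak{M}$ based on a frame in $\mathsf{F}$ and assume $\mathfrak{M}\Vdash\Gamma$, aiming to show $\mathfrak{M}\Vdash\varphi$. Taking any world $w$ in $\mathfrak{M}$, the hypothesis $\mathfrak{M}\Vdash\Gamma$ yields in particular $\mathfrak{M},w\Vdash\Gamma$, so the local consequence $\Gamma\vDash_{\mathsf{F}}\varphi$ applies at $w$ and delivers $\mathfrak{M},w\Vdash\varphi$. As $w$ was arbitrary, $\mathfrak{M}\Vdash\varphi$, which is the required conclusion. The single inferential step is the passage from global truth of $\Gamma$ (truth at all worlds simultaneously) to its local truth at each individual world, after which local validity discharges the conclusion pointwise.

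I do not expect a genuine obstacle, as both arguments are definitional; the only point needing care is that the implication in (2) runs strictly one way. The converse fails in general: for instance $p\vDash_{\mathsf{K}}^{g}\Box p$ holds, since if $p$ is true throughout a model then so is $\Box p$, whereas $p\vDash_{\mathsf{K}}\Box p$ does not, because a point may satisfy $p$ yet have a successor falsifying it. I would flag this asymmetry explicitly, since it is exactly what makes global consequence a strictly coarser relation worth studying on its own.
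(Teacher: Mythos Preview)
Your proposal is correct and is exactly the standard definitional argument. The paper does not actually give a proof of this fact; it states it as a well-known result and moves on, so your write-up would supply what the paper omits.
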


Since local and global valid formulas coincide, we are more interested
in global consequence rather than globally valid formulas. The following
fact can be easily verified, which says that local consequence and
global consequence coincide for modal-free formulas. This may be the
reason why for modal-free reasoning, we do not distinguish local and
global consequence.
\begin{fact}
\label{fact:modal-free}Let $\Gamma\cup\{\varphi\}\subseteq\mathcal{L}_{0}$.
Then for any class of frames, $\Gamma\vDash_{\mathsf{F}}^{g}\varphi$
iff $\Gamma\vDash_{\mathsf{F}}\varphi$.
\end{fact}

The following two known results show that if we add some global operators
in the language, then global consequence can always be defined by
local consequence. Before that we need two definitions for the global
operators.
\begin{defn}
Given a model $\mathfrak{M}=(W,R,V)$, define the operator $\boxplus$
as follows,
\[
\mathfrak{M},w\Vdash\boxplus\varphi\text{ iff for all }u\in R^{*}(w),\mathfrak{M},u\Vdash\varphi,
\]
where $R^{*}$ is the reflexive and transitive closure of $R$.
\end{defn}

\begin{defn}
\label{def:universal}Given a model $\mathfrak{M}=(W,R,V)$, define
the universal operator $A$ as follows,
\[
\mathfrak{M},w\Vdash A\varphi\text{ iff for all }u\in W,\mathfrak{M},u\Vdash\varphi.
\]
\end{defn}

\begin{prop}
[\cite{Venema1992}, p. 159]\label{prop:global-by-local-boxplus}For
any class of frames $\mathsf{F}$, $\Gamma\vDash_{\mathsf{F}}^{g}\varphi$
iff $\boxplus\Gamma\vDash_{\mathsf{F}}\boxplus\varphi$.
\end{prop}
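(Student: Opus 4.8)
The plan is to prove the two implications separately, exploiting the fact that $\boxplus$ is, at a point $w$, the universal modality of the submodel generated by $w$. Throughout, for a model $\mathfrak{M}=(W,R,V)$ and $w\in W$ I write $\mathfrak{M}_{w}$ for the point-generated submodel, whose domain is $R^{*}(w)$ with $R$ and $V$ restricted accordingly; the key auxiliary input is the standard fact that modal truth is invariant under generated submodels, i.e.\ $\mathfrak{M},u\Vdash\psi$ iff $\mathfrak{M}_{w},u\Vdash\psi$ for every $u\in R^{*}(w)$ and every formula $\psi$. Two features of $\boxplus$ drive the argument: since $R^{*}$ is reflexive, $\mathfrak{M},w\Vdash\boxplus\psi$ entails $\mathfrak{M},w\Vdash\psi$; and $\mathfrak{M},w\Vdash\boxplus\psi$ says exactly that $\psi$ is true at every world of $\mathfrak{M}_{w}$.

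For the direction from right to left, assume $\boxplus\Gamma\vDash_{\mathsf{F}}\boxplus\varphi$ and let $\mathfrak{M}$ be any model on a frame in $\mathsf{F}$ with $\mathfrak{M}\Vdash\Gamma$. Fix any world $w$. Since every $\gamma\in\Gamma$ is true at all worlds of $\mathfrak{M}$, in particular at all worlds in $R^{*}(w)$, we get $\mathfrak{M},w\Vdash\boxplus\Gamma$. Applying the local consequence hypothesis at $w$ (the frame is unchanged, hence still in $\mathsf{F}$) yields $\mathfrak{M},w\Vdash\boxplus\varphi$, and by reflexivity of $R^{*}$ this gives $\mathfrak{M},w\Vdash\varphi$. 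As $w$ was arbitrary, $\mathfrak{M}\Vdash\varphi$, so $\Gamma\vDash_{\mathsf{F}}^{g}\varphi$. This direction uses nothing about $\mathsf{F}$ beyond that it contains the frame of $\mathfrak{M}$.

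For the converse, assume $\Gamma\vDash_{\mathsf{F}}^{g}\varphi$ and suppose $\mathfrak{M},w\Vdash\boxplus\Gamma$ with $\mathfrak{M}$ on a frame in $\mathsf{F}$. Then every $\gamma\in\Gamma$ holds throughout $R^{*}(w)$, so by invariance under generated submodels $\mathfrak{M}_{w}\Vdash\Gamma$. I would then apply the global-consequence hypothesis to $\mathfrak{M}_{w}$ to conclude $\mathfrak{M}_{w}\Vdash\varphi$, and transfer back by invariance to obtain that $\varphi$ holds throughout $R^{*}(w)$, i.e.\ $\mathfrak{M},w\Vdash\boxplus\varphi$.

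The one real obstacle is the step just described: applying $\Gamma\vDash_{\mathsf{F}}^{g}\varphi$ to $\mathfrak{M}_{w}$ requires the generated subframe to again belong to $\mathsf{F}$. This is where the argument needs $\mathsf{F}$ to be closed under taking generated subframes---a condition met by $\mathsf{K}$ and by every class defined by the standard (e.g.\ Sahlqvist) frame conditions considered here, so the statement holds for all classes of interest. Without such closure the right-to-left implication still survives but the converse can fail; a singleton class whose sole frame is not point-generated already breaks it. I therefore expect the proof to run exactly as above, with the generated-subframe closure of $\mathsf{F}$ as the load-bearing hypothesis.
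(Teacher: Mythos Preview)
The paper does not give its own proof of this proposition; it merely cites Venema. So there is no in-paper argument to compare against, and your proof has to be assessed on its own.

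Your right-to-left direction is correct and is the standard argument. Your left-to-right direction is also correct \emph{under} closure of $\mathsf{F}$ under point-generated subframes, and you are right to flag that this hypothesis is doing real work. In fact your caution is fully justified: the proposition as stated---``for any class of frames $\mathsf{F}$''---is too strong, and the paper's own counterexample from Fact~8 applies verbatim. Take $\mathsf{F}=\{\mathfrak{F}\}$ with $\mathfrak{F}=(\{w,u\},\{(w,u)\})$. Then $\Box\bot\vDash_{\mathsf{F}}^{g}\bot$ (no valuation makes $\Box\bot$ true at $w$), but at $u$ we have $R^{*}(u)=\{u\}$, so $\mathfrak{F},V,u\Vdash\boxplus\Box\bot$ while $\mathfrak{F},V,u\nVdash\boxplus\bot$; hence $\boxplus\Box\bot\nvDash_{\mathsf{F}}\boxplus\bot$. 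The $\boxplus$ formulation therefore suffers from exactly the same defect as the $\Box^{\omega}$ formulation in Theorem~7, and needs the same closure assumption.

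So your proof is the right one, and the additional hypothesis you isolate is not optional but necessary; the unqualified ``for any class of frames'' in the paper's statement is an overstatement (likely inherited from a context in which such closure is tacit).
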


\begin{prop}
[\cite{Goranko1992}, Proposition 2.1]\label{prop:global-by-local-universal}For
any class of frames $\mathsf{F}$, $\Gamma\vDash_{\mathsf{F}}^{g}\varphi$
iff $A\Gamma\vDash_{\mathsf{F}}\varphi$ iff $A\Gamma\vDash_{\mathsf{F}}A\varphi$.
\end{prop}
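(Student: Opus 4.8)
The plan is to prove both biconditionals by directly unwinding the definitions, with the whole argument resting on one elementary observation about the universal operator: by Definition~\ref{def:universal}, the truth value of $A\psi$ at a world does not depend on which world we evaluate at. Concretely, for any model $\mathfrak{M}=(W,R,V)$ on a frame in $\mathsf{F}$, any world $w$, and any formula $\psi$, I first record that $\mathfrak{M},w\Vdash A\psi$ iff $\mathfrak{M},u\Vdash\psi$ for every $u\in W$, which is to say $\mathfrak{M}\Vdash\psi$; in particular this condition holds or fails uniformly across all worlds of $\mathfrak{M}$. Applying this to each member of $\Gamma$ gives that $\mathfrak{M},w\Vdash A\Gamma$ iff $\mathfrak{M}\Vdash\Gamma$, again independently of $w$.

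First I would establish $\Gamma\vDash_{\mathsf{F}}^{g}\varphi$ iff $A\Gamma\vDash_{\mathsf{F}}\varphi$. Spelling out the right-hand side, $A\Gamma\vDash_{\mathsf{F}}\varphi$ says that for every $\mathfrak{M}$ on $\mathsf{F}$ and every world $w$, $\mathfrak{M},w\Vdash A\Gamma$ implies $\mathfrak{M},w\Vdash\varphi$. Using the observation above, the antecedent $\mathfrak{M},w\Vdash A\Gamma$ is equivalent to $\mathfrak{M}\Vdash\Gamma$ and no longer mentions $w$, so the condition becomes: for every $\mathfrak{M}$ on $\mathsf{F}$, if $\mathfrak{M}\Vdash\Gamma$ then $\mathfrak{M},w\Vdash\varphi$ for every $w$, i.e.\ $\mathfrak{M}\Vdash\varphi$. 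That is exactly the definition of $\Gamma\vDash_{\mathsf{F}}^{g}\varphi$. When $\mathfrak{M}\not\Vdash\Gamma$, the local implication holds vacuously at every world, matching the fact that the global implication is also vacuous there.

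Next I would establish $A\Gamma\vDash_{\mathsf{F}}\varphi$ iff $A\Gamma\vDash_{\mathsf{F}}A\varphi$. For the direction from $A\varphi$ to $\varphi$, suppose $A\Gamma\vDash_{\mathsf{F}}A\varphi$ and fix $\mathfrak{M},w$ with $\mathfrak{M},w\Vdash A\Gamma$; then $\mathfrak{M},w\Vdash A\varphi$, and taking $u=w$ in Definition~\ref{def:universal} yields $\mathfrak{M},w\Vdash\varphi$. For the converse, assume $A\Gamma\vDash_{\mathsf{F}}\varphi$ and fix $\mathfrak{M},w$ with $\mathfrak{M},w\Vdash A\Gamma$; by world-independence $\mathfrak{M},u\Vdash A\Gamma$ for every $u\in W$, so the hypothesis gives $\mathfrak{M},u\Vdash\varphi$ for every such $u$, which is precisely $\mathfrak{M},w\Vdash A\varphi$. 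Chaining the two biconditionals completes the proof.

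I expect no serious obstacle here; the argument is a routine quantifier manipulation. The only point requiring a little care is the quantifier shuffle in the first equivalence: because the $A\Gamma$ antecedent is world-independent, the universal quantifier over worlds in the local consequence migrates onto the conclusion, turning ``true at $w$ whenever the premises hold at $w$'' into ``true at every world whenever the premises hold in the model,'' which is exactly the content of global consequence.
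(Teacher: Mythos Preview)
Your argument is correct: the key observation that $\mathfrak{M},w\Vdash A\psi$ holds iff $\mathfrak{M}\Vdash\psi$, independently of $w$, immediately collapses the local antecedent $A\Gamma$ into the global condition $\mathfrak{M}\Vdash\Gamma$, and the second biconditional then falls out from the same world-independence applied to the conclusion. Note, however, that the paper does not supply its own proof of this proposition; it is quoted from \cite{Goranko1992} without argument, so there is no in-paper proof to compare against. Your direct unwinding of the definitions is exactly the standard route and would serve perfectly well as the omitted justification.
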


If we consider only the class of frames $\mathsf{K}$, then global
consequence can be defined by local consequence within the basic modal
language, as the following proposition shows.
\begin{prop}
[\cite{Blackburn2001a}, p. 32]\label{prop:global-by-local-K}For
any $\Gamma\cup\{\varphi\}\subseteq\mathcal{L}_{\Box}$, $\Gamma\vDash_{\mathsf{K}}^{g}\varphi$
iff $\Box^{\omega}\Gamma\vDash_{\mathsf{K}}\varphi$.
\end{prop}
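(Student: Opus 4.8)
The plan is to prove the two directions separately, and the whole argument turns on the point-generated submodel of a world together with the fact that the class $\mathsf{K}$ of all frames is trivially closed under taking generated submodels.

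For the direction from $\Box^{\omega}\Gamma\vDash_{\mathsf{K}}\varphi$ to $\Gamma\vDash_{\mathsf{K}}^{g}\varphi$, I would argue directly. Take any model $\mathfrak{M}$ with $\mathfrak{M}\Vdash\Gamma$. If $\psi$ is true at every world of $\mathfrak{M}$, then so is $\Box^{n}\psi$ for every $n$, since the truth of $\Box^{n}\psi$ at a world only demands truth of $\psi$ at certain other worlds, all of which satisfy $\psi$ by assumption. Hence $\mathfrak{M}\Vdash\Box^{\omega}\Gamma$, so at each world $w$ we have $\mathfrak{M},w\Vdash\Box^{\omega}\Gamma$, and the local consequence gives $\mathfrak{M},w\Vdash\varphi$. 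As $w$ is arbitrary, $\mathfrak{M}\Vdash\varphi$, which is what global consequence requires.

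For the converse, assume $\Gamma\vDash_{\mathsf{K}}^{g}\varphi$ and suppose $\mathfrak{M},w\Vdash\Box^{\omega}\Gamma$ in some model $\mathfrak{M}=(W,R,V)$; I must show $\mathfrak{M},w\Vdash\varphi$. The key observation is that $\mathfrak{M},w\Vdash\Box^{n}\psi$ for every $n$ says precisely that $\psi$ holds at every world reachable from $w$ in exactly $n$ $R$-steps; letting $n$ range over all of $\mathbb{N}$ (the case $n=0$ giving $w$ itself), this means $\psi$ is true throughout $R^{*}(w)$, the set of worlds reachable from $w$ in finitely many steps. I would then pass to the submodel $\mathfrak{M}_{w}$ of $\mathfrak{M}$ generated by $w$, whose domain is exactly $R^{*}(w)$. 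By the previous observation $\mathfrak{M}_{w}\Vdash\Gamma$, so the global-consequence hypothesis yields $\mathfrak{M}_{w}\Vdash\varphi$, in particular $\mathfrak{M}_{w},w\Vdash\varphi$. Finally, the invariance of modal formulas under generated submodels transfers this back to $\mathfrak{M},w\Vdash\varphi$.

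The main work is recording the two standard facts used in the converse: that $R^{*}(w)$ is closed under $R$, so that $\mathfrak{M}_{w}$ is a genuine generated submodel, and that satisfaction of every formula in $\mathcal{L}_{\Box}$ at worlds of $R^{*}(w)$ agrees in $\mathfrak{M}_{w}$ and in $\mathfrak{M}$. The restriction to $\mathsf{K}$ is exactly what makes the argument frictionless: since $\mathsf{K}$ contains all frames, the generated subframe is automatically again in $\mathsf{K}$, which is the closure property the global-consequence hypothesis needs in order to be applied to $\mathfrak{M}_{w}$. For a proper subclass $\mathsf{F}$ this step may fail, and this is the anticipated obstacle to generalizing the $\mathcal{L}_{\Box}$-internal definability of global consequence beyond $\mathsf{K}$.
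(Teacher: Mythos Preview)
Your proof is correct and follows essentially the same route as the paper: both directions use exactly the same ideas (global truth of $\Gamma$ entails pointwise truth of $\Box^{\omega}\Gamma$; conversely, pointwise truth of $\Box^{\omega}\Gamma$ at $w$ gives global truth of $\Gamma$ in the $w$-generated submodel, to which the global hypothesis is then applied). The only difference is packaging: the paper does not prove the $\mathsf{K}$ case directly but instead states and proves the generalization to any class $\mathsf{F}$ closed under point-generated subframes (Theorem~\ref{thm:global-by-local}), which is precisely the closure property you yourself isolate in your final paragraph, and then obtains the proposition as an instance.
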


The proposition appears as an exercise in \cite{Blackburn2001a}.
Instead of proving it directly, we generalize it as follows.
\begin{thm}
\label{thm:global-by-local}Let $\mathsf{F}$ be any class of frames
that is closed under point generated subframes. Then for any $\Gamma\cup\{\varphi\}\subseteq\mathcal{L}_{\Box}$,
$\Gamma\vDash_{\mathsf{F}}^{g}\varphi$ iff $\Box^{\omega}\Gamma\vDash_{\mathsf{F}}\varphi$.
\end{thm}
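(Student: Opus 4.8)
The plan is to prove the two directions separately, and to note at the outset that only the left-to-right direction will use the closure hypothesis. For the easy direction ($\Leftarrow$), I would assume $\Box^{\omega}\Gamma\vDash_{\mathsf{F}}\varphi$ and take any model $\mathfrak{M}=(W,R,V)$ based on a frame in $\mathsf{F}$ with $\mathfrak{M}\Vdash\Gamma$. Since every $\psi\in\Gamma$ is true at every world, every $\Box^{n}\psi$ is true at every world as well, so $\mathfrak{M},w\Vdash\Box^{\omega}\Gamma$ for each $w\in W$. The local hypothesis then yields $\mathfrak{M},w\Vdash\varphi$ at every $w$, i.e.\ $\mathfrak{M}\Vdash\varphi$, establishing $\Gamma\vDash_{\mathsf{F}}^{g}\varphi$. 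This direction uses no structural property of $\mathsf{F}$.

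For the main direction ($\Rightarrow$), the key observation is that $\Box^{\omega}\Gamma$ being satisfied at a point $w$ pins down exactly the worlds of the submodel generated by $w$. Concretely, I would first show that $\mathfrak{M},w\Vdash\Box^{\omega}\Gamma$ iff every $\psi\in\Gamma$ is true at every $u\in R^{*}(w)$: indeed $\mathfrak{M},w\Vdash\Box^{n}\psi$ says precisely that $\psi$ holds at all successors of $w$ reachable in exactly $n$ steps, and $R^{*}(w)$ is the union of these successor sets over all $n$ (with the $n=0$ layer contributing $w$ itself). Hence, writing $\mathfrak{M}_{w}$ for the submodel of $\mathfrak{M}$ generated by $w$, whose domain is $R^{*}(w)$, the hypothesis $\mathfrak{M},w\Vdash\Box^{\omega}\Gamma$ is equivalent to $\mathfrak{M}_{w}\Vdash\Gamma$.

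Now I would pass to $\mathfrak{M}_{w}$. Its underlying frame is a point-generated subframe of a frame in $\mathsf{F}$, and therefore lies in $\mathsf{F}$ by the closure hypothesis. Assuming $\Gamma\vDash_{\mathsf{F}}^{g}\varphi$, from $\mathfrak{M}_{w}\Vdash\Gamma$ we obtain $\mathfrak{M}_{w}\Vdash\varphi$, in particular $\mathfrak{M}_{w},w\Vdash\varphi$. Finally I would invoke the standard invariance of modal satisfaction under generated submodels: for every world of $\mathfrak{M}_{w}$, the truth of any $\mathcal{L}_{\Box}$-formula agrees between $\mathfrak{M}_{w}$ and $\mathfrak{M}$. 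Applying this at $w$ gives $\mathfrak{M},w\Vdash\varphi$, which is what we need for $\Box^{\omega}\Gamma\vDash_{\mathsf{F}}\varphi$.

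I expect the point requiring the most care to be the bookkeeping of the generated-submodel step rather than any deep difficulty: one must verify that $\Box^{\omega}\Gamma$ at $w$ is faithfully recaptured as truth of $\Gamma$ throughout $\mathfrak{M}_{w}$, and that the closure assumption is exactly what licenses applying global consequence to $\mathfrak{M}_{w}$. The generated-submodel invariance is routine, so the real content lies in recognizing that $\Box^{\omega}$ behaves as the ``reachability box'' for $R^{*}$ and that this reachability set coincides with the domain of the point-generated submodel.
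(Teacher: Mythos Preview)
Your proposal is correct and follows essentially the same approach as the paper: both directions hinge on the same observations (that global truth of $\Gamma$ forces $\Box^{\omega}\Gamma$ at every point, and that $\Box^{\omega}\Gamma$ at $w$ amounts to $\Gamma$ holding throughout the point-generated submodel $\mathfrak{M}_{w}$, to which one then applies closure and global consequence). The only cosmetic difference is that the paper argues both directions by contraposition whereas you argue them directly, and you are slightly more explicit about invoking generated-submodel invariance; the mathematical content is the same.
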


\begin{proof}
$\Rightarrow)$ Suppose $\Box^{\omega}\Gamma\nvDash_{\mathsf{F}}\varphi$.
Then there exist a frame $\mathfrak{F}$ in $\mathsf{F}$, a valuation
$V$ on $\mathfrak{F}$, and a world $w$ in $\mathfrak{F}$ such
that $\mathfrak{F},V,w\Vdash\Box^{\omega}\Gamma$ but $\mathfrak{F},V,w\nVdash\varphi$.
Let $(\mathfrak{F}',V')$ be the model generated by $w$ from $(\mathfrak{F},V)$.
Then $\mathfrak{F}',V',w\Vdash\Box^{\omega}\Gamma$ and $\mathfrak{F}',V',w\nVdash\varphi$.
From the former, it follows that $\mathfrak{F}',V'\Vdash\Gamma$,
since all worlds in $\mathfrak{F}'$ are accessible from $w$ in finite
(including zero) steps. From the latter, it follows that $\mathfrak{F}',V'\nVdash\varphi$.
Since $\mathsf{F}$ is closed under subframes, $\mathfrak{F}'$ is
also in $\mathsf{F}$. Thus, $\Gamma\nvDash_{\mathsf{F}}^{g}\varphi$.

$\Leftarrow)$ Suppose $\Gamma\nvDash_{\mathsf{F}}^{g}\varphi$. Then
there exists a frame $\mathfrak{F}$ in $\mathsf{F}$ and a valuation
$V$ on $\mathfrak{F}$ such that $\mathfrak{F},V\Vdash\Gamma$ but
$\mathfrak{F},V\nVdash\varphi$. From the latter, it follows that
there exists a world $w$ in $\mathfrak{F}$ such that $\mathfrak{F},V,w\nVdash\varphi$.
From the former, it follows that every $\psi\in\Gamma$ is true at
all worlds in $\mathfrak{F}$. Thereby, it can be easily verified
by induction that $\Box^{n}\psi$ is true at all worlds in $\mathfrak{F}$
for all $\psi\in\Gamma$ and $n\in\mathbb{N}$. In particular, $\mathfrak{F},V,w\Vdash\Box^{\omega}\Gamma$.
Hence, $\Box^{\omega}\Gamma\nvDash_{\mathsf{F}}\varphi$.
\end{proof}
Note that the direction from right to left does not require $\mathsf{F}$
to be closed under point generated subframes. The other direction,
however, does not hold for all $\mathsf{F}$, as the following fact
shows.
\begin{fact}
There exist a class of frames $\mathsf{F}$ and formulas $\Gamma\cup\{\varphi\}$
such that $\Gamma\vDash_{\mathsf{F}}^{g}\varphi$ but $\Box^{\omega}\Gamma\nvDash_{\mathsf{F}}\varphi$.
\end{fact}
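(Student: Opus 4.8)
The plan is to disprove the problematic direction by producing an explicit counterexample, guided by the preceding remark: since the right-to-left direction of Theorem~\ref{thm:global-by-local} holds unconditionally, I only need a class $\mathsf{F}$ that is \emph{not} closed under point generated subframes and on which the left-to-right direction breaks. The mechanism I would exploit is a dead-end world. If a world $w$ has no successors, then every formula of the form $\Box^{n+1}\psi$ is vacuously true at $w$, so $\Box^{\omega}\Gamma$ can hold at $w$ without constraining the valuation at $w$ itself; yet a predecessor of $w$ that is required to satisfy $\Box p$ does genuinely constrain the value of $p$ at $w$. Passing to the point generated submodel at $w$ would collapse this distinction, so I will choose $\mathsf{F}$ precisely so that this submodel falls outside $\mathsf{F}$.

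Concretely, I would take $\mathsf{F}=\{\mathfrak{F}\}$ with $\mathfrak{F}=(W,R)$, $W=\{w,v\}$ and $R=\{(v,w)\}$, so that $w$ is a dead end and $v$ sees only $w$. Here $R^{*}(w)=\{w\}$, and the subframe generated by $w$ is the single irreflexive point $(\{w\},\emptyset)$, which is not in $\mathsf{F}$; hence $\mathsf{F}$ is not closed under point generated subframes and Theorem~\ref{thm:global-by-local} does not apply. I then set $\Gamma=\{\Box p\}$ and $\varphi=p\lor\Diamond p$.

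For global validity I would argue as follows. Suppose $\mathfrak{F},V\Vdash\Box p$. Evaluating $\Box p$ at $v$, whose only successor is $w$, forces $p$ to be true at $w$; consequently $\varphi=p\lor\Diamond p$ is true at $w$ (first disjunct) and true at $v$ (second disjunct, since $v$ sees $w$), so $\mathfrak{F},V\Vdash\varphi$, giving $\Gamma\vDash_{\mathsf{F}}^{g}\varphi$. For the failure of local consequence I would take the valuation making $p$ false at every world: at the dead end $w$ every member of $\Box^{\omega}\Gamma$ has the form $\Box^{n+1}p$ and so is vacuously true, whence $\mathfrak{F},V,w\Vdash\Box^{\omega}\Gamma$, while both $p$ and $\Diamond p$ are false at $w$, whence $\mathfrak{F},V,w\nVdash\varphi$. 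Therefore $\Box^{\omega}\Gamma\nvDash_{\mathsf{F}}\varphi$.

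The only delicate point is the joint design of $\Gamma$ and $\varphi$: I need $\varphi$ to be forced true at the ``spy'' world $v$ by the global hypothesis, yet allowed to fail at the dead end $w$. The disjunct $\Diamond p$ achieves exactly this, since it vanishes at the successorless world $w$ but recovers the value of $p$ at $w$ when evaluated at $v$; balancing these two requirements is the heart of the example, and the remaining verification is routine.
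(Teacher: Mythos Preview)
Your proof is correct. You use the same two-point frame as the paper (up to relabelling), but your choice of $\Gamma$ and $\varphi$ differs: the paper takes $\Gamma=\{\Box\bot\}$ and $\varphi=\bot$, so that the global consequence $\Box\bot\vDash_{\mathsf{F}}^{g}\bot$ holds \emph{vacuously}---no valuation on $\mathfrak{F}$ makes $\Box\bot$ true everywhere, since the world with a successor always refutes it. This sidesteps the ``delicate point'' you identify: there is no need to engineer a $\varphi$ that is forced true at both worlds, because the premise is never globally satisfied. Your route works just as well and perhaps better illustrates \emph{why} closure under point generated subframes matters (the global hypothesis genuinely propagates information that the local hypothesis at the dead end cannot see), whereas the paper's version is shorter but trades on vacuity.
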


\begin{proof}
Let $\mathsf{F}=\{\mathfrak{F}\}$ with $\mathfrak{F}=(\{w,u\},(w,u))$.
Then for any valuation $V$ on $\mathfrak{F}$, $\mathfrak{F},V\nVdash\Box\bot$,
since $\mathfrak{F},V,w\nVdash\Box\bot$. Hence, $\Box\bot\vDash_{\mathsf{F}}^{g}\bot$.
On the other hand, given any valuation $V$ on $\mathfrak{F}$, $\mathfrak{F},V,u\Vdash\Box^{n}\Box\bot$
for all $n\in\mathbb{N}$, but $\mathfrak{F},V,u\nVdash\bot$. Hence,
$\Box^{\omega}\Box\bot\nvDash_{\mathsf{F}}\bot$.
\end{proof}
In \cite[p. 425]{deRijke2006}, the authors claim that the equivalence
between $\Gamma\vDash_{\mathsf{F}}^{g}\varphi$ and $\Box^{\omega}\Gamma\vDash_{\mathsf{F}}\varphi$
holds for all $\mathsf{F}$, which is incorrect by the above fact.
But the closure under point generated subframes is not a necessary
condition for the equivalence in Theorem~\ref{thm:global-by-local},
as the following fact shows.
\begin{fact}
There exists a class of frames $\mathsf{F}$ that is not closed under
point generated subframes such that for any $\Gamma\cup\{\varphi\}\subseteq\mathcal{L}_{\Box}$,
$\Gamma\vDash_{\mathsf{F}}^{g}\varphi$ iff $\Box^{\omega}\Gamma\vDash_{\mathsf{F}}\varphi$.
\end{fact}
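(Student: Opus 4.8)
The plan is to exhibit a single-frame class that fails closure under point generated subframes yet still validates the equivalence, mirroring the irreflexive counterexample of the previous fact but ``repaired'' by reflexivity. Concretely, I would take $\mathsf{F}=\{\mathfrak{F}\}$ with $\mathfrak{F}=(\{w,u\},R)$ where $R=\{(w,w),(u,u),(w,u)\}$, i.e.\ two reflexive points with a single arrow from $w$ to $u$. The point generated subframe of $\mathfrak{F}$ from $u$ is the single reflexive point $(\{u\},\{(u,u)\})$, which is not $\mathfrak{F}$ and hence not in $\mathsf{F}$; so $\mathsf{F}$ is not closed under point generated subframes, giving the first half of the claim for free.

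It then remains to verify the equivalence for arbitrary $\Gamma\cup\{\varphi\}$. The direction $\Box^{\omega}\Gamma\vDash_{\mathsf{F}}\varphi\Rightarrow\Gamma\vDash_{\mathsf{F}}^{g}\varphi$ holds for every class, as noted after Theorem~\ref{thm:global-by-local}, so I only need the converse. Arguing contrapositively, I assume $\Box^{\omega}\Gamma\nvDash_{\mathsf{F}}\varphi$ and fix a valuation $V$ and a world $x$ with $\mathfrak{F},V,x\Vdash\Box^{\omega}\Gamma$ and $\mathfrak{F},V,x\nVdash\varphi$, then produce a global counterexample, splitting on $x$. If $x=w$, then since $w$ generates all of $\mathfrak{F}$, the hypothesis forces every $\psi\in\Gamma$ to hold at both worlds, so $\mathfrak{F},V\Vdash\Gamma$ while $\mathfrak{F},V\nVdash\varphi$, and $\Gamma\nvDash_{\mathsf{F}}^{g}\varphi$ already.

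The interesting case is $x=u$, where $\Box^{\omega}\Gamma$ constrains only the generated submodel $\{u\}$, so I know merely that $\Gamma$ is true at $u$. Here I would modify $V$ to a valuation $V'$ that copies the atomic valuation of $u$ onto $w$, so that both worlds verify exactly the same atoms. Because the submodel generated by $u$ is unchanged, truth at $u$ is preserved, so $\Gamma$ stays true and $\varphi$ stays false at $u$ under $V'$. The crux is then to observe that under $V'$ the two worlds are modally equivalent: since every world carries the same atomic valuation and every world has a successor, the total relation on $\{w,u\}$ is a bisimulation, whence $w$ and $u$ satisfy the same formulas. Consequently $\Gamma$ is true and $\varphi$ false at $w$ as well, giving $\mathfrak{F},V'\Vdash\Gamma$ and $\mathfrak{F},V'\nVdash\varphi$, i.e.\ $\Gamma\nvDash_{\mathsf{F}}^{g}\varphi$.

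The main obstacle is precisely this non-root case: unlike the closure argument of Theorem~\ref{thm:global-by-local}, I cannot pass to the generated subframe (it is not in $\mathsf{F}$), so I must instead inflate a local counterexample located at $u$ into a genuinely global one on all of $\mathfrak{F}$. The reflexivity of $\mathfrak{F}$ is exactly what makes this possible, since it lets the copied valuation render $w$ and $u$ bisimilar; the analogous irreflexive frame is the one for which the previous fact shows the equivalence fails, so I would expect the final write-up to highlight this contrast as the conceptual point.
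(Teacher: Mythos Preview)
Your proof is correct, but it uses a different witness than the paper's. The paper takes the even simpler frame $\mathfrak{F}=(\{w,u\},\emptyset)$ with empty accessibility relation; failure of closure under point generated subframes is immediate, and the ``copy the valuation'' step requires no bisimulation argument at all, since with $R=\emptyset$ the truth value of any $\mathcal{L}_{\Box}$-formula at a world depends only on the propositional valuation at that world (every $\Box\psi$ is true, every $\Diamond\psi$ false). Your reflexive frame $(\{w,u\},\{(w,w),(u,u),(w,u)\})$ works too, and your bisimulation check (the total relation on $\{w,u\}$ under the duplicated valuation) is sound; what you gain is the conceptual link with the preceding fact---your example is literally the reflexive closure of that counterexample, making explicit that reflexivity is what repairs the failure. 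What the paper's choice buys is brevity: no case split on $x$ is really needed and no bisimulation is invoked, just the observation that isolated worlds with identical atoms satisfy identical formulas.
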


\begin{proof}
Consider $\mathsf{F}=\{\mathfrak{F}\}$ with $\mathfrak{F}=(\{w,u\},\emptyset)$.
Obviously, $\mathsf{F}$ is not closed under point generated subframes.
The direction from right to left is easy. For the other direction,
suppose $\Box^{\omega}\Gamma\nvDash_{\mathsf{F}}\varphi$. Then there
exists a valuation $V$ on $\mathfrak{F}$ such that either $\mathfrak{F},V,w\Vdash\Box^{\omega}\Gamma$
and $\mathfrak{F},V,w\nVdash\varphi$, or $\mathfrak{F},V,u\Vdash\Box^{\omega}\Gamma$
and $\mathfrak{F},V,u\nVdash\varphi$. W.l.o.g., suppose the former
holds. Then $\mathfrak{F},V,w\Vdash\Gamma$. Let $V'$ be a valuation
such that $V'(w)=V'(u)=V(w)$. It is easily verified that $\mathfrak{F},V,w\Vdash\psi$
iff $\mathfrak{F},V',w\Vdash\psi$ iff $\mathfrak{F},V',u\Vdash\psi$
for all $\psi\in\mathcal{L}_{\Box}$. Hence, $\mathfrak{F},V'\Vdash\Gamma$
and $\mathfrak{F},V'\nVdash\varphi$. Thereby, $\Gamma\nvDash_{\mathsf{F}}^{g}\varphi$.
\end{proof}
If we consider transitive frames, then the biconditional between local
consequence and global consequence can be further simplified, as the
following corollary shows.
\begin{cor}
\label{cor:transitive}Let $\mathsf{F}$ be any class of transitive
frames that is closed under point generated subframes. Then for any
$\Gamma\cup\{\varphi\}\subseteq\mathcal{L}_{\Box}$, $\Gamma\vDash_{\mathsf{F}}^{g}\varphi$
iff $\Box_{r}\Gamma\vDash_{\mathsf{F}}\varphi$.
\end{cor}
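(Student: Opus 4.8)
The plan is to derive the corollary from Theorem~\ref{thm:global-by-local} by showing that, over transitive frames closed under point generated subframes, the infinitary premise set $\Box^{\omega}\Gamma$ and the finite premise set $\Box_{r}\Gamma$ are interchangeable on the right-hand side of local consequence. Since Theorem~\ref{thm:global-by-local} already gives $\Gamma\vDash_{\mathsf{F}}^{g}\varphi$ iff $\Box^{\omega}\Gamma\vDash_{\mathsf{F}}\varphi$, it suffices to prove that $\Box^{\omega}\Gamma\vDash_{\mathsf{F}}\varphi$ iff $\Box_{r}\Gamma\vDash_{\mathsf{F}}\varphi$. I would establish this by showing the two premise sets are locally equivalent over transitive frames in the sense that any world satisfying one satisfies the other.

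First I would recall that $\Box_{r}\psi=\psi\land\Box\psi$, so $\Box_{r}\Gamma$ semantically captures $\psi$ and $\Box\psi$ for each $\psi\in\Gamma$, i.e.\ the cases $n=0$ and $n=1$ of $\Box^{\omega}\Gamma$. The key observation is that transitivity lets us recover all higher iterates $\Box^{n}\psi$ from $\Box\psi$. Concretely, over a transitive frame one has the validity $\Box\psi\to\Box\Box\psi$ (the $\mathbf{4}$ axiom), and by induction $\Box\psi\to\Box^{n}\psi$ holds at every world for every $n\geq 1$. I would carry this out as follows: fix a transitive frame $\mathfrak{F}$ in $\mathsf{F}$, a valuation $V$, and a world $w$, and prove that $\mathfrak{F},V,w\Vdash\Box_{r}\Gamma$ iff $\mathfrak{F},V,w\Vdash\Box^{\omega}\Gamma$. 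The right-to-left direction is immediate since $\Box_{r}\psi$ only asks for the $n=0$ and $n=1$ instances. For left-to-right, assuming $\mathfrak{F},V,w\Vdash\psi\land\Box\psi$, I would show $\mathfrak{F},V,w\Vdash\Box^{n}\psi$ for all $n$ by induction on $n$, using transitivity at the inductive step: any point reachable in $n+1$ steps from $w$ is reachable in one step from a point reachable in $n$ steps, and transitivity collapses the chain so that $\Box\psi$ propagates.

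Once this pointwise equivalence of $\Box_{r}\Gamma$ and $\Box^{\omega}\Gamma$ is in hand, the equivalence of the two local-consequence statements follows directly: any counterexample world for one premise set is a counterexample world for the other, since the premise sets are satisfied at exactly the same worlds in every transitive frame. Chaining this with Theorem~\ref{thm:global-by-local} yields $\Gamma\vDash_{\mathsf{F}}^{g}\varphi$ iff $\Box_{r}\Gamma\vDash_{\mathsf{F}}\varphi$, as required.

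I expect the main obstacle to be purely bookkeeping rather than conceptual: one must be careful that the inductive propagation of $\Box\psi$ to $\Box^{n}\psi$ genuinely uses transitivity and holds uniformly at every world, and that the equivalence is stated at a fixed world so that the reduction between the two local consequences goes through cleanly in both directions. There is no need for the closure under point generated subframes in this reduction step itself; that hypothesis is inherited solely through the appeal to Theorem~\ref{thm:global-by-local}.
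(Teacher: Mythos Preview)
Your proposal is correct and follows essentially the same route as the paper: the paper's proof simply invokes Theorem~\ref{thm:global-by-local} together with the observation that over transitive frames $\Box\psi$ entails $\Box^{n}\psi$ for all $n\geq 1$, which makes $\Box_{r}\Gamma$ and $\Box^{\omega}\Gamma$ pointwise equivalent. Your write-up is just a more detailed unpacking of that one-line remark, including the explicit induction and the note that closure under point generated subframes is used only via Theorem~\ref{thm:global-by-local}.
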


\begin{proof}
It follows from Theorem~\ref{thm:global-by-local}, noting that for
any transitive frame $\mathfrak{F}$, $\mathfrak{F}\Vdash\Box^{n}\varphi\leftrightarrow\Box\varphi$
for $n\geq1$. (Recall that $\Box_{r}\varphi$ denotes $\varphi\land\Box\varphi$.)
\end{proof}
To define global consequence by local consequence using only $\Box$
rather than $\Box_{r}$, we could add another constraint for the class
of frames.
\begin{defn}
A class of frames $\mathsf{F}$ is \emph{closed under irreflexive
point extension}, if for any frame $\mathfrak{F}=(W,R)$ in $\mathsf{F}$,
for any $w\in W$ with $\neg Rww$, any point extension $\mathfrak{F}'=(W',R')$
of $\mathfrak{F}$ for $w$ by $u\notin W$ is also in $\mathsf{F}$,
where $\mathfrak{F}'$ is defined as follows:
\[
\begin{aligned}W' & =W\cup\{u\}\\
R' & =R\cup\{(u,w)\}\cup\{(u,w')\mid(w,w')\in R\}
\end{aligned}
\]
\end{defn}

\begin{thm}
\label{thm:global-by-local-point-extension}Let $\mathsf{F}$ be any
class of transitive frames that is closed under point generated subframes
and irreflexive point extension. Then for any $\Gamma\cup\{\varphi\}\subseteq\mathcal{L}_{\Box}$,
$\Gamma\vDash_{\mathsf{F}}^{g}\varphi$ iff $\Box\Gamma\vDash_{\mathsf{F}}\Box\varphi$.
\end{thm}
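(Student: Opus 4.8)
The plan is to lean on Corollary~\ref{cor:transitive}. Since every frame in $\mathsf{F}$ is transitive and $\mathsf{F}$ is closed under point generated subframes, that corollary already gives $\Gamma\vDash_{\mathsf{F}}^{g}\varphi$ iff $\Box_{r}\Gamma\vDash_{\mathsf{F}}\varphi$. Hence it suffices to establish the purely local biconditional $\Box_{r}\Gamma\vDash_{\mathsf{F}}\varphi$ iff $\Box\Gamma\vDash_{\mathsf{F}}\Box\varphi$, and I expect the extra closure hypothesis (irreflexive point extension) to be needed for only one of its two directions.

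For the forward direction ($\Box_{r}\Gamma\vDash_{\mathsf{F}}\varphi\Rightarrow\Box\Gamma\vDash_{\mathsf{F}}\Box\varphi$) I would take a model $\mathfrak{M}$ on a frame in $\mathsf{F}$ and a world $w$ with $\mathfrak{M},w\Vdash\Box\Gamma$, and show $\mathfrak{M},w\Vdash\Box\varphi$. Fixing any $v$ with $Rwv$, I claim $\mathfrak{M},v\Vdash\Box_{r}\Gamma$: for each $\psi\in\Gamma$, from $w\Vdash\Box\psi$ and $Rwv$ we get $v\Vdash\psi$, while for every $v'$ with $Rvv'$ transitivity yields $Rwv'$, hence $v'\Vdash\psi$, so $v\Vdash\Box\psi$ and therefore $v\Vdash\psi\land\Box\psi=\Box_{r}\psi$. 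Applying $\Box_{r}\Gamma\vDash_{\mathsf{F}}\varphi$ gives $v\Vdash\varphi$, and since $v$ was an arbitrary successor of $w$, $w\Vdash\Box\varphi$. This step uses only transitivity, not the point extension hypothesis.

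The backward direction ($\Box\Gamma\vDash_{\mathsf{F}}\Box\varphi\Rightarrow\Box_{r}\Gamma\vDash_{\mathsf{F}}\varphi$) is where the work lies, and I would argue by contraposition. Suppose $\Box_{r}\Gamma\nvDash_{\mathsf{F}}\varphi$, witnessed by a model $(\mathfrak{F},V)$ with $\mathfrak{F}\in\mathsf{F}$ and a world $w$ such that $\mathfrak{F},V,w\Vdash\Box_{r}\Gamma$ but $\mathfrak{F},V,w\nVdash\varphi$. I must manufacture a world that sees $w$ and whose successors all satisfy every $\psi\in\Gamma$, so that $\Box\Gamma$ holds there while $\Box\varphi$ fails. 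The key observation is that $w\Vdash\Box_{r}\psi$ says exactly that $\psi$ holds at $w$ \emph{and} at every $R$-successor of $w$, which is precisely the successor set $\{w\}\cup R(w)$ that the irreflexive point extension adjoins. If $Rww$, then $w$ is its own predecessor: $w\Vdash\Box\Gamma$ comes directly from the $\Box\psi$ conjuncts of $\Box_{r}\Gamma$, while $Rww$ together with $w\nVdash\varphi$ gives $w\nVdash\Box\varphi$, so $(\mathfrak{F},V)$ at $w$ already refutes $\Box\Gamma\vDash_{\mathsf{F}}\Box\varphi$.

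The main obstacle is the remaining case $\neg Rww$, where $w$ may have no predecessor at all; here the irreflexive point extension is exactly the tool for adjoining one. I would pass to the extension $\mathfrak{F}'$ that adds a fresh point $u$ seeing $w$ and all $R$-successors of $w$, and extend $V$ to $u$ arbitrarily. The verification splits into: (i) $\mathfrak{F}'\in\mathsf{F}$, by the closure hypothesis; (ii) because $u$ has no predecessors, the generated submodel from any old world is unchanged, so by invariance of truth under generated submodels every old world keeps its truth values and in particular $w\nVdash\varphi$ persists; (iii) the successors of $u$ are exactly $\{w\}\cup R(w)$, and $w\Vdash\Box_{r}\psi$ gives $\psi$ at $w$ and at all of $R(w)$, whence $u\Vdash\Box\psi$ for each $\psi\in\Gamma$, i.e.\ $u\Vdash\Box\Gamma$; and (iv) $u$ sees $w$ with $w\nVdash\varphi$, so $u\nVdash\Box\varphi$. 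Thus $(\mathfrak{F}',V')$ at $u$ refutes $\Box\Gamma\vDash_{\mathsf{F}}\Box\varphi$, completing the contrapositive. The genuinely delicate points are the invariance bookkeeping in (ii) and the recognition that the case split on reflexivity is forced, since the point extension is defined only for irreflexive $w$.
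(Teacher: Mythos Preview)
Your proof is correct. The paper argues both implications directly between $\vDash^{g}_{\mathsf{F}}$ and $\vDash_{\mathsf{F}}$, whereas you factor through Corollary~\ref{cor:transitive} and reduce the task to the purely local equivalence $\Box_{r}\Gamma\vDash_{\mathsf{F}}\varphi \Leftrightarrow \Box\Gamma\vDash_{\mathsf{F}}\Box\varphi$. The core moves coincide: for the forward implication both use transitivity to see that every $R$-successor of a $\Box\Gamma$-world satisfies $\Gamma\cup\Box\Gamma$ (your $\Box_{r}\Gamma$); for the backward implication both case-split on $Rww$ and invoke the irreflexive point extension when $\neg Rww$, with the same verification that the new point $u$ satisfies $\Box\Gamma$ but not $\Box\varphi$. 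The only structural difference is that the paper, in its $\Rightarrow$ direction, explicitly passes to the submodel generated by such a successor to produce a global countermodel for $\Gamma\vDash_{\mathsf{F}}^{g}\varphi$, while your route hides that generated-submodel step inside the appeal to Corollary~\ref{cor:transitive}. Your decomposition has the merit of isolating cleanly where each closure hypothesis is consumed (generated subframes in the corollary, point extension in the local lemma); the paper's direct argument is self-contained and does not rely on the earlier corollary.
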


\begin{proof}
$\Rightarrow)$ Suppose $\Box\Gamma\nvDash_{\mathsf{F}}\Box\varphi$.
Then there exit a frame $\mathfrak{F}$ in $\mathsf{F}$, a valuation
$V$ on $\mathfrak{F}$, and a world $w$ in $\mathfrak{F}$ such
that $\mathfrak{F},V,w\Vdash\Box\Gamma$ and $\mathfrak{F},V,w\nVdash\varphi$.
From the latter, it follows that there exists $u\in R(w)$ such that
$\mathfrak{F},V,u\nVdash\varphi$. Then from the former, it follows
that $\mathfrak{F},V,u\Vdash\Gamma\cup\Box\Gamma$, noting that $\mathfrak{F}$
is transitive. Let $\mathfrak{M}_{u}$ be the submodel of $(\mathfrak{F},V)$
generated by $u$. Then $\mathfrak{M}_{u},u\nVdash\varphi$ and hence
$\mathfrak{M}_{u},\nVdash\varphi$. Since $\mathfrak{M}_{u}$ is transitive,
every world in $\mathfrak{M}_{u}$ is either $u$ or accessible from
$u$. Thus $\mathfrak{M}_{u},v\Vdash\Gamma$ for all $v$ in $\mathfrak{M}_{u}$.
Then we have $\mathfrak{M}_{u}\Vdash\Gamma$. Since $\mathsf{F}$
is closed under point generated subframes, the frame underlying $\mathfrak{M}_{u}$
is also in $\mathsf{F}$. Therefore, $\Gamma\nvDash_{\mathsf{F}}^{g}\varphi$.

$\Leftarrow)$ Suppose $\Gamma\nvDash_{\mathsf{F}}^{g}\varphi$. Then
there exist a frame $\mathfrak{F}$ in $\mathsf{F}$ and a valuation
$V$ on $\mathfrak{F}$ such that $\mathfrak{F},V\Vdash\Gamma$ and
$\mathfrak{F},V\nVdash\varphi$. From the latter, it follows that
there exists $w$ in $\mathfrak{F}$ such that $\mathfrak{F},V,w\nVdash\varphi$.
If $Rww$, then $\mathfrak{F},V,w\nVdash\Box\varphi$. Since $\mathfrak{F},V\Vdash\Gamma$,
we also have $\mathfrak{F},V,w\Vdash\Box\Gamma$. Hence $\Box\Gamma\nvDash_{\mathsf{F}}\Box\varphi$.
If $\neg Rww$, let $\mathfrak{F}'$ be a point extension of $\mathfrak{F}$
for $w$ by $u$. Then it can be verified that $\mathfrak{F}',V,u\Vdash\Box\Gamma$
and $\mathfrak{F}',V,u\nVdash\Box\varphi$. Since $\mathsf{F}$ is
closed under irreflexive point extension, $\mathfrak{F}'$ is also
in $\mathsf{F}$. Hence, $\Box\Gamma\nvDash_{\mathsf{F}}\Box\varphi$.
\end{proof}
\begin{cor}
\label{cor:reflexive-transitive}Let $\mathsf{F}$ be any class of
reflexive and transitive frames that is closed under point generated
subframes. Then for any $\Gamma\cup\{\varphi\}\subseteq\mathcal{L}_{\Box}$,
$\Gamma\vDash_{\mathsf{F}}^{g}\varphi$ iff $\Box\Gamma\vDash_{\mathsf{F}}\Box\varphi$
iff $\Box\Gamma\vDash_{\mathsf{F}}\varphi$.
\end{cor}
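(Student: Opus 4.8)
The plan is to derive the corollary directly from Theorem~\ref{thm:global-by-local-point-extension} together with Corollary~\ref{cor:transitive}, since a class of reflexive and transitive frames is a special case covered by both. First I would check the hypotheses: any reflexive frame is in particular a frame in which $\neg Rww$ never holds, so the \emph{irreflexive point extension} closure condition of Theorem~\ref{thm:global-by-local-point-extension} is vacuously satisfied — there are simply no irreflexive points $w$ to extend. Hence a class $\mathsf{F}$ of reflexive and transitive frames closed under point generated subframes automatically satisfies all the hypotheses of Theorem~\ref{thm:global-by-local-point-extension}, which immediately yields the first biconditional $\Gamma\vDash_{\mathsf{F}}^{g}\varphi$ iff $\Box\Gamma\vDash_{\mathsf{F}}\Box\varphi$.

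For the second biconditional, $\Box\Gamma\vDash_{\mathsf{F}}\varphi$, I would invoke reflexivity at the semantic level. On reflexive frames $\mathfrak{F}\Vdash\Box\psi\to\psi$, and more pointedly $\mathfrak{F}\Vdash\Box_{r}\psi\leftrightarrow\Box\psi$ since $\Box\psi$ already entails $\psi$ when $Rww$ holds. Thus over the class $\mathsf{F}$ the premise sets $\Box\Gamma$ and $\Box_{r}\Gamma$ are interchangeable, and Corollary~\ref{cor:transitive} (whose transitivity and point-generation hypotheses are met here) gives $\Gamma\vDash_{\mathsf{F}}^{g}\varphi$ iff $\Box_{r}\Gamma\vDash_{\mathsf{F}}\varphi$ iff $\Box\Gamma\vDash_{\mathsf{F}}\varphi$. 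Chaining this with the first biconditional completes the triple equivalence.

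Alternatively, and perhaps more cleanly, I would prove $\Box\Gamma\vDash_{\mathsf{F}}\Box\varphi$ iff $\Box\Gamma\vDash_{\mathsf{F}}\varphi$ by a short direct argument that does not route through $\Box_r$. The forward direction uses that on reflexive frames $\Box\varphi$ locally entails $\varphi$ (from $Rww$), so any world verifying the premises $\Box\Gamma$ and $\Box\varphi$ also verifies $\varphi$. The reverse direction is the more interesting one: I would show that on transitive frames $\Box\Gamma\vDash_{\mathsf{F}}\varphi$ propagates to $\Box\Gamma\vDash_{\mathsf{F}}\Box\varphi$, because if a world $w$ satisfies $\Box\Gamma$ then by transitivity every $R$-successor of $w$ also satisfies $\Box\Gamma$ and hence $\varphi$, giving $\Box\varphi$ at $w$.

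The expected main obstacle is simply making the vacuous-satisfaction observation precise: one must confirm that reflexivity genuinely discharges the irreflexive point extension condition rather than merely weakening it. Once that is noted, the corollary is essentially bookkeeping — assembling the two previously established equivalences and inserting the elementary reflexivity fact $\mathfrak{F}\Vdash\Box\varphi\to\varphi$. No fixpoint or induction machinery beyond what Theorem~\ref{thm:global-by-local-point-extension} already supplies is required, so I anticipate the proof to be a few lines citing the prior results.
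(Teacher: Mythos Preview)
Your proposal is correct and matches the paper's approach closely: the paper obtains the first biconditional exactly as you do (observing that reflexivity makes the irreflexive point extension condition vacuous, then invoking Theorem~\ref{thm:global-by-local-point-extension}), and for the second biconditional the paper's main proof is your ``alternative'' direct argument (reflexivity for left-to-right; for right-to-left it uses the general fact $\Gamma\vDash_{\mathsf{F}}\varphi\Rightarrow\Box\Gamma\vDash_{\mathsf{F}}\Box\varphi$ together with transitivity, which is just a repackaging of your successor argument). Your first route via $\Box_r\psi\leftrightarrow\Box\psi$ and Corollary~\ref{cor:transitive} is also explicitly noted by the paper in the remark immediately following the proof.
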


\begin{proof}
Note that any class of reflexive frames is also closed under irreflexive
point extension. Thus we have the first biconditional from Theorem~\ref{thm:global-by-local-point-extension}.
The direction from left to right of the second biconditional follows
from the fact that in any reflexive frame $\mathfrak{F}$, $\mathfrak{F}\Vdash\Box\varphi\to\varphi$.
The other direction follows from the fact that for any class of frames
$\mathsf{F}$, $\Gamma\vDash_{\mathsf{F}}\varphi$ implies $\Box\Gamma\vDash_{\mathsf{F}}\Box\varphi$,
and hence $\Box\Gamma\vDash_{\mathsf{F}}\varphi$ implies $\Box\Box\Gamma\vDash_{\mathsf{F}}\Box\varphi$.
Then using the fact that for any transitive frame $\mathfrak{F}$,
$\mathfrak{F}\Vdash\Box\varphi\to\Box\Box\varphi$, we obtain the
final result.
\end{proof}
\begin{rem}
The above corollary can also be derived from Theorem~\ref{thm:global-by-local},
noting that in any reflexive frame $\mathfrak{F}$, $\mathfrak{F}\Vdash\Box_{r}\varphi\leftrightarrow\Box\varphi$.
\end{rem}

\begin{cor}
\label{cor:K4-S5}For any $\Gamma\cup\{\varphi\}\subseteq\mathcal{L}_{\Box}$,
for any $\mathsf{F}$ in $\{\mathsf{K4},\mathsf{KD4},\mathsf{S4},\mathsf{S5}\}$,
$\Gamma\vDash_{\mathsf{F}}^{g}\varphi$ iff $\Box\Gamma\vDash_{\mathsf{F}}\Box\varphi$.
\end{cor}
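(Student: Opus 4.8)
The plan is to show that each of the four classes satisfies the hypotheses of Theorem~\ref{thm:global-by-local-point-extension}, namely that it is a class of transitive frames closed under point generated subframes and under irreflexive point extension; the conclusion then follows at once. Transitivity holds by definition for all four classes, since $\mathsf{K4}$, $\mathsf{KD4}$, $\mathsf{S4}$, and $\mathsf{S5}$ are all classes of transitive frames, so only the two closure properties need attention.

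Closure under point generated subframes is routine in each case. A point generated subframe of a transitive frame is transitive, which settles $\mathsf{K4}$. For $\mathsf{KD4}$ I would additionally check that seriality survives: if $v$ lies in the subframe generated by $w$ and $v'$ is an $R$-successor of $v$, then $v'$ is also reachable from $w$, hence lies in the subframe, so $v$ still has a successor. For $\mathsf{S4}$ reflexivity is preserved because the generated subframe keeps every loop, and for $\mathsf{S5}$ the generated subframe is just the restriction to the equivalence class of $w$, which is again an equivalence relation.

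Closure under irreflexive point extension splits into two kinds of cases. For $\mathsf{S4}$ and $\mathsf{S5}$ it holds vacuously: these classes contain only reflexive frames, so there is no irreflexive point $w$ to extend, and the universally quantified condition is satisfied trivially (this is exactly the observation already used in the proof of Corollary~\ref{cor:reflexive-transitive}). For $\mathsf{K4}$ and $\mathsf{KD4}$ I would work directly with the extension $\mathfrak{F}'=(W',R')$ of a transitive $\mathfrak{F}=(W,R)$ at an irreflexive $w$ and verify transitivity of $R'$ by cases. The key observation is that the fresh point $u$ is never an $R'$-target, so the only genuinely new compositions are of the form $u\,R'\,b\,R'\,c$ with $b,c\in W$; here either $b=w$ or $w\,R\,b$, and in both cases transitivity of $R$ yields $w\,R\,c$, whence $u\,R'\,c$. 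For $\mathsf{KD4}$ I would also note that seriality is preserved, since $u$ sees $w$ and every old point keeps its $R$-successors.

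With both closure properties established for all four classes, Theorem~\ref{thm:global-by-local-point-extension} delivers $\Gamma\vDash_{\mathsf{F}}^{g}\varphi$ iff $\Box\Gamma\vDash_{\mathsf{F}}\Box\varphi$ in each case. I expect the transitivity check for the irreflexive point extension of $\mathsf{K4}$ and $\mathsf{KD4}$ to be the only nontrivial step; everything else is either vacuous or a standard fact about generated subframes. Alternatively, the reflexive classes $\mathsf{S4}$ and $\mathsf{S5}$ could be handled directly by Corollary~\ref{cor:reflexive-transitive} rather than the theorem, but routing all four through Theorem~\ref{thm:global-by-local-point-extension} is the more uniform route.
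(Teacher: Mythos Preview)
Your proposal is correct and follows exactly the paper's approach: apply Theorem~\ref{thm:global-by-local-point-extension} after checking that each of $\mathsf{K4}$, $\mathsf{KD4}$, $\mathsf{S4}$, $\mathsf{S5}$ is a class of transitive frames closed under point generated subframes and irreflexive point extension. The paper's proof simply asserts these closure properties without spelling them out, whereas you supply the verifications (including the case analysis for transitivity of the extension and the vacuity argument for the reflexive classes), but the strategy is identical.
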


\begin{proof}
Straightforward from Theorem~\ref{thm:global-by-local-point-extension},
noting that all $\mathsf{F}$ in $\{\mathsf{K4},\mathsf{KD4},\mathsf{S4},\mathsf{S5}\}$
are closed under point generated subframes and irreflexive point extension.
\end{proof}
The following proposition shows that to define global consequence
by local consequence, sometimes various classes of frames are attainable.
\begin{prop}
\label{prop:s5-global}For any $\Gamma\cup\{\varphi\}\subseteq\mathcal{L}_{\Box}$,
$\Gamma\vDash_{\mathsf{S5}}^{g}\varphi$ iff $\Box\Gamma\vDash_{\mathsf{S5}}\Box\varphi$
iff $\Box\Gamma\vDash_{\mathsf{S5}}\varphi$ iff $\Box\Gamma\vDash_{\mathsf{K45}}\Box\varphi$
iff $\Box\Gamma\vDash_{\mathsf{KD45}}\Box\varphi$
\end{prop}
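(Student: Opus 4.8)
The plan is to anchor the first three conditions on results already established and then fold in the two remaining conditions through frame-class inclusions and a single structural argument. First I would note that $\mathsf{S5}$ consists of reflexive and transitive frames and is closed under point generated subframes, so Corollary~\ref{cor:reflexive-transitive} immediately delivers the equivalence of $\Gamma\vDash_{\mathsf{S5}}^{g}\varphi$, $\Box\Gamma\vDash_{\mathsf{S5}}\Box\varphi$, and $\Box\Gamma\vDash_{\mathsf{S5}}\varphi$. It therefore remains only to connect $\Box\Gamma\vDash_{\mathsf{K45}}\Box\varphi$ and $\Box\Gamma\vDash_{\mathsf{KD45}}\Box\varphi$ to $\Box\Gamma\vDash_{\mathsf{S5}}\Box\varphi$.

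Next I would exploit the inclusions $\mathsf{S5}\subseteq\mathsf{KD45}\subseteq\mathsf{K45}$, valid since every equivalence relation is serial, transitive and Euclidean. Because validity over a larger class of frames entails validity over a smaller one, these inclusions give $\Box\Gamma\vDash_{\mathsf{K45}}\Box\varphi \Rightarrow \Box\Gamma\vDash_{\mathsf{KD45}}\Box\varphi \Rightarrow \Box\Gamma\vDash_{\mathsf{S5}}\Box\varphi$ for free. To close the cycle it then suffices to prove the single nontrivial implication $\Box\Gamma\vDash_{\mathsf{S5}}\Box\varphi \Rightarrow \Box\Gamma\vDash_{\mathsf{K45}}\Box\varphi$, which I would argue contrapositively.

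The heart of the matter is the structural observation that in a transitive Euclidean frame the successor set of any point forms an $\mathsf{S5}$ cluster. Concretely, suppose $\Box\Gamma\nvDash_{\mathsf{K45}}\Box\varphi$, witnessed by a $\mathsf{K45}$ frame $\mathfrak{F}=(W,R)$, a valuation $V$, and a world $w$ with $\mathfrak{F},V,w\Vdash\Box\Gamma$ and $\mathfrak{F},V,w\nVdash\Box\varphi$. Put $C=R(w)$. Transitivity shows $C$ is closed under $R$, so the submodel $\mathfrak{G}$ carried by $C$ is a generated submodel and truth of every formula is preserved between $\mathfrak{F}$ and $\mathfrak{G}$ at points of $C$; Euclideanness together with transitivity shows that $R$ restricted to $C$ is reflexive, symmetric and transitive, so $\mathfrak{G}$ is an $\mathsf{S5}$ frame. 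Since $\mathfrak{F},V,w\Vdash\Box\Gamma$ forces every $\gamma\in\Gamma$ to hold throughout $C$, and $C$ is a single $\mathsf{S5}$ cluster, $\Box\Gamma$ holds at every point of $\mathfrak{G}$. Finally $\mathfrak{F},V,w\nVdash\Box\varphi$ yields some $u\in C$ with $\mathfrak{F},V,u\nVdash\varphi$, hence $\mathfrak{G},u\nVdash\varphi$, and reflexivity of $\mathfrak{G}$ gives $\mathfrak{G},u\nVdash\Box\varphi$. Thus $u$ witnesses $\Box\Gamma\nvDash_{\mathsf{S5}}\Box\varphi$, completing the implication and hence the full cycle of equivalences.

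I expect the main obstacle to be the book-keeping in the structural step: verifying carefully that $R$ restricted to $C$ is an equivalence relation, and that passing to the generated submodel $\mathfrak{G}$ preserves the truth of arbitrary modal formulas, so that both the satisfaction of $\Box\Gamma$ and the refutation of $\varphi$ transfer correctly. The subtle point worth flagging is that the original world $w$ may be irreflexive and need not belong to $C$; this is harmless, however, since the entire $\mathsf{S5}$ countermodel is located inside the cluster $C$ and the root $w$ is simply discarded.
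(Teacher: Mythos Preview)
Your proposal is correct and follows essentially the same route as the paper: both invoke Corollary~\ref{cor:reflexive-transitive} for the first three equivalences and then use the structural fact that in a transitive Euclidean frame the successor set of any point is an $\mathsf{S5}$ cluster, passing to that generated submodel to transfer a $\mathsf{K45}$ countermodel to an $\mathsf{S5}$ one. The only organizational difference is that you close a single cycle via the frame inclusions $\mathsf{S5}\subseteq\mathsf{KD45}\subseteq\mathsf{K45}$, whereas the paper proves the $\mathsf{K45}$ biconditional directly (against $\Box\Gamma\vDash_{\mathsf{S5}}\varphi$ rather than $\Box\Gamma\vDash_{\mathsf{S5}}\Box\varphi$) and then remarks that the $\mathsf{KD45}$ case is analogous; your packaging is slightly more economical but not substantively different.
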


\begin{proof}
The first two `iff's follow from Corollary~\ref{cor:reflexive-transitive}.
The direction from right to left of the third `iff' is easy. For the
other direction, suppose $\Box\Gamma\nvDash_{\mathsf{K45}}\Box\varphi$.
Then there exist a transitive and Euclidean model $\mathfrak{M}=(W,R,V)$
and a world $w\in W$ such that $\mathfrak{M},w\Vdash\Box\Gamma$
and $\mathfrak{M},w\nVdash\Box\varphi$. From the latter, it follows
that there exists $u\in R(w)$ such that $\mathfrak{M},u\nVdash\varphi$.
Since $\mathfrak{M}$ is transitive, we also have $\mathfrak{M},u\Vdash\Box\Gamma$.
Let $\mathfrak{M}_{u}$ be the point generated submodel of $\mathfrak{M}$
by $u$. Then it can be verified that $\mathfrak{M}_{u}$ is reflexive,
transitive and Euclidean. Moreover, $\mathfrak{M}_{u},u\Vdash\Box\Gamma$
and $\mathfrak{M}_{u},u\nVdash\varphi$. Therefore $\Box\Gamma\nvDash_{\mathsf{S5}}\varphi$.
The last `iff' can be proved analogously.
\end{proof}
If we restrict premises to be modal-free formulas, then global consequence
can always be defined by local consequence (within the basic modal
language), as the following proposition shows.
\begin{prop}
\label{prop:premise-modal-free}Let $\Gamma\subseteq\mathcal{L}_{0}$
and $\varphi\in\mathcal{L}_{\Box}$. Then for any class of frames
$\mathsf{F}$, $\Gamma\vDash_{\mathsf{F}}^{g}\varphi$ iff $\Box^{\omega}\Gamma\vDash_{\mathsf{F}}\varphi$.
\end{prop}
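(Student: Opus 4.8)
The plan is to split the biconditional and observe that only the left-to-right direction needs new work. For $\Box^{\omega}\Gamma\vDash_{\mathsf{F}}\varphi \Rightarrow \Gamma\vDash_{\mathsf{F}}^{g}\varphi$, I would simply invoke the right-to-left direction of Theorem~\ref{thm:global-by-local}, which was already observed to hold for every class of frames and which never used that $\Gamma$ is modal-free: from a model $(\mathfrak{F},V)$ with $\mathfrak{F}\in\mathsf{F}$, $\mathfrak{F},V\Vdash\Gamma$, and $\mathfrak{F},V,w\nVdash\varphi$, an easy induction shows each $\Box^{n}\psi$ with $\psi\in\Gamma$ is true throughout, so $\mathfrak{F},V,w\Vdash\Box^{\omega}\Gamma$ witnesses $\Box^{\omega}\Gamma\nvDash_{\mathsf{F}}\varphi$.

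The real content is the converse $\Gamma\vDash_{\mathsf{F}}^{g}\varphi \Rightarrow \Box^{\omega}\Gamma\vDash_{\mathsf{F}}\varphi$, which for general $\Gamma$ required closure under point generated subframes; I would establish it for arbitrary $\mathsf{F}$ by contraposition. Assume $\Box^{\omega}\Gamma\nvDash_{\mathsf{F}}\varphi$ and fix a witness $\mathfrak{F}=(W,R)\in\mathsf{F}$, a valuation $V$, and a world $w$ with $\mathfrak{F},V,w\Vdash\Box^{\omega}\Gamma$ and $\mathfrak{F},V,w\nVdash\varphi$. The crucial observation is that $w\Vdash\Box^{\omega}\Gamma$ forces each $\psi\in\Gamma$ to hold at every point of $R^{*}(w)$: a world reachable from $w$ in exactly $k$ steps satisfies $\psi$ because $\Box^{k}\psi$ is true at $w$. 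Taking $k=0$ gives $\mathfrak{F},V,w\Vdash\Gamma$, so $\Gamma$ is propositionally satisfiable, and I may let $S$ be the set of proposition letters true at $w$, a satisfying assignment for $\Gamma$.

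The key device is to retain the \emph{very same} frame $\mathfrak{F}$ and only overwrite the valuation off the generated submodel; this is exactly what removes any need for a closure assumption on $\mathsf{F}$. I would define $V'$ to agree with $V$ on $R^{*}(w)$ and, at each world outside $R^{*}(w)$, to verify precisely the letters in $S$. Since every formula in $\Gamma$ is modal-free, its truth at a world depends only on the letters true there; hence each $\psi\in\Gamma$ holds at every point of $R^{*}(w)$ (by the observation above) and at every point outside it (because $S$ satisfies $\Gamma$), giving $\mathfrak{F},V'\Vdash\Gamma$ globally. At the same time, modal truth is invariant under generated submodels and $V'$ agrees with $V$ on the submodel generated by $w$, so $\mathfrak{F},V',w\nVdash\varphi$ exactly as for $V$. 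Thus $(\mathfrak{F},V')$ is a model on a frame in $\mathsf{F}$ refuting $\Gamma\vDash_{\mathsf{F}}^{g}\varphi$.

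The step to get right is the joint use of two forms of locality: modal-freeness of $\Gamma$ makes truth of the premises depend only on the single world evaluated, which licenses freely rewriting $V$ outside $R^{*}(w)$, while generated-submodel invariance guarantees this rewriting does not disturb the falsity of $\varphi$ at $w$. The remaining checks are routine and the edge cases self-resolve: if $R^{*}(w)=W$ there is nothing to rewrite, and the satisfiability of $\Gamma$ underlying the definition of $S$ is handed to us for free by the hypothesis, so no separate treatment of unsatisfiable $\Gamma$ is needed.
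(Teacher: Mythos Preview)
Your proposal is correct and follows essentially the same approach as the paper: for the nontrivial direction you take a countermodel $(\mathfrak{F},V,w)$ with $\mathfrak{F}\in\mathsf{F}$, keep the same frame, and redefine the valuation outside $R^{*}(w)$ to copy the propositional profile of $w$, using modal-freeness of $\Gamma$ to get global truth of $\Gamma$ and generated-submodel invariance to preserve the failure of $\varphi$ at $w$. The paper's write-up differs only cosmetically in that it passes explicitly through the generated submodel $(\mathfrak{F}',V')$ before defining the new valuation on $\mathfrak{F}$, but the underlying idea and the role of the modal-free hypothesis are identical.
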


\begin{proof}
$\Rightarrow)$ Suppose $\Box^{\omega}\Gamma\nvDash_{\mathsf{F}}\varphi$.
Then there exist a frame $\mathfrak{F}=(W,R)$ in $\mathsf{F}$, a
valuation $V$ on $\mathfrak{F}$, and a world $w$ in $\mathfrak{F}$
such that $\mathfrak{F},V,w\Vdash\Box^{\omega}\Gamma$ but $\mathfrak{F},V,w\nVdash\varphi$.
Let $(\mathfrak{F}',V')$ be the model generated by $w$ from $(\mathfrak{F},V)$.
Let $\mathfrak{F}'=(W',R')$. Then $\mathfrak{F}',V',w\Vdash\Box^{\omega}\Gamma$
and $\mathfrak{F}',V',w\nVdash\varphi$. From the former, it follows
that $\mathfrak{F}',V'\Vdash\Gamma$, since all worlds in $\mathfrak{F}'$
are accessible from $w$ in finite (including zero) steps. From the
latter, it follows that $\mathfrak{F}',V'\nVdash\varphi$. Noting
that $\Gamma$ is satisfiable and contains no modal formulas, we can
define a valuation $V''$ on $\mathfrak{F}$ such that for all worlds
in $W'$, $V''$ coincides with $V'$, and for all worlds $u$ in
$W-W'$, for every atom $p$, $u\in V''(p)$ iff $w\in V'(p)$. Then
$\mathfrak{F},V''\Vdash\Gamma$, but $\mathfrak{F},V''\nVdash\varphi$.
Thus, $\Gamma\nvDash_{\mathsf{F}}^{g}\varphi$.

$\Leftarrow)$ The same as that in the proof of Theorem~\ref{thm:global-by-local}.
\end{proof}
Some of the above results can also be given syntactically. Before
that, we need some definitions. We define local syntactic consequence
in an eliminational way, as in most textbooks in modal logic (e.g.
\cite{Chellas1980} and \cite{Blackburn2001a}) , i.e. $\Gamma\vdash_{\mathbf{S}}\varphi$
iff there is a finite subset $\Delta\subseteq\Gamma$ such that $\vdash_{\mathbf{S}}\bigwedge\Delta\to\varphi$.
The gist of this definition is to prevent the application of the rule
of necessitation to the premises in $\Gamma$, since the inference
from $\varphi$ to $\Box\varphi$ is generally not valid under local
semantic consequence. On the contrary, since we have $\varphi\vDash^{g}\Box\varphi$,
given a standard axiomatic system, the global syntactic consequence
$\vdash_{\mathbf{S}}^{g}$ can be defined in the same way as in classical
propositional logic, i.e. $\Gamma\vdash_{\mathbf{S}}^{g}\varphi$
iff there is finite sequence of formulas $\varphi_{1},\ldots,\varphi_{n}$
such that $\varphi_{n}=\varphi$ and for each $i\le n$ either $\varphi_{i}\in\Gamma$,
or $\varphi_{i}$ is an instance of an axiom scheme, or $\varphi_{i}$
is obtained from previous formulas in the sequence by applying the
rule(s) of the system. As a result, under global syntactic consequence,
the rule of necessitation is applicable to the premises. Now we have
the following result.
\begin{prop}
Let $\mathbf{S}$ be any axiomatic extension of $\mathbf{K}$. Then
for any $\Gamma\cup\{\varphi\}\subseteq\mathcal{L}_{\Box}$, $\Gamma\vdash_{\mathbf{S}}^{g}\varphi$
iff $\Box^{\omega}\Gamma\vdash_{\mathbf{S}}\varphi$.
\end{prop}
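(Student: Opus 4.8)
The plan is to argue both directions purely proof-theoretically, handling the two rules of any normal axiomatic system—modus ponens and necessitation—separately, rather than going through semantics. The single structural fact about $\Box^{\omega}\Gamma$ that I will exploit throughout is that it is \emph{closed under prefixing a box}: if $\chi\in\Box^{\omega}\Gamma$ then $\chi=\Box^{n}\psi$ for some $\psi\in\Gamma$, so $\Box\chi=\Box^{n+1}\psi\in\Box^{\omega}\Gamma$. For the direction from right to left, suppose $\Box^{\omega}\Gamma\vdash_{\mathbf{S}}\varphi$, so there is a finite $\Delta\subseteq\Box^{\omega}\Gamma$ with $\vdash_{\mathbf{S}}\bigwedge\Delta\to\varphi$. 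Each member of $\Delta$ has the form $\Box^{n}\psi$ with $\psi\in\Gamma$, and in the global system $\Box^{n}\psi$ is derivable from $\Gamma$ by taking $\psi\in\Gamma$ and applying necessitation $n$ times. Appending a derivation of the theorem $\bigwedge\Delta\to\varphi$, assembling $\bigwedge\Delta$ by propositional conjunction-introduction, and finishing with modus ponens then gives a global derivation of $\varphi$, so $\Gamma\vdash_{\mathbf{S}}^{g}\varphi$.

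For the substantive direction, from left to right, I would argue by induction on the length of a global derivation $\varphi_{1},\dots,\varphi_{n}=\varphi$ of $\varphi$ from $\Gamma$, proving that $\Box^{\omega}\Gamma\vdash_{\mathbf{S}}\varphi_{i}$ for every $i\le n$. If $\varphi_{i}$ is an axiom instance it is a theorem of $\mathbf{S}$, and if $\varphi_{i}\in\Gamma$ then $\varphi_{i}=\Box^{0}\varphi_{i}\in\Box^{\omega}\Gamma$; both cases are immediate. The modus ponens case uses that local consequence is closed under modus ponens: from $\vdash_{\mathbf{S}}\bigwedge\Delta_{1}\to\psi$ and $\vdash_{\mathbf{S}}\bigwedge\Delta_{2}\to(\psi\to\varphi_{i})$ one obtains $\vdash_{\mathbf{S}}\bigwedge(\Delta_{1}\cup\Delta_{2})\to\varphi_{i}$ by pure propositional reasoning, with $\Delta_{1}\cup\Delta_{2}$ still a finite subset of $\Box^{\omega}\Gamma$.

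The necessitation case is where the work concentrates and is the step I expect to be the main obstacle. Here $\varphi_{i}=\Box\psi$ is obtained from an earlier $\psi$, and by the induction hypothesis $\vdash_{\mathbf{S}}\bigwedge\Delta\to\psi$ for some finite $\Delta\subseteq\Box^{\omega}\Gamma$. Applying necessitation to this theorem gives $\vdash_{\mathbf{S}}\Box(\bigwedge\Delta\to\psi)$; the $\mathbf{K}$ axiom yields $\vdash_{\mathbf{S}}\Box\bigwedge\Delta\to\Box\psi$, and the $\mathbf{K}$-provable distribution $\vdash_{\mathbf{S}}\bigwedge\{\Box\delta\mid\delta\in\Delta\}\to\Box\bigwedge\Delta$ then gives $\vdash_{\mathbf{S}}\bigwedge\{\Box\delta\mid\delta\in\Delta\}\to\Box\psi$. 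The crucial point is that $\{\Box\delta\mid\delta\in\Delta\}$ is again a finite subset of $\Box^{\omega}\Gamma$ by the closure property, so this witnesses $\Box^{\omega}\Gamma\vdash_{\mathbf{S}}\Box\psi$; taking $i=n$ finishes the induction. I would emphasize that the argument must be proof-theoretic rather than a transfer from the semantic Proposition~\ref{prop:global-by-local-K}, since $\mathbf{S}$ is an \emph{arbitrary} axiomatic extension of $\mathbf{K}$ and need not be sound or complete with respect to any manageable class of frames.
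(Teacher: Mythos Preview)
Your proof is correct and follows precisely the induction-on-proof-length approach that the paper itself suggests (the paper omits the details, merely noting that the result ``can also be proved directly by induction on the length of proofs''). Your final remark is also well taken: the paper lists completeness plus the semantic results as an alternative route, but since an arbitrary axiomatic extension of $\mathbf{K}$ need not be Kripke complete, the purely syntactic argument you give is the one that actually establishes the proposition in the stated generality.
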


\begin{cor}
Let $\mathbf{S}$ be any axiomatic extension of $\mathbf{K}4$. Then
for any $\Gamma\cup\{\varphi\}\subseteq\mathcal{L}_{\Box}$, $\Gamma\vdash_{\mathbf{S}}^{g}\varphi$
iff $\Box_{r}\Gamma\vdash_{\mathbf{S}}\varphi$.
\end{cor}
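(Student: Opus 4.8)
The plan is to reduce the corollary to the immediately preceding proposition and then exploit the $4$ axiom. By that proposition, $\Gamma\vdash_{\mathbf{S}}^{g}\varphi$ iff $\Box^{\omega}\Gamma\vdash_{\mathbf{S}}\varphi$, so it suffices to prove that, for any $\mathbf{S}$ extending $\mathbf{K}4$, $\Box^{\omega}\Gamma\vdash_{\mathbf{S}}\varphi$ iff $\Box_{r}\Gamma\vdash_{\mathbf{S}}\varphi$. The intuition is that over $\mathbf{K}4$ the infinite set $\Box^{\omega}\{\psi\}=\{\psi,\Box\psi,\Box\Box\psi,\dots\}$ and the single formula $\Box_{r}\psi=\psi\land\Box\psi$ carry exactly the same deductive strength, because the $4$ axiom collapses all iterated boxes down to one.

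The key lemma I would isolate first is: if $\mathbf{S}$ extends $\mathbf{K}4$, then $\vdash_{\mathbf{S}}\Box_{r}\psi\to\Box^{n}\psi$ for every $n\in\mathbb{N}$. For $n=0$ and $n=1$ this is immediate from the definition $\Box_{r}\psi=\psi\land\Box\psi$. For $n\geq 1$ I would iterate axiom $4$: instantiating $\Box\chi\to\Box\Box\chi$ at $\chi=\Box^{k-1}\psi$ gives $\vdash_{\mathbf{S}}\Box^{k}\psi\to\Box^{k+1}\psi$ for each $k\geq 1$, and chaining these from $k=1$ to $n-1$ yields $\vdash_{\mathbf{S}}\Box\psi\to\Box^{n}\psi$; composing with $\vdash_{\mathbf{S}}\Box_{r}\psi\to\Box\psi$ gives the claim.

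With this lemma in hand, both directions are bookkeeping over the eliminational definition $\Delta\vdash_{\mathbf{S}}\chi$ iff $\vdash_{\mathbf{S}}\bigwedge\Delta_{0}\to\chi$ for some finite $\Delta_{0}\subseteq\Delta$. For the direction from $\Box^{\omega}\Gamma$ to $\Box_{r}\Gamma$: given $\vdash_{\mathbf{S}}\bigwedge\Delta_{0}\to\varphi$ with $\Delta_{0}\subseteq\Box^{\omega}\Gamma$ finite, each conjunct is $\Box^{n_{i}}\psi_{i}$ with $\psi_{i}\in\Gamma$; by the lemma $\vdash_{\mathbf{S}}\Box_{r}\psi_{i}\to\Box^{n_{i}}\psi_{i}$, so $\vdash_{\mathbf{S}}\bigwedge_{i}\Box_{r}\psi_{i}\to\bigwedge\Delta_{0}$ and hence $\vdash_{\mathbf{S}}\bigwedge_{i}\Box_{r}\psi_{i}\to\varphi$, witnessing $\Box_{r}\Gamma\vdash_{\mathbf{S}}\varphi$ via the finite set $\{\Box_{r}\psi_{i}\}\subseteq\Box_{r}\Gamma$. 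For the converse: given $\vdash_{\mathbf{S}}\bigwedge_{i}\Box_{r}\psi_{i}\to\varphi$, expand each $\Box_{r}\psi_{i}$ as $\psi_{i}\land\Box\psi_{i}$; since $\psi_{i},\Box\psi_{i}\in\Box^{\omega}\Gamma$, the finite set $\{\psi_{i},\Box\psi_{i}\}_{i}\subseteq\Box^{\omega}\Gamma$ already yields $\vdash_{\mathbf{S}}\bigwedge_{i}(\psi_{i}\land\Box\psi_{i})\to\varphi$, so $\Box^{\omega}\Gamma\vdash_{\mathbf{S}}\varphi$.

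I do not anticipate a genuine obstacle: the real content is the $\mathbf{K}4$ lemma, and everything else is propositional manipulation. The only point requiring a little care is to keep the eliminational definition honest, that is, always to produce an explicit finite premise set together with a single provable implication rather than reasoning informally with the full infinite sets; but this is routine once the lemma is established.
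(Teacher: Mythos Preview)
Your proposal is correct. The paper itself omits the proof, merely indicating two routes (completeness together with the semantic corollary for transitive frames, or a direct induction on derivations); your argument---reducing to the preceding syntactic proposition and then collapsing $\Box^{\omega}\Gamma$ to $\Box_{r}\Gamma$ via axiom~$4$---is precisely the syntactic analogue of how the paper derives the semantic Corollary~\ref{cor:transitive} from Theorem~\ref{thm:global-by-local}, and is fully in the spirit of what the paper sketches.
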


\begin{cor}
Let $\mathbf{S}$ be any axiomatic extension of $\mathbf{S4}$. Then
for any $\Gamma\cup\{\varphi\}\subseteq\mathcal{L}_{\Box}$, $\Gamma\vdash_{\mathbf{S}}^{g}\varphi$
iff $\Box\Gamma\vdash_{\mathbf{S}}\varphi$ iff $\Box\Gamma\vdash_{\mathbf{S}}\Box\varphi$.
\end{cor}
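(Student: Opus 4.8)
The plan is to establish the two biconditionals separately. For the equivalence of $\Gamma\vdash_{\mathbf{S}}^{g}\varphi$ with $\Box\Gamma\vdash_{\mathbf{S}}\varphi$ I would piggyback on the preceding corollary (the one for axiomatic extensions of $\mathbf{K}4$), turning its $\Box_{r}$ into $\Box$ by means of the axiom $\mathsf{T}$. For the remaining equivalence between $\Box\Gamma\vdash_{\mathbf{S}}\varphi$ and $\Box\Gamma\vdash_{\mathbf{S}}\Box\varphi$ I would give a direct syntactic manipulation that uses $\mathsf{T}$ in one direction and $\mathsf{4}$ in the other; both axioms are available since $\mathbf{S}$ extends $\mathbf{S4}$. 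This mirrors the semantic argument for Corollary~\ref{cor:reflexive-transitive}.

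For the first `iff', note that any extension of $\mathbf{S4}$ is in particular an extension of $\mathbf{K}4$, so the preceding corollary yields $\Gamma\vdash_{\mathbf{S}}^{g}\varphi$ iff $\Box_{r}\Gamma\vdash_{\mathbf{S}}\varphi$. It then suffices to show $\Box_{r}\Gamma\vdash_{\mathbf{S}}\varphi$ iff $\Box\Gamma\vdash_{\mathbf{S}}\varphi$. Since $\mathbf{S}$ proves $\mathsf{T}$, i.e. $\vdash_{\mathbf{S}}\Box\psi\to\psi$, we have $\vdash_{\mathbf{S}}\Box\psi\leftrightarrow(\psi\land\Box\psi)$, that is $\vdash_{\mathbf{S}}\Box\psi\leftrightarrow\Box_{r}\psi$, for every $\psi$. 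Thus $\Box\Gamma$ and $\Box_{r}\Gamma$ are provably equivalent premise by premise, and because local consequence is insensitive to replacing a premise by a provable equivalent, the two premise sets have exactly the same local consequences.

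For the second `iff', the direction $\Box\Gamma\vdash_{\mathbf{S}}\Box\varphi\Rightarrow\Box\Gamma\vdash_{\mathbf{S}}\varphi$ is immediate from $\mathsf{T}$: if $\vdash_{\mathbf{S}}\bigwedge\Box\Delta\to\Box\varphi$ for a finite $\Delta\subseteq\Gamma$, then composing with $\vdash_{\mathbf{S}}\Box\varphi\to\varphi$ gives $\vdash_{\mathbf{S}}\bigwedge\Box\Delta\to\varphi$. For the converse, suppose $\vdash_{\mathbf{S}}\bigwedge\Box\Delta\to\varphi$. Applying necessitation and the $\mathsf{K}$-distribution laws (over implication and over finite conjunctions) yields $\vdash_{\mathbf{S}}\bigwedge\Box\Box\Delta\to\Box\varphi$. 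The axiom $\mathsf{4}$, $\vdash_{\mathbf{S}}\Box\psi\to\Box\Box\psi$, gives $\vdash_{\mathbf{S}}\bigwedge\Box\Delta\to\bigwedge\Box\Box\Delta$, and chaining the two implications produces $\vdash_{\mathbf{S}}\bigwedge\Box\Delta\to\Box\varphi$, i.e. $\Box\Gamma\vdash_{\mathbf{S}}\Box\varphi$.

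I do not expect any genuinely hard step here; the only point that needs care is the bookkeeping forced by the \emph{eliminational} definition of $\vdash_{\mathbf{S}}$. Because that definition packages the premises into a single finite conjunction $\bigwedge\Delta$ rather than into a Tarskian deductive closure, every appeal to ``cut'' or to substitution of provable equivalents must be recast as a manipulation of one implication $\vdash_{\mathbf{S}}\bigwedge\Delta\to\cdots$, and one must check both that the relevant finite subset of premises is preserved and that $\Box$ distributes over the finite conjunction $\bigwedge\Box\Delta$. This is routine but is precisely where a slip would be most likely. Everything else reduces to three standard schematic facts about normal extensions of $\mathbf{S4}$: necessitation together with the $\mathsf{K}$-distribution laws, the axiom $\mathsf{T}$, and the axiom $\mathsf{4}$.
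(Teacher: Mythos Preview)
Your proof is correct. The paper itself omits the proof entirely, only remarking that these syntactic results ``can be obtained by the completeness of the axiomatic systems as well as the above semantic results'' or ``proved directly by induction on the length of proofs.'' Your argument follows the second, purely syntactic, route: you take the $\mathbf{K4}$ corollary as given and reduce $\Box_{r}$ to $\Box$ via $\mathsf{T}$, then handle the second biconditional by a direct necessitation-plus-$\mathsf{K}$-plus-$\mathsf{4}$ manipulation of the finite implication $\vdash_{\mathbf{S}}\bigwedge\Box\Delta\to\varphi$. This is exactly the syntactic analogue of the paper's semantic Corollary~\ref{cor:reflexive-transitive}, and your caveat about the eliminational definition is well placed---the only subtlety is indeed keeping track of the finite $\Delta$ and the distribution of $\Box$ over $\bigwedge$, both of which you handle correctly.
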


These results can be obtained by the completeness of the axiomatic
systems as well as the above semantic results. They can also be proved
directly by induction on the length of proofs. We omit it here.

Conversely, local consequence can also be defined by global consequence,
but much harder. We need a local operator.
\begin{defn}
Given a model $\mathfrak{M}$, define the `only' operator as follows:
\[
\mathfrak{M},w\Vdash O\varphi\text{ iff }\mathfrak{M},w\Vdash\varphi\text{ and for all }w'\neq w,\mathfrak{M},w'\nVdash\varphi.
\]
\end{defn}

Venema gave the following result in \cite{Venema1992} (without proof).
\begin{prop}
[\cite{Venema1992}, p. 159]\label{prop:local-by-global}For any
class of frames $\mathsf{F}$, for any $\Gamma\cup\{\varphi\}\subseteq\mathcal{L}_{\Box EO}$,
for any $p\notin Var(\Gamma\cup\{\varphi\})$
\[
\Gamma\vDash_{\mathsf{F}}\varphi\text{ iff }\{EOp\}\cup\{p\to\gamma\mid\gamma\in\Gamma\}\vDash_{\mathsf{F}}^{g}p\to\varphi,
\]
where $E$ is the dual of the universal operator $A$ in Definition~\ref{def:universal}.
\end{prop}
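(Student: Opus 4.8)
The plan is to use $EOp$ as a gadget that pins $p$ to a single world, thereby turning a claim about one evaluation point (local consequence) into a claim about the whole model (global consequence). First I would record the basic semantic fact underlying the whole argument: for any model $\mathfrak{M}=(W,R,V)$, the formula $Op$ is true at $w$ exactly when $V(p)=\{w\}$, so $EOp$ is true at some (equivalently every) world iff $V(p)$ is a singleton. Hence $\mathfrak{M}\Vdash EOp$ says precisely that $p$ holds at a unique world, which I will call $w_0$. I would also note the routine \emph{relevance lemma}: since $p\notin Var(\Gamma\cup\{\varphi\})$, altering only $V(p)$ leaves the truth value of every formula in $\Gamma\cup\{\varphi\}$ unchanged at every world; this holds even for the global operators $A,E,O$, because each quantifies over worlds but always over the (unchanged) truth set of a $p$-free subformula, so the induction goes through uneventfully.

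For the direction from left to right, assume $\Gamma\vDash_{\mathsf{F}}\varphi$ and take any model $\mathfrak{M}$ on a frame in $\mathsf{F}$ with $\mathfrak{M}\Vdash EOp$ and $\mathfrak{M}\Vdash p\to\gamma$ for every $\gamma\in\Gamma$. By the gadget fact $V(p)=\{w_0\}$ for a unique $w_0$, and from $\mathfrak{M}\Vdash p\to\gamma$ together with $\mathfrak{M},w_0\Vdash p$ I get $\mathfrak{M},w_0\Vdash\Gamma$. Applying the assumed local consequence at $w_0$ in $\mathfrak{M}$ yields $\mathfrak{M},w_0\Vdash\varphi$. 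Since $p$ fails at every world other than $w_0$, the implication $p\to\varphi$ is vacuously true off $w_0$ and true at $w_0$, so $\mathfrak{M}\Vdash p\to\varphi$. Note that this direction works entirely inside the given model and needs no change of valuation.

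For the converse, assume the global consequence and take any model $\mathfrak{M}=(W,R,V)$ on a frame in $\mathsf{F}$ and any world $w_0$ with $\mathfrak{M},w_0\Vdash\Gamma$; I must produce $\mathfrak{M},w_0\Vdash\varphi$. Here I would reset the fresh variable: let $V'$ agree with $V$ except that $V'(p)=\{w_0\}$, and put $\mathfrak{M}'=(W,R,V')$. Crucially the frame is untouched, so $\mathfrak{M}'$ still lies over $\mathsf{F}$, which is exactly why no closure condition on $\mathsf{F}$ is needed. Now $\mathfrak{M}'\Vdash EOp$ by the gadget fact, and $\mathfrak{M}'\Vdash p\to\gamma$ for each $\gamma\in\Gamma$, because $p\to\gamma$ is vacuous away from $w_0$ while at $w_0$ the relevance lemma turns $\mathfrak{M},w_0\Vdash\Gamma$ into $\mathfrak{M}',w_0\Vdash\Gamma$. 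The global consequence then delivers $\mathfrak{M}'\Vdash p\to\varphi$, and reading this at $w_0$ (where $p$ holds) gives $\mathfrak{M}',w_0\Vdash\varphi$; one last application of the relevance lemma transfers this back to $\mathfrak{M},w_0\Vdash\varphi$.

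The part I would watch most carefully --- the main obstacle --- is the interaction of the nonstandard operators $A,E,O$ with the relevance lemma and with the phrase ``true in the model''. Because these operators are independent of the evaluation point, I must check explicitly that (i) $EOp$ really is a global, singleton-counting condition, and (ii) changing $V(p)$ cannot perturb any $A$-, $E$-, or $O$-subformula occurring in $\Gamma\cup\{\varphi\}$. Both reduce to the freshness of $p$, but each deserves its own line, since the global operators are precisely where a naive argument could slip. Everything else is bookkeeping about vacuously true implications.
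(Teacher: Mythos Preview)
The paper does not actually supply a proof of this proposition; it is quoted from Venema and explicitly stated to be ``without proof''. So there is nothing in the paper to compare your argument against.

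That said, your argument is correct and is the natural one. The two points you flag as needing care are exactly right: (i) $\mathfrak{M}\Vdash EOp$ holds iff $V(p)$ is a singleton, since $E$ is world-independent and $Op$ encodes $V(p)=\{w\}$; and (ii) the relevance lemma extends to the operators $A$, $E$, $O$ because each of them quantifies over worlds but evaluates only subformulas, so the induction on formula complexity goes through as long as $p$ does not occur in $\Gamma\cup\{\varphi\}$. The remaining steps---reading $p\to\gamma$ at the unique $p$-world to recover $\Gamma$, and conversely resetting $V(p)$ to $\{w_0\}$ on the same frame---are routine. Your observation that no closure condition on $\mathsf{F}$ is needed, because only the valuation is modified, is also the key structural point.
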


We summarize the results in this section as follows. Those with bold
fonts are supposed to be new.
\begin{center}
\begin{tabular}{l|l>{\raggedright}p{0.4\textwidth}}
 & Local by Global & Global by Local\tabularnewline
\hline 
restricting $\mathcal{L}_{\Box}$, for all $\mathsf{F}$ & Fact.~\ref{fact:modal-free} & Fact.~\ref{fact:modal-free}, \textbf{Prop}.~\textbf{\ref{prop:premise-modal-free}}\tabularnewline
beyond $\mathcal{L}_{\Box}$, for all $\mathsf{F}$ & Prop.~\ref{prop:local-by-global} & Prop.~\ref{prop:global-by-local-boxplus}, Prop.~\ref{prop:global-by-local-universal}\tabularnewline
within $\mathcal{L}_{\Box}$, for some $\mathsf{F}$ &  & \textbf{Thm}.~\textbf{\ref{thm:global-by-local}}, \textbf{Thm}.~\textbf{\ref{thm:global-by-local-point-extension}},
Prop.~\ref{prop:global-by-local-K}, Cor.~\ref{cor:transitive},
Cor.~\ref{cor:reflexive-transitive}, Cor.~\ref{cor:K4-S5}, \textbf{Prop}.~\textbf{\ref{prop:s5-global}}\tabularnewline
\end{tabular}
\par\end{center}

Though within $\mathcal{L}_{\Box}$ global consequence can not be
reduced to local consequence generally, many properties for local
consequence are preserved for global consequence. See \cite{Kracht1999,Kracht2007,Kracht2011}.

\section{\label{sec:Global-Correspondence}Global Correspondence}

If we consider the correspondence between modal formulas and first-order
frame properties, then there is nothing new for global consequence,
since globally valid formulas coincide with locally valid formulas.
But if consider the correspondence between modally valid inferences
and first-order frame properties, then it turns out to be much different
for global consequence.

First, we have the following obvious fact.
\begin{fact}
\label{fact:phi2box}$\varphi\vDash_{\mathsf{F}}^{g}\Box\varphi$
for any class of frames $\mathsf{F}$, in particular, we have
\begin{enumerate}
\item $\Box\varphi\vDash_{\mathsf{F}}^{g}\Box\Box\varphi$
\item $\Diamond\varphi\vDash_{\mathsf{F}}^{g}\Box\Diamond\varphi$
\end{enumerate}
\end{fact}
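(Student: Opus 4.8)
The plan is to prove the main claim $\varphi\vDash_{\mathsf{F}}^{g}\Box\varphi$ directly from the definition of global consequence, and then to read off the two displayed items as immediate instances obtained by substituting $\Box\varphi$ and $\Diamond\varphi$ for $\varphi$.

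First I would unfold what global validity requires. To establish $\varphi\vDash_{\mathsf{F}}^{g}\Box\varphi$, I take an arbitrary model $\mathfrak{M}=(W,R,V)$ whose underlying frame lies in $\mathsf{F}$ and assume $\mathfrak{M}\Vdash\varphi$, that is, $\mathfrak{M},v\Vdash\varphi$ for every $v\in W$. The goal is then $\mathfrak{M}\Vdash\Box\varphi$, i.e. truth of $\Box\varphi$ at every world.

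Next I would verify this pointwise. Fix any world $w\in W$. For every $u\in R(w)$ we have $u\in W$, so by the assumption $\mathfrak{M},u\Vdash\varphi$. By the truth clause for $\Box$, this yields $\mathfrak{M},w\Vdash\Box\varphi$. Since $w$ was arbitrary, $\mathfrak{M}\Vdash\Box\varphi$, as required. Note that no property of $\mathsf{F}$ is used and the argument is uniform in $\varphi$ — it goes through for every frame and every formula — which is why the statement holds for an arbitrary class of frames.

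Finally, the two ``in particular'' items follow by instantiation: item (1) is the main claim with $\varphi$ replaced by $\Box\varphi$, and item (2) is the main claim with $\varphi$ replaced by $\Diamond\varphi$, which is legitimate precisely because the argument above nowhere depends on the syntactic shape of $\varphi$. There is no genuine obstacle here; the only point to keep straight is the distinction between ``true in a model'' (truth at all worlds) and ``true at a world'', since it is exactly the global reading of the premise that makes the inference valid — under the local reading $\varphi\vDash_{\mathsf{F}}\Box\varphi$ fails in general.
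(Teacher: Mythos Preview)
Your proof is correct and is exactly the routine unfolding of the definition of global consequence that the paper has in mind; the paper itself does not spell out a proof but simply calls the fact ``obvious'', so there is nothing to compare.
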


In contrast, $\Box\varphi\vDash_{\mathsf{F}}\Box\Box\varphi$ if and
only if $\mathsf{F}$ is transitive, and $\Diamond\varphi\vDash_{\mathsf{F}}\Box\Diamond\varphi$
if and only if $\mathsf{F}$ is Euclidean.
\begin{fact}
\label{fact:globally-isolated}$\Diamond\varphi\vDash_{\mathsf{F}}^{g}\varphi$
iff $\mathsf{F}$ is globally isolated, i.e. for every $\mathfrak{F}=(W,R)$
in $\mathsf{F}$, $\forall x\exists y\forall z(Ryz\to z=x)$.
\end{fact}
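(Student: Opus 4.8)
The plan is to prove both directions by contraposition, reading the frame condition $\forall x\exists y\forall z(Ryz\to z=x)$ as the assertion that every world $x$ is the \emph{only possible successor} of some world $y$, i.e. that for every $x$ there is a $y$ with $R(y)\subseteq\{x\}$. Unwinding the definition of global consequence, $\Diamond\varphi\vDash_{\mathsf{F}}^{g}\varphi$ says that in every model on a frame in $\mathsf{F}$, if $\Diamond\varphi$ holds at all worlds then so does $\varphi$.

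For the direction from right to left, I would assume $\mathsf{F}$ is globally isolated and suppose, toward a contradiction, that $\Diamond\varphi\nvDash_{\mathsf{F}}^{g}\varphi$. This yields a model $\mathfrak{M}=(W,R,V)$ on a frame in $\mathsf{F}$ with $\mathfrak{M}\Vdash\Diamond\varphi$ but $\mathfrak{M}\nVdash\varphi$, so $\mathfrak{M},x\nVdash\varphi$ for some $x\in W$. By global isolation there is a $y$ with $R(y)\subseteq\{x\}$. Since $\Diamond\varphi$ is true everywhere, $\mathfrak{M},y\Vdash\Diamond\varphi$, so some $z\in R(y)$ satisfies $\varphi$; but $R(y)\subseteq\{x\}$ forces $z=x$, whence $\mathfrak{M},x\Vdash\varphi$, contradicting the choice of $x$.

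For the direction from left to right, I would argue contrapositively. If $\mathsf{F}$ is not globally isolated, then the negation of the frame condition gives a frame $\mathfrak{F}=(W,R)$ in $\mathsf{F}$ and a world $x\in W$ with $\forall y\exists z(Ryz\wedge z\neq x)$, i.e. every world has a successor distinct from $x$. I would then fix a propositional variable $p$ and define the valuation $V$ by $V(p)=W\setminus\{x\}$. Then $x\notin V(p)$ gives $\mathfrak{M},x\nVdash p$, so $\mathfrak{M}\nVdash p$; and for every $y$ the witnessing successor $z\neq x$ satisfies $p$, so $\mathfrak{M},y\Vdash\Diamond p$ for all $y$, giving $\mathfrak{M}\Vdash\Diamond p$. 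Hence $\Diamond p\nvDash_{\mathsf{F}}^{g}p$, so the scheme fails.

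The computations are routine; the only point requiring care is the correct handling of the quantifier alternation $\forall x\exists y\forall z(\cdots)$ and the matching of its two readings to the two semantic clauses for $\Diamond$: in the first direction the witness $y$ (with successors confined to $x$) is used to force $\varphi$ at $x$, while in the second direction the bad world $x$ (avoidable as a successor from everywhere) is exactly what lets $\Diamond p$ hold globally while $p$ fails at $x$.
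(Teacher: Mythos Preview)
Your proof is correct and follows essentially the same approach as the paper's: for $(\Leftarrow)$ you use the witness $y$ with $R(y)\subseteq\{x\}$ together with $\mathfrak{M},y\Vdash\Diamond\varphi$ to force $\varphi$ at $x$ (the paper does this directly for arbitrary $x$ rather than by contradiction, but the argument is identical), and for $(\Rightarrow)$ you use exactly the same countermodel with $V(p)=W\setminus\{x\}$.
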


\begin{proof}
$\Leftarrow)$ Given any globally isolated frame $\mathfrak{F}=(W,R)$
in $\mathsf{F}$, given any valuation $V$ on $\mathfrak{F}$, suppose
$\mathfrak{F},V\Vdash\Diamond\varphi$. Given any $x\in W$, since
$\mathfrak{F}$ is globally isolated, there exists $y\in W$ s.t.
for all $z\in W$, if $Ryz$ then $z=x$. Since $\mathfrak{F},V\Vdash\varphi$,
we have $\mathfrak{F},V,y\Vdash\Diamond\varphi$. Then there exists
$z\in W$ s.t. $Ryz$ and $\mathfrak{F},V,z\Vdash\varphi$. By the
property of $R$, we have $z=x$. Hence, $\mathfrak{F},V,x\Vdash\varphi$.
Since $x$ is arbitrary, we have $\mathfrak{F},V\Vdash\varphi$, as
required.

$\Rightarrow)$ Suppose $\mathfrak{F}=(W,R)$ in $\mathsf{F}$ is
not globally isolated. Then there exists $x\in W$ s.t. for all $y\in W$
there exists $z\in W$ s.t. $Ryz$ and $z\neq x$. Let $V(p)=W-\{x\}$.
Then $\mathfrak{F},V,x\nVdash p$ and hence $\mathfrak{F},V\nVdash p$.
Given any $y\in W$, by the property of $R$, there exists $z\neq x$
s.t. $Ryz$. Hence, $\mathfrak{F},V,z\Vdash p$ and thus $\mathfrak{F},V,y\Vdash\Diamond p$.
Since $y$ is arbitrary, we have $\mathfrak{F},V\Vdash\Diamond p$.
Therefore, $\Diamond p\nvDash_{\mathsf{F}}^{g}p$.
\end{proof}
\begin{fact}
$\Diamond\Diamond\varphi\vDash_{\mathsf{F}}^{g}\Diamond\varphi$ iff
$\mathsf{F}$ is globally transitive, i.e. for every $\mathfrak{F}=(W,R)$
in $\mathsf{F}$, $\forall w\exists x\forall y\forall z(Rxy\land Ryz\to Rwz)$.
\end{fact}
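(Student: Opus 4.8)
The plan is to follow the same two-part template used for Fact~\ref{fact:globally-isolated}: prove the right-to-left implication by a direct semantic chase through the frame condition, and prove the left-to-right implication by contraposition, constructing an explicit valuation that refutes the global inference on any frame violating global transitivity. As in that fact, a single propositional variable $p$ will suffice for the counterexample, and I will freely use that $\mathfrak{F},V\Vdash\psi$ abbreviates truth of $\psi$ at every world.

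For the direction $(\Leftarrow)$, suppose every $\mathfrak{F}=(W,R)$ in $\mathsf{F}$ is globally transitive, and fix such a frame, a valuation $V$, and the assumption $\mathfrak{F},V\Vdash\Diamond\Diamond p$. To show $\mathfrak{F},V\Vdash\Diamond p$, I take an arbitrary $w\in W$. Global transitivity supplies a witness $x$ with $\forall y\forall z(Rxy\land Ryz\to Rwz)$. Since $\Diamond\Diamond p$ holds everywhere, it holds at $x$, so there is a $y$ with $Rxy$ and $\mathfrak{F},V,y\Vdash\Diamond p$, and hence a $z$ with $Ryz$ and $\mathfrak{F},V,z\Vdash p$. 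The witness property then gives $Rwz$, so $\mathfrak{F},V,w\Vdash\Diamond p$. As $w$ was arbitrary, $\mathfrak{F},V\Vdash\Diamond p$, as required.

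For the direction $(\Rightarrow)$, I argue contrapositively. If $\mathsf{F}$ is not globally transitive, some $\mathfrak{F}=(W,R)$ in $\mathsf{F}$ satisfies the negated condition $\exists w\,\forall x\,\exists y\,\exists z\,(Rxy\land Ryz\land\neg Rwz)$; fix such a $w$. The key step is to set $V(p)=\{z\in W\mid\neg Rwz\}=W\setminus R(w)$. With this valuation, $\Diamond p$ fails at $w$, since every $R$-successor $z$ of $w$ satisfies $Rwz$ and is therefore excluded from $V(p)$; hence $\mathfrak{F},V\nVdash\Diamond p$. On the other hand, for any world $x$ the negated condition yields $y,z$ with $Rxy$, $Ryz$, and $\neg Rwz$, so $z\in V(p)$; walking back along $Ryz$ and then $Rxy$ gives $\mathfrak{F},V,x\Vdash\Diamond\Diamond p$, and since $x$ is arbitrary, $\mathfrak{F},V\Vdash\Diamond\Diamond p$. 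Thus $\Diamond\Diamond p\nvDash_{\mathsf{F}}^{g}\Diamond p$.

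The argument is essentially routine once the correct valuation is in hand; the only real choice point, and the step I would treat most carefully, is recognizing that the negated frame condition is exactly what makes $V(p)=W\setminus R(w)$ verify $\Diamond\Diamond p$ globally while still killing $\Diamond p$ at $w$. The quantifier shape of global transitivity, $\forall w\exists x\ldots$, is precisely matched by this construction: the outer $\forall w$ becomes the single world where $\Diamond p$ is refuted, and the failure of the inner $\exists x$ guarantees, from every world, a two-step escape route into the region $W\setminus R(w)$ where $p$ is true.
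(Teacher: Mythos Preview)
Your proof is correct and follows essentially the same approach as the paper's, including the key valuation $V(p)=W\setminus R(w)$ for the contrapositive direction (the paper in fact writes $V(p)=W-R(x)$, which is a typo; your version is the intended one). One small point: in the $(\Leftarrow)$ direction you should argue with an arbitrary formula $\varphi$ rather than the atom $p$, since the statement is schematic---but your argument uses nothing special about $p$, so this is purely cosmetic.
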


\begin{proof}
$\Leftarrow)$ Given any globally transitive frame $\mathfrak{F}=(W,R)$
in $\mathsf{F}$, given any valuation $V$ on $\mathfrak{F}$, suppose
$\mathfrak{F},V\Vdash\Diamond\Diamond\varphi$. Given any $w\in W$,
since $\mathfrak{F}$ is globally transitive, there exists $x\in W$
s.t. for any $y,z\in W$ if $Rxy$ and $Ryz$ then $Rwz$. By $\mathfrak{F},V\Vdash\Diamond\Diamond\varphi$,
we have $\mathfrak{F},V,x\Vdash\Diamond\Diamond\varphi$. Then there
exists $y,z\in W$ s.t. $Rxy$, $Ryz$, and $\mathfrak{F},V,z\Vdash\varphi$.
By the property of $R$, $Rwz$. Hence, $\mathfrak{F},V,w\Vdash\Diamond\varphi$.
Since $w$ is arbitrary, $\mathfrak{F},V\Vdash\Diamond\varphi$, as
required.

$\Rightarrow)$ Suppose $\mathfrak{F}=(W,R)$ in $\mathsf{F}$ is
not globally transitive. Then there exists $w\in W$ s.t. for all
$x\in W$, there exist $y,z\in W$ s.t. $Rxy$, $Ryz$, and $\neg Rwz$.
Let $V(p)=W-R(x)$. Then $\mathfrak{F},V,w\nVdash\Diamond p$ and
hence $\mathfrak{F},V\nVdash\Diamond p$. Given any $x\in W$, since
there exist $y,z\in W$ s.t. $Rxy$, $Ryz$, and $\neg Rwz$, we have
$\mathfrak{F},V,z\Vdash p$ and hence $\mathfrak{F},V,x\Vdash\Diamond\Diamond p$.
Since $x$ is arbitrary, we have $\mathfrak{F},V\Vdash\Diamond\Diamond p$.
Therefore, $\Diamond\Diamond p\nvDash_{\mathsf{F}}^{g}\Diamond p$.
\end{proof}
\begin{fact}
$\Diamond\Box\varphi\vDash_{\mathsf{F}}^{g}\Box\varphi$ iff $\mathsf{F}$
is globally Euclidean, i.e. for every $\mathfrak{F}=(W,R)$ in $\mathsf{F}$,
$\forall w\forall x\exists y\forall z(Rwx\land Ryz\to Rzx)$.
\end{fact}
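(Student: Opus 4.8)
The plan is to mirror the structure of the two preceding facts, establishing the biconditional by a direct semantic chase for the right-to-left direction and by contraposition with a carefully chosen valuation for the left-to-right direction. Throughout I read the frame condition, with the $Rwx$ conjunct pulled out of the innermost $\forall z$, as $\forall w\forall x\bigl(Rwx\to\exists y\forall z(Ryz\to Rzx)\bigr)$: for every pair $w,x$ with $Rwx$ there is a $y$ all of whose successors ``see'' $x$.

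For the $\Leftarrow$ direction, I would suppose $\mathfrak{F}=(W,R)$ in $\mathsf{F}$ is globally Euclidean, fix a valuation $V$, and assume $\mathfrak{F},V\Vdash\Diamond\Box\varphi$. To derive $\mathfrak{F},V\Vdash\Box\varphi$, I would fix an arbitrary world $w$ together with an arbitrary $x\in R(w)$ and aim at $\mathfrak{F},V,x\Vdash\varphi$. Since $Rwx$, global Euclideanness supplies a witness $y$ for which $Rzx$ holds for every $z\in R(y)$. As $\Diamond\Box\varphi$ is true at every world, it is true at $y$, so there is some $z\in R(y)$ with $\mathfrak{F},V,z\Vdash\Box\varphi$; the witness property then gives $Rzx$, whence $\mathfrak{F},V,x\Vdash\varphi$. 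Since $w$ and $x$ were arbitrary, $\mathfrak{F},V\Vdash\Box\varphi$, as required.

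For the $\Rightarrow$ direction I would argue contrapositively. If $\mathfrak{F}=(W,R)$ in $\mathsf{F}$ is not globally Euclidean, the negation of the condition yields $w,x\in W$ with $Rwx$ such that for every $y$ there exists $z$ with $Ryz$ and $\neg Rzx$. Following the pattern used for globally isolated frames, I would take a fresh atom $p$ and set $V(p)=W-\{x\}$. Then $\mathfrak{F},V,x\nVdash p$ together with $Rwx$ gives $\mathfrak{F},V,w\nVdash\Box p$, so $\mathfrak{F},V\nVdash\Box p$. Conversely, for any world $u$ the negated condition instantiated with $y:=u$ furnishes $v\in R(u)$ with $\neg Rvx$; since the only world falsifying $p$ is $x$ and $x\notin R(v)$, every successor of $v$ satisfies $p$, so $\mathfrak{F},V,v\Vdash\Box p$ and hence $\mathfrak{F},V,u\Vdash\Diamond\Box p$. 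As $u$ is arbitrary, $\mathfrak{F},V\Vdash\Diamond\Box p$, which together with $\mathfrak{F},V\nVdash\Box p$ witnesses $\Diamond\Box p\nvDash_{\mathsf{F}}^{g}\Box p$.

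The one point that requires care, and which I expect to be the main obstacle, is getting the quantifier alternation right and recognizing its role in the construction. The crucial observation is that after negation the leading $\exists y$ becomes a $\forall y$, and this universal quantifier is exactly what is needed to guarantee that $\Diamond\Box p$ holds at \emph{every} world $u$ (we simply instantiate $y:=u$). It is also worth checking that $V(p)=W-\{x\}$ makes $\Box p$ true at a world $v$ precisely when $\neg Rvx$, and that worlds with no successors cause no trouble, since $\Box p$ is then vacuously true there; once these are in place, both directions reduce to routine chases analogous to the earlier facts.
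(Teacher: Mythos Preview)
Your proposal is correct and follows essentially the same approach as the paper: both directions use the same semantic chase with the same valuation $V(p)=W-\{x\}$ for the contrapositive direction, and your reformulation of the frame condition (pulling $Rwx$ outside the $\exists y\forall z$) is an equivalent and slightly cleaner way to read it. Your extra remarks on the quantifier alternation and on dead-end worlds are helpful annotations but add nothing beyond what the paper's argument already implicitly handles.
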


\begin{proof}
$\Leftarrow)$ Given any globally Euclidean frame $\mathfrak{F}=(W,R)$
in $\mathsf{F}$, given any valuation $V$ on $\mathfrak{F}$, suppose
$\mathfrak{F},V\Vdash\Diamond\Box\varphi$. Given any $w\in W$, suppose
$Rwx$. Since $\mathfrak{F}$ is globally Euclidean, there exists
$y\in W$ s.t. for all $z\in W$ if $Rwx$ and $Ryz$ then $Rzx$.
By $\mathfrak{F},V\Vdash\Diamond\Box\varphi$, we have $\mathfrak{F},V,y\Vdash\Diamond\Box\varphi$.
Then there exists $z\in W$ s.t. $\mathfrak{F},V,z\Vdash\Box\varphi$.
By the property of $R$, we have $Rzx$. It follows that $\mathfrak{F},V,x\Vdash\varphi$.
Thus, $\mathfrak{F},V,w\Vdash\Box\varphi$. Since $w$ is arbitrary,
we have $\mathfrak{F},V\Vdash\Box\varphi$, as required.

$\Rightarrow)$ Suppose $\mathfrak{F}=(W,R)$ in $\mathsf{F}$ is
not globally Euclidean. Then there exists $w,x\in W$ s.t. $Rwx$
and for all $y\in W$ there exists $z\in W$ s.t. $Ryz$, and $\neg Rzx$.
Let $V(p)=W-\{x\}$. Then $\mathfrak{F},V,w\nVdash\Box p$ and hence
$\mathfrak{F},V\nVdash\Box p$. Given any $y\in W$, by the property
of $R$, there exists $z\in W$ s.t. $Ryz$, and $\neg Rzx$. Hence,
$\mathfrak{F},V,z\Vdash\Box p$ and $\mathfrak{F},V,y\Vdash\Diamond\Box p$.
Since $y$ is arbitrary, we have $\mathfrak{F},V\Vdash\Diamond\Box p$.
Therefore, $\Diamond\Box p\nvDash_{\mathsf{F}}^{g}\Box p$.
\end{proof}
\begin{fact}
$\Box\varphi\vDash_{\mathsf{F}}^{g}\varphi$ iff $\mathsf{F}$ is
globally reflexive, i.e. for every $\mathfrak{F}=(W,R)$ in $\mathsf{F}$,
$\forall x\exists yRyx$.
\end{fact}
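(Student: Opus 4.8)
The plan is to prove the biconditional by establishing each direction separately, following the template of the preceding facts in this section. The guiding observation is that the first-order condition $\forall x\exists y\,Ryx$ says precisely that every world has an $R$-predecessor; this is the natural global weakening of the reflexivity condition $\forall x\,Rxx$ that characterizes $\Box\varphi\vDash_{\mathsf{F}}\varphi$ in the local case. As in the surrounding facts, the schematic $\varphi$ is treated as universally quantified over formulas, so the $\Leftarrow$ direction must be argued for arbitrary $\varphi$, while a refutation in the $\Rightarrow$ direction need only exhibit a single atomic instance.

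For the direction from right to left, I would fix a globally reflexive frame $\mathfrak{F}=(W,R)$ in $\mathsf{F}$ and a valuation $V$, and assume $\mathfrak{F},V\Vdash\Box\varphi$. Given an arbitrary $x\in W$, global reflexivity supplies some $y$ with $Ryx$; since $\Box\varphi$ holds at $y$ and $Ryx$, we obtain $\mathfrak{F},V,x\Vdash\varphi$. As $x$ is arbitrary, $\mathfrak{F},V\Vdash\varphi$, which is exactly what global consequence demands.

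For the direction from left to right I would argue by contraposition, producing a single propositional instance that fails. Assuming some $\mathfrak{F}=(W,R)$ in $\mathsf{F}$ is not globally reflexive, pick $x\in W$ with no predecessor, i.e. $\neg Ryx$ for all $y\in W$. Choosing $V(p)=W-\{x\}$ for a fixed atom $p$ makes $p$ false at $x$, so $\mathfrak{F},V\nVdash p$. The crux is to check that $\Box p$ nevertheless holds everywhere: for any $y$ and any $z$ with $Ryz$ we must have $z\neq x$, since $z=x$ would give $Ryx$, contradicting that $x$ has no predecessor; hence $p$ is true at every $R$-successor of $y$ and $\mathfrak{F},V,y\Vdash\Box p$. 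Thus $\mathfrak{F},V\Vdash\Box p$ while $\mathfrak{F},V\nVdash p$, so $\Box p\nvDash_{\mathsf{F}}^{g}p$, and taking $\varphi=p$ shows that $\Box\varphi\vDash_{\mathsf{F}}^{g}\varphi$ fails.

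I do not expect any serious obstacle: both directions are short, and the only point requiring care is the choice of valuation in the contrapositive step together with the observation that ``no predecessor'' forces every $R$-successor of every world to differ from $x$. This is the same maneuver already employed in the proofs of the neighbouring facts (notably the globally isolated and globally Euclidean cases), so the argument should go through routinely.
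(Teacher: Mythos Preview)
Your proposal is correct and follows essentially the same approach as the paper: both directions are argued exactly as you describe, with the same valuation $V(p)=W-\{x\}$ at the predecessor-free world in the contrapositive direction.
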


\begin{proof}
$\Leftarrow)$ Given any globally reflexive frame $\mathfrak{F}=(W,R)$
in $\mathsf{F}$, given any valuation $V$ on $\mathfrak{F}$, suppose
$\mathfrak{F},V\Vdash\Box\varphi$. Given any $x\in W$, since $\mathfrak{F}$
is backward serial, there exists $y\in W$ s.t. $Ryx$. Since $\mathfrak{F},V\Vdash\Box\varphi$,
we have $\mathfrak{F},V,y\Vdash\Box\varphi$. Hence, $\mathfrak{F},V,x\Vdash\varphi$.
Since $x$ is arbitrary, we have $\mathfrak{F},V\Vdash\varphi$, as
required.

$\Rightarrow)$ Suppose $\mathfrak{F}=(W,R)$ in $\mathsf{F}$ is
not globally reflexive. Then exists $x\in W$ s.t. for all $y\in W$,
$x\notin R(y)$. Let $V(p)=W-\{x\}$. Then $\mathfrak{F},V,x\nVdash p$
and hence $\mathfrak{F},V\nVdash p$. Given any $y\in W$, since $x\notin R(y)$,
we have $\mathfrak{F},V,y\Vdash\Box p$. Since $y$ is arbitrary,
we have $\mathfrak{F},V\Vdash\Box p$. Therefore, $\Box p\nvDash_{\mathsf{F}}^{g}p$.
\end{proof}
\begin{fact}
$\varphi\vDash_{\mathsf{F}}^{g}\Diamond\varphi$ iff $\mathsf{F}$
is globally inverse reflexive, i.e. for every $\mathfrak{F}=(W,R)$
in $\mathsf{F}$, $\forall x\exists yRxy$.
\end{fact}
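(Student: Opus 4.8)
The plan is to follow the same correspondence template used for the preceding facts, establishing the biconditional by proving each direction separately: the right-to-left direction by a direct semantic argument, and the left-to-right direction contrapositively by exhibiting a countermodel. Note that the condition $\forall x\exists y\,Rxy$ is just seriality of $R$, so the fact asserts that the global inference $\varphi\vDash_{\mathsf{F}}^{g}\Diamond\varphi$ captures exactly the serial frames.

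For the direction from right to left, I would take any $\mathfrak{F}=(W,R)$ in $\mathsf{F}$ satisfying $\forall x\exists y\,Rxy$, fix a valuation $V$ with $\mathfrak{F},V\Vdash\varphi$, and consider an arbitrary $x\in W$. Seriality supplies a successor $y$ with $Rxy$; since $\varphi$ holds at every world it holds at $y$, whence $\mathfrak{F},V,x\Vdash\Diamond\varphi$. As $x$ is arbitrary, $\mathfrak{F},V\Vdash\Diamond\varphi$, giving $\varphi\vDash_{\mathsf{F}}^{g}\Diamond\varphi$.

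For the direction from left to right, I would argue by contraposition. Suppose some $\mathfrak{F}=(W,R)$ in $\mathsf{F}$ fails seriality, so there is an $x\in W$ with $R(x)=\emptyset$. Setting $V(p)=W$ makes $p$ true at every world, so $\mathfrak{F},V\Vdash p$; but since $x$ has no successor, $\mathfrak{F},V,x\nVdash\Diamond p$, hence $\mathfrak{F},V\nVdash\Diamond p$. This yields $p\nvDash_{\mathsf{F}}^{g}\Diamond p$, as required.

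I expect essentially no obstacle here, since the statement is the exact dual of the preceding globally reflexive fact, with $\Box$ replaced by $\Diamond$ and backward seriality replaced by forward seriality. The only point worth flagging is that the countermodel is even simpler than in the globally reflexive case: because $\Diamond p$ fails at any dead-end point regardless of the valuation, the witness $V(p)=W$ already suffices, and unlike the globally reflexive proof there is no need to make $p$ false at a singled-out world.
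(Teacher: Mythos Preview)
Your proposal is correct and follows essentially the same approach as the paper: the $\Leftarrow$ direction uses seriality to find a successor at which $\varphi$ holds, and the $\Rightarrow$ direction takes a dead-end world $x$ with $V(p)=W$ to witness $p\nvDash_{\mathsf{F}}^{g}\Diamond p$. Your additional remark about the countermodel being simpler than in the globally reflexive case is accurate but not in the paper.
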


\begin{proof}
$\Leftarrow)$ Given any globally inverse reflexive frame $\mathfrak{F}=(W,R)$
in $\mathsf{F}$, given any valuation $V$ on $\mathfrak{F}$, suppose
$\mathfrak{F},V\Vdash\varphi$. Then $\mathfrak{F},V,w\Vdash\varphi$
for all $w\in W$. Since $\mathfrak{F}$ is globally inverse reflexive,
$\mathfrak{F},V,w\Vdash\Diamond\varphi$ for all $w\in W$, i.e. $\mathfrak{F},V\Vdash\Diamond\varphi$.

$\Rightarrow)$ Suppose $\mathfrak{F}=(W,R)$ in $\mathsf{F}$ is
not globally inverse reflexive. Then there exists $w\in W$ such that
$R(w)=\emptyset$. Let $V(p)=W$. Then $\mathfrak{F},V\Vdash p$ but
$\mathfrak{F},V,w\nVdash\Diamond p$. Thus $\mathfrak{F},V\nVdash\Diamond p$.
Therefore, $p\nvDash_{\mathsf{F}}^{g}\Diamond p$.
\end{proof}
\begin{fact}
$\Box\varphi\vDash_{\mathsf{F}}^{g}\Diamond\varphi$ iff $\mathsf{F}$
is globally serial, i.e. for every $\mathfrak{F}=(W,R)$ in $\mathsf{F}$,
$\forall x\exists y\exists z(Ryz\land Rxz)$.
\end{fact}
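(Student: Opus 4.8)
The plan is to establish the biconditional by the two-step template already used for the facts in this section: prove the left-to-right direction as a direct soundness argument, and the right-to-left direction contrapositively by turning a failure of the frame condition into a refuting model. Throughout, recall that $\mathfrak{F},V\Vdash\Box\varphi$ means $\varphi$ holds at every $R$-successor of every world, and that $\mathfrak{F},V\Vdash\Diamond\varphi$ means every world has some $R$-successor satisfying $\varphi$.

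For the ($\Leftarrow$) direction I would fix a globally serial $\mathfrak{F}=(W,R)$ in $\mathsf{F}$, a valuation $V$ with $\mathfrak{F},V\Vdash\Box\varphi$, and an arbitrary world $x$, aiming at $\mathfrak{F},V,x\Vdash\Diamond\varphi$. Global seriality hands me $y,z$ with $Ryz$ and $Rxz$. Since $\Box\varphi$ is true everywhere it holds at $y$, so $Ryz$ gives $\mathfrak{F},V,z\Vdash\varphi$; then $Rxz$ gives $\mathfrak{F},V,x\Vdash\Diamond\varphi$. As $x$ is arbitrary, $\mathfrak{F},V\Vdash\Diamond\varphi$, as required.

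For the ($\Rightarrow$) direction I would assume some $\mathfrak{F}=(W,R)$ in $\mathsf{F}$ is not globally serial and refute $\Box\varphi\vDash_{\mathsf{F}}^{g}\Diamond\varphi$. Negating the condition yields a world $x$ for which no $y,z$ with $Ryz\land Rxz$ exist; instantiating the witness with $y=x$ collapses this to $R(x)=\emptyset$, so $x$ is a dead end. Choosing $\varphi=p$ for an atom $p$ and $V(p)=W$, I would note that $p$ is true everywhere, hence $\mathfrak{F},V\Vdash\Box p$, while $R(x)=\emptyset$ forces $\mathfrak{F},V,x\nVdash\Diamond p$ and so $\mathfrak{F},V\nVdash\Diamond p$; thus $\Box p\nvDash_{\mathsf{F}}^{g}\Diamond p$.

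I expect no serious obstacle here; the one point worth flagging is the reading of the frame condition. The displayed formula $\forall x\exists y\exists z(Ryz\land Rxz)$ is equivalent to plain seriality $\forall x\exists zRxz$ (the witness $y=x$ always works), so the extra existential $y$ carries no logical content and its role is purely presentational, naming the world at which the global truth of $\Box\varphi$ is invoked to obtain $\varphi$ at $z$. Consequently the only thing to get right is the bookkeeping in the refutation, namely that $V(p)=W$ makes $\Box p$ globally true while a successorless world defeats $\Diamond p$; this mirrors exactly the refutation used for the globally inverse reflexive case.
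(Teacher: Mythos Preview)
Your proof is correct. The $\Leftarrow$ direction matches the paper exactly. In the $\Rightarrow$ direction you diverge: the paper takes the negated condition in its full form (there is $x$ such that for all $y,z$, $Ryz$ implies $z\notin R(x)$), sets $V(p)=W\setminus R(x)$, and verifies $\Box p$ is global by checking that every successor of every world lies outside $R(x)$. You instead observe that instantiating $y=x$ in the negated condition already forces $R(x)=\emptyset$, so the simpler valuation $V(p)=W$ suffices. Your route is shorter and exposes a point the paper leaves implicit, namely that the displayed ``globally serial'' condition is logically equivalent to ordinary seriality $\forall x\exists z\,Rxz$. The paper's route, on the other hand, is an instance of the uniform template $V(p)=W\setminus R^{l}(x)$ used in the proof of the general correspondence theorem $\Diamond^{i}\Box^{j}\varphi\vDash_{\mathsf{F}}^{g}\Box^{k}\Diamond^{l}\varphi$, so it illustrates the general method even where a shortcut is available.
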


\begin{proof}
$\Leftarrow)$ Given any globally serial frame $\mathfrak{F}=(W,R)$
in $\mathsf{F}$, given any valuation $V$ on $\mathfrak{F}$, suppose
$\mathfrak{F},V\Vdash\Box\varphi$. Given any $x\in W$, since $\mathfrak{F}$
is globally serial, there exist $y,z\in W$ s.t. $Ryz$ and $Rxz$.
By $\mathfrak{F},V\Vdash\Box\varphi$, we have $\mathfrak{F},V,y\Vdash\Box\varphi$.
Hence, $\mathfrak{F},V,z\Vdash\varphi$. By $Rxz$, we have $\mathfrak{F},V,x\Vdash\Diamond\varphi$.
Since $x$ is arbitrary, we have $\mathfrak{F},V\Vdash\Diamond\varphi$,
as required.

$\Rightarrow)$ Suppose $\mathfrak{F}=(W,R)$ in $\mathsf{F}$ is
not globally serial. Then there exists $x\in W$ s.t. for all $y,z\in W$if
$Ryz$ then $\neg Rxz$. Let $V(p)=W-R(x)$. Then $\mathfrak{F},V,x\nVdash\Diamond p$
and hence $\mathfrak{F},V\nVdash\Diamond p$. Given any $y\in W$,
suppose $Ryz$, by the property of $R$, we have $z\notin R(x)$.
Hence, $\mathfrak{F},V,z\Vdash p$. Thus $\mathfrak{F},V,y\Vdash\Box p$.
Since $y$ is arbitrary, we have $\mathfrak{F},V\nVdash\Box p$. Therefore,
$\Box p\nvDash_{\mathsf{F}}^{g}\Diamond p$.
\end{proof}
\begin{fact}
$\varphi\vDash_{\mathsf{F}}^{g}\Box\Diamond\varphi$ iff $\mathsf{F}$
is globally symmetric, i.e. for every $\mathfrak{F}=(W,R)$ in $\mathsf{F}$,
$\forall x\forall y\exists z(Rxy\to Ryz)$.
\end{fact}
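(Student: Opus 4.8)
The plan is to prove both directions by following the template established by the preceding facts in this section, reducing the modal claim to a purely first-order condition on $R$. The key preliminary observation is that once the premise $\varphi$ is forced to hold at \emph{every} world of a model, the consequent simplifies drastically: for any $y$, $\mathfrak{F},V,y\Vdash\Diamond\varphi$ holds iff $y$ has some $R$-successor (the successor automatically satisfies $\varphi$), so $\mathfrak{F},V,w\Vdash\Box\Diamond\varphi$ reduces to the condition that every $R$-successor of $w$ has a successor of its own. I would also note that, although the global symmetry condition $\forall x\forall y\exists z(Rxy\to Ryz)$ is written with an inner $\exists z$, the variable $z$ does not occur in $Rxy$, so it is equivalent to $\forall x\forall y(Rxy\to\exists zRyz)$, i.e.\ every world lying in the range of $R$ has a successor. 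This reformulation is what links the two sides.

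For the direction from right to left, I would assume $\mathfrak{F}=(W,R)$ in $\mathsf{F}$ is globally symmetric, fix a valuation $V$ with $\mathfrak{F},V\Vdash\varphi$, and show $\mathfrak{F},V\Vdash\Box\Diamond\varphi$. Taking an arbitrary $x\in W$ and an arbitrary $y$ with $Rxy$, global symmetry applied to this pair yields a $z$ with $Rxy\to Ryz$; since $Rxy$ holds, we get $Ryz$. As $\varphi$ is true everywhere, $\mathfrak{F},V,z\Vdash\varphi$, hence $\mathfrak{F},V,y\Vdash\Diamond\varphi$. Since $y$ was an arbitrary successor of $x$, we obtain $\mathfrak{F},V,x\Vdash\Box\Diamond\varphi$, and since $x$ was arbitrary, $\mathfrak{F},V\Vdash\Box\Diamond\varphi$, as required. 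Worlds $x$ with no successors are handled vacuously.

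For the converse I would argue contrapositively. If $\mathfrak{F}=(W,R)$ in $\mathsf{F}$ is not globally symmetric, then negating the condition gives $x,y\in W$ with $Rxy$ and $\forall z\,\neg Ryz$, i.e.\ $y$ is a dead end. Setting $V(p)=W$ makes $\mathfrak{F},V\Vdash p$. Because $y$ has no successor, $\mathfrak{F},V,y\nVdash\Diamond p$, and since $Rxy$ this forces $\mathfrak{F},V,x\nVdash\Box\Diamond p$, whence $\mathfrak{F},V\nVdash\Box\Diamond p$. Thus $p\nvDash_{\mathsf{F}}^{g}\Box\Diamond p$, so $\varphi\vDash_{\mathsf{F}}^{g}\Box\Diamond\varphi$ fails.

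I expect no real obstacle here: the structure mirrors the earlier correspondence facts exactly. The only point requiring care is the correct negation of the first-order condition and the observation that the inner existential quantifier in the definition is inessential; once that is in hand, both directions are short and the counter-model in the forward direction is the simplest possible, namely the constant valuation $V(p)=W$.
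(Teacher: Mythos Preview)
Your proposal is correct and follows essentially the same approach as the paper's proof: both directions use the same argument structure, the same contrapositive for the forward direction, and the same counter-valuation $V(p)=W$. Your added remark that $\forall x\forall y\exists z(Rxy\to Ryz)$ is equivalent to $\forall x\forall y(Rxy\to\exists zRyz)$ is a helpful clarification that the paper leaves implicit when it negates the condition to obtain $R(y)=\emptyset$.
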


\begin{proof}
$\Leftarrow)$ Given any globally symmetric frame $\mathfrak{F}=(W,R)$
in $\mathsf{F}$, given any valuation $V$ on $\mathfrak{F}$, suppose
$\mathfrak{F},V\Vdash\varphi$. Given any $x\in W$, suppose $Rxy$.
Since $\mathfrak{F}$ is globally symmetric, there exists $z\in W$
s.t. $Ryz$. Since $\mathfrak{F},V\Vdash\varphi$, we have $\mathfrak{F},V,z\Vdash\varphi$.
Thus $\mathfrak{F},V,y\Vdash\Diamond\varphi$. Hence, $\mathfrak{F},V,x\Vdash\Box\Diamond\varphi$.
Since $x$ is arbitrary, we have $\mathfrak{F},V\Vdash\Box\Diamond\varphi$,
as required.

$\Rightarrow)$ Suppose $\mathfrak{F}=(W,R)$ in $\mathsf{F}$ is
not globally symmetric. Then there exists $x,y\in W$ s.t. $Rxy$
and $R(y)=\emptyset$. Let $V(p)=W$. Then $\mathfrak{F},V\Vdash p$
and $\mathfrak{F},V,y\nVdash\Diamond p$. Thus $\mathfrak{F},V,x\nVdash\Box\Diamond p$
and hence $\mathfrak{F},V\nVdash\Box\Diamond p$. Therefore, $p\nvDash_{\mathsf{F}}^{g}\Box\Diamond p$.
\end{proof}
\begin{fact}
$\Diamond\Box\varphi\vDash_{\mathsf{F}}^{g}\varphi$ iff $\mathsf{F}$
is globally inverse symmetric, i.e. for every $\mathfrak{F}=(W,R)$
in $\mathsf{F}$, $\forall x\exists y\forall z(Ryz\to Rzx)$.
\end{fact}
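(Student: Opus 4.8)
The plan is to follow the template used for the preceding global-correspondence facts, proving each direction by unwinding the semantics of $\Diamond\Box$ against the quantifier prefix $\forall x\exists y\forall z$. The guiding reading of the property $\forall x\exists y\forall z(Ryz\to Rzx)$ is: for every target world $x$ there is a witness $y$ all of whose successors see $x$.

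For the soundness direction ($\Leftarrow$) I would fix a globally inverse symmetric frame $\mathfrak{F}=(W,R)$ in $\mathsf{F}$, a valuation $V$, and assume $\mathfrak{F},V\Vdash\Diamond\Box\varphi$. To prove $\mathfrak{F},V\Vdash\varphi$, take an arbitrary $x\in W$ and use the property to obtain a $y$ with $\forall z(Ryz\to Rzx)$. From $\mathfrak{F},V,y\Vdash\Diamond\Box\varphi$ I extract some $z$ with $Ryz$ and $\mathfrak{F},V,z\Vdash\Box\varphi$; the property then yields $Rzx$, and $\Box\varphi$ at $z$ gives $\mathfrak{F},V,x\Vdash\varphi$. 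As $x$ is arbitrary, $\mathfrak{F},V\Vdash\varphi$.

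For the completeness direction ($\Rightarrow$) I would argue contrapositively. Suppose $\mathfrak{F}=(W,R)$ in $\mathsf{F}$ is not globally inverse symmetric, so there is a fixed $x$ with $\forall y\exists z(Ryz\land\neg Rzx)$. The natural refuting valuation is $V(p)=W\setminus\{x\}$, under which $\mathfrak{F},V,x\nVdash p$ and hence $\mathfrak{F},V\nVdash p$. It then remains to verify $\mathfrak{F},V\Vdash\Diamond\Box p$: given any $y$, the negated property supplies a successor $z$ of $y$ with $\neg Rzx$, i.e. $x\notin R(z)$, so every successor of $z$ lies in $V(p)$; hence $\mathfrak{F},V,z\Vdash\Box p$ and therefore $\mathfrak{F},V,y\Vdash\Diamond\Box p$. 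Since $y$ is arbitrary, $\mathfrak{F},V\Vdash\Diamond\Box p$, witnessing $\Diamond\Box p\nvDash_{\mathsf{F}}^{g}p$.

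I do not expect any step to be a genuine obstacle, since the argument is routine once the semantics is unwound; the only place that requires care is matching the alternation $\forall x\exists y\forall z$ to the reading of $\Diamond\Box$. Concretely, in the soundness direction one must remember to instantiate the premise at the witness $y$ rather than at $x$, and then transfer truth back to $x$ through $Rzx$, while in the completeness direction one must check that the clause $\neg Rzx$ is precisely what forces $\Box p$ to hold at $z$ under the chosen valuation.
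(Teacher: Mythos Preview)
Your proposal is correct and follows essentially the same approach as the paper's proof: both directions match the paper step for step, including the choice of refuting valuation $V(p)=W\setminus\{x\}$ in the contrapositive direction. If anything, your write-up is slightly more explicit in recording $Ryz$ before invoking the frame property to conclude $Rzx$.
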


\begin{proof}
$\Leftarrow)$ Given any globally inverse symmetric frame $\mathfrak{F}=(W,R)$
in $\mathsf{F}$, given any valuation $V$ on $\mathfrak{F}$, suppose
$\mathfrak{F},V\Vdash\Diamond\Box\varphi$. Given any $x\in W$, by
the property of $R$, there exists $y\in W$ s.t. for all $z\in W$
if $Ryz$ then $Rzx$. Since $\mathfrak{F},V\Vdash\Diamond\Box\varphi$,
we have $\mathfrak{F},V,y\Vdash\Diamond\Box\varphi$. Then there exists
$z\in W$ s.t. $\mathfrak{F},V,z\Vdash\Box\varphi$. By the property
of $R$, we have $Rzx$. Hence, $\mathfrak{F},V,x\Vdash\varphi$.
Since $x$ is arbitrary, we have $\mathfrak{F},V\Vdash\varphi$, as
required.

$\Rightarrow)$ Suppose $\mathfrak{F}=(W,R)$ in $\mathsf{F}$ is
not globally inverse symmetric. Then there exists $x\in W$ s.t. for
all $y\in W$ there exists $z\in W$ s.t. $Ryz$ and $\neg Rzx$.
Let $V(p)=W-\{x\}$. Then $\mathfrak{F},V,x\nVdash p$ and hence $\mathfrak{F},V\nVdash p$.
Given any $y\in W$, by the property of $R$, there exists $z\in W$
s.t. $Ryz$ and $\neg Rzx$. Hence, $\mathfrak{F},V,z\Vdash\Box p$
and $\mathfrak{F},V,y\Vdash\Diamond\Box p$. Since $y$ is arbitrary,
we have $\mathfrak{F},V\Vdash\Diamond\Box p$. Therefore, $\Diamond\Box p\nvDash_{\mathsf{F}}^{g}p$.
\end{proof}
Note that for local consequence, a valid inference often has an equivalent
dual version. For example, $\Box\varphi\vDash\varphi$ iff $\varphi\vDash\Diamond\varphi$.
This equivalence, however, does not hold for global consequence. For
example, though $\Box\varphi\vDash^{g}\Box\Box\varphi$ holds for
any class of frames, its dual $\Diamond\Diamond\varphi\vDash^{g}\Diamond\varphi$
holds only for globally transitive frames. This is a notable contrast
between local and global consequence.

Parallel to a famous general correspondence result for local consequence,
we give a general correspondence result for global consequence, of
which the above facts are all instances.
\begin{thm}
$\Diamond^{i}\Box^{j}\varphi\vDash_{\mathsf{F}}^{g}\Box^{k}\Diamond^{l}\varphi$
iff every frame $\mathfrak{F}=(W,R)$ in $\mathsf{F}$ satisfies the
following condition
\[
\forall w\forall x\exists y\forall z\exists u(R^{k}wx\land R^{i}yz\to R^{l}xu\land R^{j}zu).
\]
\end{thm}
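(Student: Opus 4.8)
The plan is to prove both directions directly from the relational semantics. The soundness direction ($\Leftarrow$) I can carry out for an arbitrary formula $\varphi$; for completeness ($\Rightarrow$) it suffices to refute the inference for a single proposition letter $p$, so I only need one atom there. The entire argument turns on aligning the quantifier prefix $\forall w\forall x\exists y\forall z\exists u$ of the frame condition with the order in which worlds are produced by the two sides of the inference under global truth. Write $R^{l}(x)=\{u\in W\mid R^{l}xu\}$.

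For soundness, assume every frame in $\mathsf{F}$ satisfies the condition and suppose $\mathfrak{F},V\Vdash\Diamond^{i}\Box^{j}\varphi$ globally. To show $\mathfrak{F},V\Vdash\Box^{k}\Diamond^{l}\varphi$, fix any $w$ and any $x$ with $R^{k}wx$; I must exhibit an $l$-successor of $x$ satisfying $\varphi$. The frame condition supplies a $y$ depending only on $w,x$ such that for every $z$ with $R^{i}yz$ there is a $u$ with $R^{l}xu$ and $R^{j}zu$. Now apply the global premise at $y$: it yields some $z$ with $R^{i}yz$ and $\mathfrak{F},V,z\Vdash\Box^{j}\varphi$. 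Feeding this $z$ back into the condition produces $u$ with $R^{l}xu$ and $R^{j}zu$; since $\Box^{j}\varphi$ holds at $z$ and $R^{j}zu$, we get $\mathfrak{F},V,u\Vdash\varphi$, hence $\mathfrak{F},V,x\Vdash\Diamond^{l}\varphi$. As $x$ ranges over all $k$-successors of $w$, this gives $\mathfrak{F},V,w\Vdash\Box^{k}\Diamond^{l}\varphi$, and since $w$ was arbitrary the conclusion holds globally.

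For completeness I argue contrapositively: suppose some $\mathfrak{F}=(W,R)$ in $\mathsf{F}$ violates the condition. Negating the prefix gives
\[
\exists w\,\exists x\Big[R^{k}wx\land\forall y\,\exists z\big(R^{i}yz\land\forall u\,\neg(R^{l}xu\land R^{j}zu)\big)\Big].
\]
Fix such $w,x$ and take the valuation $V(p)=W\setminus R^{l}(x)$, so $p$ is true exactly off the set of $l$-successors of $x$. By construction no $l$-successor of $x$ satisfies $p$, so $\Diamond^{l}p$ fails at $x$, whence $\Box^{k}\Diamond^{l}p$ fails at $w$. It remains to verify the premise holds globally: given any $y$, the negated condition supplies $z$ with $R^{i}yz$ such that every $u$ with $R^{j}zu$ satisfies $\neg R^{l}xu$, i.e. lies in $V(p)$; thus $\Box^{j}p$ holds at $z$ and so $\Diamond^{i}\Box^{j}p$ holds at $y$. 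As $y$ was arbitrary, $\mathfrak{F},V\Vdash\Diamond^{i}\Box^{j}p$, exhibiting a global countermodel and so $\Diamond^{i}\Box^{j}\varphi\nvDash_{\mathsf{F}}^{g}\Box^{k}\Diamond^{l}\varphi$.

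I expect the only genuine obstacle to be the bookkeeping of the quantifier alternation. In the soundness argument one must extract $y$ from the frame condition \emph{before} invoking the global premise (so $y$ may not depend on $z$) and extract $u$ \emph{after} (so $u$ may depend on $z$); this is precisely what the pattern $\exists y\,\forall z\,\exists u$ encodes, and it is the feature that distinguishes this global condition from the purely universal confluence condition attached to the corresponding local inference. All the preceding Facts are then recovered by specializing $(i,j,k,l)$ and noting that $R^{l}(x)$ collapses to $\{x\}$ when $l=0$ and to $R(x)$ when $l=1$, matching the ad hoc valuations used there.
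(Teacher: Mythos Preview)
Your proposal is correct and follows essentially the same approach as the paper: both directions are argued directly from the semantics, with the $\Leftarrow$ direction unwinding the quantifier prefix exactly as you do, and the $\Rightarrow$ direction constructing the countermodel via the valuation $V(p)=W\setminus R^{l}(x)$ after negating the frame condition to obtain the witnesses $w,x$ with $R^{k}wx$ and $R^{l}(x)\cap R^{j}(z)=\emptyset$ for the appropriate $z$. Your explicit remarks on the quantifier bookkeeping (extracting $y$ before and $u$ after invoking the global premise) make the alignment clearer than the paper's presentation, but the argument is the same.
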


\begin{proof}
$\Leftarrow)$ Given any frame $\mathfrak{F}=(W,R)$ in $\mathsf{F}$
that satisfies the above property, given any valuation $V$ on $\mathfrak{F}$,
suppose $\mathfrak{F},V\Vdash\Diamond^{i}\Box^{j}\varphi$. Given
any $w\in W$, suppose $R^{k}wx$. Then by the property of $R$, there
exists $y\in W$ s.t. for all $z\in W$ if $R^{i}yz$ then there exists
$u\in W$ s.t. $R^{l}xu$ and $R^{j}zu$. By $\mathfrak{F},V\Vdash\Diamond^{i}\Box^{j}\varphi$,
we have $\mathfrak{F},V,y\Vdash\Diamond^{i}\Box^{j}\varphi$. Then
it follows that there exists $z\in W$ s.t. $R^{i}yz$ and $\mathfrak{F},V,z\Vdash\Box^{j}\varphi$.
By the property of $R$, there exists $u\in W$ s.t. $R^{l}xu$ and
$R^{j}zu$. Thus $\mathfrak{F},V,u\Vdash\varphi$ and $\mathfrak{F},V,x\Vdash\Diamond^{l}\varphi$.
Hence, $\mathfrak{F},V,w\Vdash\Box^{k}\Diamond^{l}\varphi$. Since
$w$ is arbitrary, we have $\mathfrak{F},V\Vdash\Box^{k}\Diamond^{l}\varphi$,
as required.

$\Rightarrow)$ Suppose $\mathfrak{F}=(W,R)$ in $\mathsf{F}$ does
not satisfy the above property. Then there exists $w,x\in W$ s.t.
$R^{k}wx$ and for all $y\in W$ there exits $z\in W$ s.t. $R^{i}yz$
and $R^{l}(x)\cap R^{j}(z)=\emptyset$. Let $V(p)=W-R^{l}(x)$. Then
$\mathfrak{F},V,x\Vdash\Box^{l}\neg p$ and $\mathfrak{F},V,w\Vdash\Diamond^{k}\Box^{l}\neg p$.
Hence, $\mathfrak{F},V,w\nVdash\Box^{k}\Diamond^{l}p$ and $\mathfrak{F},V\nVdash\Box^{k}\Diamond^{l}p$.
Given any $y\in W$, by the property of $R$, there exists $z\in W$
s.t. $R^{i}yz$ and $R^{l}(x)\cap R^{j}(z)=\emptyset$. Thus $\mathfrak{F},V,z\Vdash\Box^{j}p$
and $\mathfrak{F},V,y\Vdash\Diamond^{i}\Box^{j}p$. Since $y$ is
arbitrary, we have $\mathfrak{F},V\Vdash\Diamond^{i}\Box^{j}p$. Therefore,
$\Diamond^{i}\Box^{j}p\nvDash_{\mathsf{F}}^{g}\Box^{k}\Diamond^{l}p$.
\end{proof}

\section{\label{sec:Applications}Applications}

\subsection{Informational Consequence}

In \cite{Yalcin2007} Yalcin advocated a non-classical consequence
relation, called informational consequence. Yalcin noticed that if
$\Diamond$ denotes epistemic `might' or `may', then saying both $\varphi$
and $\Diamond\neg\varphi$ seems inconsistent, which is not reflected
in standard modal logic. So he proposed domain semantics and informational
consequence (details below) to formalize this phenomenon. We will
soon find that informational consequence is intimately related to
global consequence.
\begin{defn}
A \emph{domain model} is a pair $\mathfrak{D}=(W,V)$, where $W\ne\emptyset$
and $V:PV\to\wp(W)$ is a valuation on $W$. Given a domain model
$\mathfrak{D}=(W,V)$, that $\varphi$ is true at $(w,i)\in W\times\wp(W)$
in $\mathfrak{D}$, denoted $\mathfrak{D},w,i\Vdash\varphi$, is inductively
defined as follows, where $\mathfrak{D},i\Vdash\varphi$ means for
all $w\in i$, $\mathfrak{D},w,i\Vdash\varphi$:
\begin{itemize}
\item $\mathfrak{D},w,i\Vdash p$ iff $w\in V(p)$
\item $\mathfrak{D},w,i\Vdash\neg\varphi$ iff $\mathfrak{D},w,i\nVdash\varphi$
\item $\mathfrak{D},w,i\Vdash\varphi\land\psi$ iff $\mathfrak{D},w,i\Vdash\varphi$
and $\mathfrak{D},w,i\Vdash\psi$
\item $\mathfrak{D},w,i\Vdash\Box\varphi$ iff $\mathfrak{D},i\Vdash\varphi$
\end{itemize}
\end{defn}

\begin{defn}
[Informational consequence]The inference from $\Gamma$ to $\varphi$
is \emph{informationally valid}, denoted $\Gamma\vDash_{I}\varphi$,
if for all domain models $\mathfrak{D}=(W,V)$ and $i\subseteq W$,
$\mathfrak{D},i\Vdash\Gamma$ implies $\mathfrak{D},i\Vdash\varphi$.
\end{defn}

It can be easily shown that under domain semantics, $\varphi\land\Diamond\neg\varphi\vDash_{I}\bot$.
But this can also be achieved by global consequence for free.
\begin{fact}
\label{fact:epistemic-contradiction}$\varphi\land\Diamond\neg\varphi\vDash_{\mathsf{F}}^{g}\bot$
for any class of frames $\mathsf{F}$.
\end{fact}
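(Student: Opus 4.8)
The plan is to observe that the inference holds vacuously: I will show that the single premise $\varphi\land\Diamond\neg\varphi$ is \emph{never} globally satisfiable, so that the conditional defining global consequence is satisfied trivially (its antecedent fails in every model). Recall that $\varphi\land\Diamond\neg\varphi\vDash_{\mathsf{F}}^{g}\bot$ requires that in every model $\mathfrak{M}$ based on a frame in $\mathsf{F}$, if $\varphi\land\Diamond\neg\varphi$ is true at all worlds then $\bot$ is true at all worlds. Since $\bot$ is true at no world, this reduces exactly to showing that no model based on a frame in $\mathsf{F}$ makes $\varphi\land\Diamond\neg\varphi$ true everywhere.

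First I would fix an arbitrary model $\mathfrak{M}=(W,R,V)$ whose frame lies in $\mathsf{F}$ and assume, toward a contradiction, that $\mathfrak{M}\Vdash\varphi\land\Diamond\neg\varphi$, i.e. both conjuncts hold at every world. Since $W\neq\emptyset$, I may choose some $w\in W$. From $\mathfrak{M},w\Vdash\Diamond\neg\varphi$ I extract a successor $u$ with $Rwu$ and $\mathfrak{M},u\nVdash\varphi$. But global truth of the first conjunct yields $\mathfrak{M},u\Vdash\varphi$, a contradiction (and note this works even when $u=w$). Hence no model based on a frame in $\mathsf{F}$ globally satisfies the premise, so the defining implication is vacuously true for every such model, giving $\varphi\land\Diamond\neg\varphi\vDash_{\mathsf{F}}^{g}\bot$ for arbitrary $\mathsf{F}$.

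The argument is essentially immediate, so there is no substantial obstacle; the only point needing care is the nonemptiness of $W$, which is precisely what licenses picking the world $w$ and thereby producing the offending successor, and which is the standard convention for Kripke models. It is worth flagging the contrast with local consequence that makes this fact interesting: locally $\varphi\land\Diamond\neg\varphi$ is perfectly satisfiable at a world whose successor falsifies $\varphi$, so the collapse to $\bot$ is a genuinely global phenomenon, captured for free by $\vDash^{g}$ rather than by $\vDash$.
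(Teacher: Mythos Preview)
Your proof is correct and follows essentially the same approach as the paper: both argue that global truth of $\varphi\land\Diamond\neg\varphi$ is impossible because global truth of $\varphi$ forces $\varphi$ at every successor, contradicting the witness for $\Diamond\neg\varphi$. The paper phrases this via the intermediate step $\mathfrak{F},V\Vdash\varphi\Rightarrow\mathfrak{F},V\Vdash\Box\varphi$ (which then clashes with $\mathfrak{F},V\Vdash\Diamond\neg\varphi$), while you unpack that step explicitly at a chosen world, but the underlying argument is the same.
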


\begin{proof}
Suppose $\mathfrak{F},V\Vdash\varphi\land\Diamond\neg\varphi$. Then
$\mathfrak{F},V\Vdash\varphi$ and $\mathfrak{F},V\Vdash\Diamond\neg\varphi$.
The former implies that $\mathfrak{F},V\Vdash\Box\varphi$, which
contradicts the latter.
\end{proof}
In \cite{Bledin2014}, Bledin convincingly argued that the rule of
reduction to absurdity and constructive dilemma are not generally
valid for natural language arguments. Rather, their correct forms
should add some modal operators. More precisely, Bledin suggests that
\begin{itemize}
\item $\Gamma,\varphi\vDash\bot\not\Rightarrow\Gamma\vDash\neg\varphi$,
instead we have $\Gamma,\varphi\vDash\bot\Rightarrow\Gamma\vDash\Diamond\neg\varphi$;
\item $\Gamma,\alpha\vDash\varphi,\Gamma,\beta\vDash\psi\not\Rightarrow\Gamma,\alpha\lor\beta\vDash\varphi\lor\psi$,
instead we have $\Gamma,\alpha\vDash\varphi,\Gamma,\beta\vDash\psi\Rightarrow\Gamma,\Box\alpha\lor\Box\beta\vDash\Box\varphi\lor\Box\psi$.
\end{itemize}
Bledin argued that informational consequence can perfectly predict
the above desiderata. But global consequence can do the same job as
well.
\begin{fact}
\label{fact:reduction-to-absurdity}$\Gamma,\varphi\vDash_{\mathsf{F}}^{g}\bot\not\Rightarrow\Gamma\vDash_{\mathsf{F}}^{g}\neg\varphi$,
instead for any reflexive and transitive $\mathsf{F}$, we have $\Gamma,\varphi\vDash_{\mathsf{F}}^{g}\bot\Rightarrow\Gamma\vDash_{\mathsf{F}}^{g}\Diamond\neg\varphi$.
\end{fact}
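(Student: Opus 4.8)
The statement splits into a non-entailment and a positive entailment, and I would treat them separately.

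For the first half, the plan is to recycle the epistemic contradiction of Fact~\ref{fact:epistemic-contradiction}. Take $\Gamma=\emptyset$ and $\varphi=p\land\Diamond\neg p$. Fact~\ref{fact:epistemic-contradiction} (with $p$ in place of its $\varphi$) already gives $\varphi\vDash_{\mathsf{F}}^{g}\bot$ for every class $\mathsf{F}$, so the premise side holds. To refute $\Gamma\vDash_{\mathsf{F}}^{g}\neg\varphi$, i.e.\ $\vDash_{\mathsf{F}}^{g}\neg\varphi$, I would exhibit a single model in which $\varphi$ is true somewhere: a two-point model with $Rwu$, $p$ true at $w$ and false at $u$, makes $\varphi$ true at $w$, so $\neg\varphi$ is not globally true. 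Choosing the underlying frame to be reflexive and transitive (for instance the $\mathsf{S4}$-frame on $\{w,u\}$ with the reflexive loops added) even places the counterexample inside the same reflexive--transitive setting as the positive half, sharpening the contrast. None of this is hard.

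For the entailment $\Gamma,\varphi\vDash_{\mathsf{F}}^{g}\bot\Rightarrow\Gamma\vDash_{\mathsf{F}}^{g}\Diamond\neg\varphi$ over reflexive transitive $\mathsf{F}$, I would argue semantically via point-generated submodels. First unpack the hypothesis: since $\bot$ is never globally true, $\Gamma,\varphi\vDash_{\mathsf{F}}^{g}\bot$ says exactly that no model based on $\mathsf{F}$ globally satisfies $\Gamma\cup\{\varphi\}$. Now fix any model $\mathfrak{M}=(W,R,V)$ on a frame in $\mathsf{F}$ with $\mathfrak{M}\Vdash\Gamma$, and fix an arbitrary $w\in W$; the goal is $\mathfrak{M},w\Vdash\Diamond\neg\varphi$, i.e.\ that some $R$-successor of $w$ falsifies $\varphi$. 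Let $\mathfrak{M}_{w}$ be the submodel generated by $w$. The crucial use of the frame conditions is that in a reflexive transitive frame the generated domain is exactly $R(w)$: reflexivity puts $w\in R(w)$ and transitivity makes $R(w)$ closed under $R$, so every world of $\mathfrak{M}_{w}$ is a direct $R$-successor of $w$. Truth is preserved between $\mathfrak{M}$ and $\mathfrak{M}_{w}$, so from $\mathfrak{M}\Vdash\Gamma$ I obtain $\mathfrak{M}_{w}\Vdash\Gamma$, and $\mathfrak{M}_{w}$ is again reflexive and transitive. Applying the unpacked hypothesis to $\mathfrak{M}_{w}$ forces $\mathfrak{M}_{w}\nVdash\varphi$, so there is some $u\in R(w)$ with $\mathfrak{M}_{w},u\Vdash\neg\varphi$; preservation gives $\mathfrak{M},u\Vdash\neg\varphi$, and since $Rwu$ this yields $\mathfrak{M},w\Vdash\Diamond\neg\varphi$. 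As $w$ was arbitrary, $\mathfrak{M}\Vdash\Diamond\neg\varphi$.

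The single step that requires care is the claim that $\mathfrak{M}_{w}$ is still a model based on $\mathsf{F}$, which is precisely closure under point generated subframes. Reflexivity and transitivity are themselves preserved under generated submodels, so this is automatic when $\mathsf{F}$ is the class of all reflexive transitive frames, or any of the standard classes such as $\mathsf{S4}$ or $\mathsf{S5}$; for a genuinely arbitrary reflexive transitive $\mathsf{F}$ I would read closure under point generated subframes into the hypothesis, in keeping with the earlier theorems of this section. Everything else—the preservation of modal truth under generated submodels, and the observation that reflexivity together with transitivity collapses reachability from $w$ to a single $R$-step—is routine, so I expect the only real friction to be flagging the closure assumption explicitly.
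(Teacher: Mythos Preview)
Your proof is correct and, for the positive entailment, essentially identical to the paper's: both pass to the point-generated submodel $\mathfrak{M}_{w}$ and use reflexivity plus transitivity to identify its domain with $R(w)$; you argue directly while the paper argues by contrapositive, but the content is the same. You are in fact more explicit than the paper about the closure of $\mathsf{F}$ under point-generated subframes, which the paper's proof uses silently.

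For the non-entailment you take a slightly different route. The paper sets $\Gamma=\{\Diamond\neg\varphi\}$ and keeps $\varphi$ atomic, then appeals to Fact~\ref{fact:globally-isolated} to conclude that $\Diamond\neg\varphi\vDash_{\mathsf{F}}^{g}\neg\varphi$ fails outside globally isolated classes. You instead take $\Gamma=\emptyset$ and fold the contradiction into $\varphi=p\land\Diamond\neg p$, then exhibit a concrete two-point model. Both are valid; your version is more self-contained (no appeal to the correspondence Fact~\ref{fact:globally-isolated}), while the paper's version ties the failure directly to a frame condition and reuses earlier material.
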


\begin{proof}
By Fact~\ref{fact:epistemic-contradiction}, we have $\Diamond\neg\varphi,\varphi\vDash_{\mathsf{F}}^{g}\bot$
for any class of frames $\mathsf{F}$. But by Fact~\ref{fact:globally-isolated},
$\Diamond\neg\varphi\vDash_{\mathsf{F}}^{g}\neg\varphi$ holds only
for $\mathsf{F}$ that is globally isolated. For the remaining part,
suppose $\Gamma\nvDash_{\mathsf{F}}^{g}\Diamond\neg\varphi$. Then
there exists a model $\mathfrak{M}$ with its underlying frame in
$\mathsf{F}$ such that $\mathfrak{M}\Vdash\Gamma$ and $\mathfrak{M}\nVdash\Diamond\neg\varphi$.
By the latter there exists $w$ in $\mathfrak{M}$ such that $\mathfrak{M},w\nVdash\Diamond\neg\varphi$,
i.e. $\mathfrak{M},w\Vdash\Box\varphi$. Let $\mathfrak{M}_{w}$ be
the subframe of $\mathfrak{M}$ generated by $w$. Then $\mathfrak{M}_{w},w\Vdash\Box\varphi$.
Since $\mathfrak{M}_{w}$ is reflexive and transitive, we have $\mathfrak{M}_{w}\Vdash\varphi$.
Thus $\Gamma,\varphi\nvDash_{\mathsf{F}}^{g}\bot$.
\end{proof}
\begin{fact}
\label{fact:constructive-dilemma}$\Gamma,\alpha\vDash_{\mathsf{F}}^{g}\varphi,\Gamma,\beta\vDash_{\mathsf{F}}^{g}\psi\not\Rightarrow\Gamma,\alpha\lor\beta\vDash_{\mathsf{F}}^{g}\varphi\lor\psi$,
instead for any reflexive and transitive $\mathsf{F}$, we have $\Gamma,\alpha\vDash_{\mathsf{F}}^{g}\varphi,\Gamma,\beta\vDash_{\mathsf{F}}^{g}\psi\Rightarrow\Gamma,\Box\alpha\lor\Box\beta\vDash_{\mathsf{F}}^{g}\Box\varphi\lor\Box\psi$.
\end{fact}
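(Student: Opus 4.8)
The plan is to handle the two halves of the statement separately, in exactly the spirit of Fact~\ref{fact:reduction-to-absurdity}. For the non-implication I would exhibit one instance in which both premises hold globally but the unmodified conclusion fails; for the positive claim I would argue world-by-world, reducing each world to a point-generated submodel on which one of the original premises becomes directly applicable.

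For the counterexample, take $\Gamma=\emptyset$, $\alpha=p$, $\varphi=\Box p$, $\beta=\neg p$, $\psi=\Box\neg p$. By Fact~\ref{fact:phi2box} we have $p\vDash_{\mathsf{F}}^{g}\Box p$ and $\neg p\vDash_{\mathsf{F}}^{g}\Box\neg p$ for every class $\mathsf{F}$, so both premises hold. The putative conclusion reads $p\lor\neg p\vDash_{\mathsf{F}}^{g}\Box p\lor\Box\neg p$. Since $p\lor\neg p$ is a tautology it is globally true in every model, so it suffices to produce one model of the required kind in which $\Box p\lor\Box\neg p$ fails somewhere. A reflexive transitive frame already suffices: take $W=\{w,u\}$ with $R=\{(w,w),(w,u),(u,u)\}$ and $V(p)=\{u\}$. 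Then $w$ sees the $\neg p$-world $w$ and the $p$-world $u$, so neither $\Box p$ nor $\Box\neg p$ holds at $w$, whence $\mathfrak{M}\nVdash\Box p\lor\Box\neg p$. This refutes the unmodified rule even over reflexive transitive classes, sharpening the contrast with the positive part.

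For the positive direction I would assume $\Gamma,\alpha\vDash_{\mathsf{F}}^{g}\varphi$ and $\Gamma,\beta\vDash_{\mathsf{F}}^{g}\psi$ with $\mathsf{F}$ reflexive and transitive, and let $\mathfrak{M}=(W,R,V)$ have its frame in $\mathsf{F}$ with $\mathfrak{M}\Vdash\Gamma$ and $\mathfrak{M}\Vdash\Box\alpha\lor\Box\beta$. Fixing an arbitrary $w\in W$, we have $\mathfrak{M},w\Vdash\Box\alpha$ or $\mathfrak{M},w\Vdash\Box\beta$, and by symmetry I treat the first case. Passing to the submodel $\mathfrak{M}_{w}$ generated by $w$, transitivity confines its domain to $R(w)$ while reflexivity places $w$ itself in $R(w)$; hence $\Box\alpha$ at $w$ forces $\mathfrak{M}_{w}\Vdash\alpha$, and $\mathfrak{M}_{w}\Vdash\Gamma$ is inherited from $\mathfrak{M}$ by the invariance of truth under generated submodels. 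Applying the premise $\Gamma,\alpha\vDash_{\mathsf{F}}^{g}\varphi$ to $\mathfrak{M}_{w}$ yields $\mathfrak{M}_{w}\Vdash\varphi$, so $\varphi$ holds at every world of $R(w)$ and therefore $\mathfrak{M},w\Vdash\Box\varphi$. The $\Box\beta$-case gives $\mathfrak{M},w\Vdash\Box\psi$ symmetrically, so in either case $\mathfrak{M},w\Vdash\Box\varphi\lor\Box\psi$; since $w$ was arbitrary, $\mathfrak{M}\Vdash\Box\varphi\lor\Box\psi$.

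The hard part is the same issue already met in Fact~\ref{fact:reduction-to-absurdity}: applying a premise to the generated submodel $\mathfrak{M}_{w}$ requires the frame underlying $\mathfrak{M}_{w}$ to again lie in $\mathsf{F}$. This is automatic whenever $\mathsf{F}$ is closed under point-generated subframes, as all the standard reflexive transitive classes (such as $\mathsf{S4}$ and $\mathsf{S5}$) are, and I would state the result under that understanding, matching the treatment of the preceding fact. I would also flag where each frame condition is used: transitivity is what keeps the generated submodel inside $R(w)$, while reflexivity ensures $w\in R(w)$ so that $\Box\alpha$ at $w$ delivers $\alpha$ at every world of $\mathfrak{M}_{w}$, the root included; the concluding passage from $\varphi$ on $R(w)$ back to $\Box\varphi$ at $w$ is then immediate from the truth clause for $\Box$.
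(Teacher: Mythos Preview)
Your proof is correct and follows essentially the same approach as the paper's: the same instantiation for the counterexample (via Fact~\ref{fact:phi2box}) and the same world-by-world argument through point-generated submodels for the positive part. Your counterexample frame is reflexive and transitive rather than the paper's $(\{1,2\},\{(1,2),(2,1)\})$, which is a mild sharpening, and your explicit flag that applying the global premise to $\mathfrak{M}_{w}$ requires $\mathsf{F}$ to be closed under point-generated subframes makes precise an assumption the paper leaves tacit.
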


\begin{proof}
By Fact~\ref{fact:phi2box}, we have $p\vDash_{\mathsf{F}}^{g}\Box p$
and $\neg p\vDash_{\mathsf{F}}^{g}\Box\neg p$ for any class of frames
$\mathsf{F}$. But it is easily verified that $p\lor\neg p\nvDash_{\{\mathfrak{F}\}}^{g}\Box p\lor\Box\neg p$,
where $\mathfrak{F}=(\{1,2\},\{(1,2),(2,1)\})$. For the remaining
part, suppose $\Gamma,\alpha\vDash_{\mathsf{F}}^{g}\varphi$ and $\Gamma,\beta\vDash_{\mathsf{F}}^{g}\psi$.
Let $\mathfrak{M}$ be any model with its underlying frame in $\mathsf{F}$.
Suppose $\mathfrak{M}\Vdash\Gamma$ and $\mathfrak{M}\Vdash\Box\alpha\lor\Box\beta$.
Given any $w$ in $\mathfrak{M}$, we have $\mathfrak{M},w\Vdash\Box\alpha\lor\Box\beta$.
Then either $\mathfrak{M},w\Vdash\Box\alpha$ or $\mathfrak{M},w\Vdash\Box\beta$.
Since $\mathfrak{M}$ is reflexive and transitive, if the former holds,
then $\mathfrak{M}_{w}\Vdash\alpha$. By $\Gamma,\alpha\vDash_{\mathsf{F}}^{g}\varphi$,
we have $\mathfrak{M}_{w}\Vdash\varphi$. Thus $\mathfrak{M},w\Vdash\Box\varphi$.
If the latter holds, then $\mathfrak{M}_{w}\Vdash\beta$. By $\Gamma,\beta\vDash_{\mathsf{F}}^{g}\psi$,
we have $\mathfrak{M}_{w}\Vdash\psi$. Thus $\mathfrak{M},w\Vdash\Box\psi$.
Hence, $\mathfrak{M},w\Vdash\Box\varphi\lor\Box\psi$. Since $w$
is arbitrary, we have $\mathfrak{M}\Vdash\Box\varphi\lor\Box\psi$,
as required.
\end{proof}
Indeed, Schulz proved the following general result.
\begin{thm}
[\cite{Schulz2010}, Theorem 2.1]\label{thm:schulz}For any $\Gamma\cup\{\varphi\}\subseteq\mathcal{L}_{\Box}$,
$\Gamma\vDash_{I}\varphi$ iff $\Box\Gamma\vDash_{\mathsf{S5}}\Box\varphi$.
\end{thm}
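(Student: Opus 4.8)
The plan is to build a truth-preserving bridge between domain models evaluated at an information state and $\mathsf{S5}$ models evaluated inside a single cluster, and then read off both directions of the equivalence from it. First I would prove the following bridge lemma by induction on the structure of $\psi\in\mathcal{L}_{\Box}$: given a domain model $\mathfrak{D}=(W,V)$ and a nonempty $i\subseteq W$, let $\mathfrak{M}_{i}=(i,i\times i,V_{i})$ be the $\mathsf{S5}$ model whose frame is the total relation on $i$ and whose valuation is $V_{i}(p)=V(p)\cap i$; then for every $w\in i$, $\mathfrak{D},w,i\Vdash\psi$ iff $\mathfrak{M}_{i},w\Vdash\psi$. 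The atomic and Boolean cases are immediate because atoms are rigid (depending only on $w$) in both semantics. The only case requiring attention is $\Box$: since $R=i\times i$ is total, $R(w)=i$ for every $w\in i$, so $\mathfrak{M}_{i},w\Vdash\Box\psi$ holds iff $\mathfrak{M}_{i},u\Vdash\psi$ for all $u\in i$, which by the induction hypothesis matches the domain clause. Taking the conjunction over $w\in i$ yields $\mathfrak{D},i\Vdash\psi$ iff $\mathfrak{M}_{i}\Vdash\psi$; and since $\Box$ acts as the universal modality over the single cluster, $\mathfrak{M}_{i}\Vdash\psi$ is equivalent to $\mathfrak{M}_{i},w\Vdash\Box\psi$ for any $w\in i$.

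For the direction from right to left, assume $\Box\Gamma\vDash_{\mathsf{S5}}\Box\varphi$ and take any domain model $\mathfrak{D}=(W,V)$ and $i\subseteq W$ with $\mathfrak{D},i\Vdash\Gamma$; the case $i=\emptyset$ is vacuous, so assume $i\neq\emptyset$ and form $\mathfrak{M}_{i}$. The bridge lemma turns $\mathfrak{D},i\Vdash\gamma$ into $\mathfrak{M}_{i}\Vdash\gamma$, hence $\mathfrak{M}_{i},w\Vdash\Box\gamma$ for a fixed $w\in i$ and every $\gamma\in\Gamma$, so $\mathfrak{M}_{i},w\Vdash\Box\Gamma$. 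Since $\mathfrak{M}_{i}$ is an $\mathsf{S5}$ model, the assumption gives $\mathfrak{M}_{i},w\Vdash\Box\varphi$, i.e. $\mathfrak{M}_{i}\Vdash\varphi$, which the bridge lemma sends back to $\mathfrak{D},i\Vdash\varphi$.

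For the direction from left to right, assume $\Gamma\vDash_{I}\varphi$ and take an $\mathsf{S5}$ model $\mathfrak{M}=(W,R,V)$ with $\mathfrak{M},w\Vdash\Box\Gamma$. Here the key observation is that the equivalence class $C=R(w)$ is nonempty (as $w\in C$ by reflexivity) and, being the point-generated submodel of an $\mathsf{S5}$ model, is exactly a total cluster: $R\restriction C=C\times C$ and modal truth at worlds of $C$ is invariant, so $\mathfrak{M},v\Vdash\psi$ iff $\mathfrak{M}_{C},v\Vdash\psi$ for all $v\in C$. Reading $\mathfrak{M},w\Vdash\Box\Gamma$ as ``$\gamma$ is true throughout $C$'' and applying the bridge lemma to the domain model $(W,V)$ at state $C$ gives $\mathfrak{D},C\Vdash\Gamma$; then $\Gamma\vDash_{I}\varphi$ yields $\mathfrak{D},C\Vdash\varphi$, i.e. $\varphi$ is true at every world of $C$ in $\mathfrak{M}$, which is precisely $\mathfrak{M},w\Vdash\Box\varphi$.

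I expect the main obstacle to be stating and proving the bridge lemma cleanly---in particular lining up the quantification domain (the information state $i$ versus the equivalence class $R(w)$) in the $\Box$ case, and confirming that the rigid atomic valuations make the two semantics agree---while the two applications of the lemma are then routine.
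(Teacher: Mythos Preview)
Your proof is correct. The paper does not actually supply its own proof of this theorem: it is quoted from \cite{Schulz2010} and then combined with Proposition~\ref{prop:s5-global} to obtain Corollary~\ref{cor:s5-i}. That said, your bridge-lemma strategy is exactly the method the paper itself uses for the analogous result about update semantics (Theorem~\ref{thm:update-2-S5}). There the paper proves Lemmas~\ref{lem:relational2update} and~\ref{lem:update2relational}, which identify a universal $\mathsf{S5}$ model $\mathfrak{M}^{s}=(s,s\times s,V^{s})$ with an information state $s$ in an update model and show $s[\psi]_{\mathfrak{U}}=\llbracket\psi\rrbracket^{\mathfrak{M}^{s}}$ by induction on $\psi$; both directions of the equivalence are then read off. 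Your $\mathfrak{M}_{i}=(i,i\times i,V_{i})$ is the same construction, your inductive lemma ``$\mathfrak{D},w,i\Vdash\psi$ iff $\mathfrak{M}_{i},w\Vdash\psi$'' is the domain-semantics counterpart of Lemma~\ref{lem:update2relational}, and your use of the point-generated cluster $C=R(w)$ in an $\mathsf{S5}$ model mirrors the paper's passage to $\mathfrak{M}_{w}$ in the proof of Theorem~\ref{thm:update-2-S5}. So although there is no in-paper proof to compare against for this particular statement, your argument is precisely in the paper's own style for the parallel result.
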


By Proposition~\ref{prop:s5-global}, the following corollary easily
follows.
\begin{cor}
\label{cor:s5-i}For any $\Gamma\cup\{\varphi\}\subseteq\mathcal{L}_{\Box}$,
$\Gamma\vDash_{I}\varphi$ iff $\Gamma\vDash_{\mathsf{S5}}^{g}\varphi$.
\end{cor}

Compared to Theorem~\ref{thm:schulz}, it appears that Corollary~\ref{cor:s5-i}
better characterizes informational consequence, since the former uses
local consequence and by Proposition~\ref{prop:s5-global}, with
local consequence not only $\mathsf{S5}$ can be used, but also $\mathsf{K45}$
and $\mathsf{KD45}$ are attainable. But with global consequence,
such multiple correspondence disappears. On the other hand, Facts~\ref{fact:reduction-to-absurdity}
and \ref{fact:constructive-dilemma} show that if we just need to
satisfy the desiderata above proposed by Yalcin and Bledin, it is
possible to consider only $\mathsf{S4}$ instead of $\mathsf{S5}$,
as far as global consequence is used.

\subsection{Update Consequence}

Update semantics proposed by Veltman in \cite{Veltman1996} is also
a popular semantics for natural languages. In update semantics, two
conjunctions can be defined. One is static (as in \cite{Veltman1996}),
the other dynamic (as in \cite{Willer2013,Willer2015}). To differentiate
them, we consider the following language.

Given the set of propositional variables $PV$, the language $\mathcal{L}_{\Box;}$
is defined as follows:
\[
\mathcal{L}_{\Box;}\ni\varphi::=p\mid\neg\varphi\mid(\varphi\land\varphi)\mid(\varphi;\varphi)\mid\Box\varphi,
\]
where $p\in PV$, $\land$ is the static conjunction and $;$ the
dynamic one. We stipulate that both $\land$ and $;$ are left associated,
so that $\varphi_{1}\land\varphi_{2}\land\varphi_{3}$ abbreviates
$(\varphi_{1}\land\varphi_{2})\land\varphi_{3}$, and $\varphi_{1};\varphi_{2};\varphi_{3}$
abbreviates $(\varphi_{1};\varphi_{2});\varphi_{3}$, etc.
\begin{defn}
An \emph{update model} is a pair $\mathfrak{U}=(W,V)$, where $W\ne\emptyset$
and $V:PV\to\wp(W)$ is a valuation on $W$. Given an update model
$\mathfrak{U}=(W,V)$, define the update function $\cdot[\cdot]_{\mathfrak{U}}:\wp(W)\times\mathcal{L}_{\Box;}\to\wp(W)$
on $\mathfrak{U}$ as follows.
\begin{itemize}
\item $s[p]_{\mathfrak{U}}=s\cap V(p)$
\item $s[\neg\varphi]_{\mathfrak{U}}=s-s[\varphi]_{\mathfrak{U}}$
\item $s[\varphi\land\psi]_{\mathfrak{U}}=s[\varphi]_{\mathfrak{U}}\cap s[\psi]_{\mathfrak{U}}$
\item $s[\varphi;\psi]_{\mathfrak{U}}=s[\varphi]_{\mathfrak{U}}[\psi]_{\mathfrak{U}}$
\item $s[\Box\varphi]_{\mathfrak{U}}=\{w\in s\mid s[\varphi]_{\mathfrak{U}}=s\}$
\end{itemize}
We say that $s$ supports $\varphi$ in $\mathfrak{U}$, denoted $\mathfrak{U},s\Vdash_{U}\varphi$,
if $s[\varphi]_{\mathfrak{U}}=s$. We write $\mathfrak{U},s\Vdash_{U}\Gamma$
iff $\mathfrak{U},s\Vdash_{U}\varphi$ for all $\varphi\in\Gamma$.
\end{defn}

It is easily seen that for any $\mathfrak{U}$ and $s$ in $\mathfrak{U}$,
for any $\varphi\in\mathcal{L}_{\Box;}$, $s[\varphi]_{\mathfrak{U}}\subseteq s$.
\begin{defn}
[Update consequence]We say that $\varphi$ is an \emph{update consequence}
of $\Gamma$, denoted $\Gamma\vDash_{U}\varphi$, if for all update
models $\mathfrak{U}=(W,V)$, for all information states $s\subseteq W$,
$\mathfrak{U},s\Vdash_{U}\Gamma$ implies $\mathfrak{U},s\Vdash_{U}\varphi$.
We say that $\varphi$ is a \emph{sequential update consequence} of
the sequence $\gamma_{1},\ldots,\gamma_{n}$, denoted $\gamma_{1},\ldots,\gamma_{n}\vDash_{SU}\varphi$,
if for all update models $\mathfrak{U}=(W,V)$, for all information
states $s\subseteq W$, $s[\gamma_{1}]_{\mathfrak{U}}\cdots[\gamma_{n}]_{\mathfrak{U}}\Vdash_{U}\varphi$.
\end{defn}

Sometimes, another operator $\rightarrowtriangle$ for indicative
conditionals is also defined in update semantics (e.g. \cite{Gillies2004}),
whose update function is given below.
\begin{itemize}
\item $s[\varphi\rightarrowtriangle\psi]_{\mathfrak{U}}=\{w\in s\mid s[\varphi]_{\mathfrak{U}}[\psi]_{\mathfrak{U}}=s[\varphi]_{\mathfrak{U}}\}$
\end{itemize}
It follows that $\rightarrowtriangle$ can be defined by $\Box$ and
$;$ as the following fact shows.
\begin{fact}
For all $\mathfrak{U}$ and $s$ in $\mathfrak{U}$, $s[\varphi\rightarrowtriangle\psi]_{\mathfrak{U}}=s[\Box\neg(\varphi;\neg\psi)]_{\mathfrak{U}}$.
\end{fact}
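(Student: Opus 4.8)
The plan is to compute $s[\Box\neg(\varphi;\neg\psi)]_{\mathfrak{U}}$ directly by unfolding the clauses of the update function, and then to recognize that the resulting membership condition coincides, for every $w\in s$, with the one defining $s[\varphi\rightarrowtriangle\psi]_{\mathfrak{U}}$. First I would peel off the outer operators. By the clause for $\Box$,
\[
s[\Box\neg(\varphi;\neg\psi)]_{\mathfrak{U}}=\{w\in s\mid s[\neg(\varphi;\neg\psi)]_{\mathfrak{U}}=s\}.
\]
Then, using the clauses for $\neg$ and $;$,
\[
s[\neg(\varphi;\neg\psi)]_{\mathfrak{U}}=s-s[\varphi]_{\mathfrak{U}}[\neg\psi]_{\mathfrak{U}}=s-\left(s[\varphi]_{\mathfrak{U}}-s[\varphi]_{\mathfrak{U}}[\psi]_{\mathfrak{U}}\right),
\]
where the last equality applies the $\neg$-clause once more, this time to the state $s[\varphi]_{\mathfrak{U}}$.

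The crux is to show that the condition $s[\neg(\varphi;\neg\psi)]_{\mathfrak{U}}=s$ is equivalent to $s[\varphi]_{\mathfrak{U}}[\psi]_{\mathfrak{U}}=s[\varphi]_{\mathfrak{U}}$. Here I would invoke the eliminativity fact recorded just before the definition of update consequence, namely that every update shrinks the state, so that $s[\varphi]_{\mathfrak{U}}\subseteq s$ and $s[\varphi]_{\mathfrak{U}}[\psi]_{\mathfrak{U}}\subseteq s[\varphi]_{\mathfrak{U}}$. The first inclusion gives $s[\varphi]_{\mathfrak{U}}-s[\varphi]_{\mathfrak{U}}[\psi]_{\mathfrak{U}}\subseteq s$, so that subtracting this set from $s$ returns $s$ exactly when the set is empty; hence $s[\neg(\varphi;\neg\psi)]_{\mathfrak{U}}=s$ iff $s[\varphi]_{\mathfrak{U}}\subseteq s[\varphi]_{\mathfrak{U}}[\psi]_{\mathfrak{U}}$. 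Combining this inclusion with the second eliminativity fact (the reverse inclusion, which holds unconditionally) upgrades it to the equality $s[\varphi]_{\mathfrak{U}}[\psi]_{\mathfrak{U}}=s[\varphi]_{\mathfrak{U}}$.

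Since this equality is a property of $s$ alone and does not mention $w$, the set $\{w\in s\mid s[\neg(\varphi;\neg\psi)]_{\mathfrak{U}}=s\}$ equals $\{w\in s\mid s[\varphi]_{\mathfrak{U}}[\psi]_{\mathfrak{U}}=s[\varphi]_{\mathfrak{U}}\}$, which is by definition $s[\varphi\rightarrowtriangle\psi]_{\mathfrak{U}}$, completing the argument. I do not expect a genuine obstacle: the computation is routine, and the only point requiring care is the double-inclusion step, where one must remember that the reverse inclusion $s[\varphi]_{\mathfrak{U}}[\psi]_{\mathfrak{U}}\subseteq s[\varphi]_{\mathfrak{U}}$ comes for free from eliminativity, so that the empty-difference condition yields a full equality rather than a mere inclusion.
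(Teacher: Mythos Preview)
Your proof is correct. The paper states this fact without proof, treating it as an immediate consequence of the definitions; your explicit unfolding of the clauses for $\Box$, $\neg$, and $;$, together with the two uses of eliminativity to turn the empty-difference condition into the equality $s[\varphi]_{\mathfrak{U}}[\psi]_{\mathfrak{U}}=s[\varphi]_{\mathfrak{U}}$, is exactly the routine computation the paper suppresses.
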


Now with $\rightarrowtriangle$, sequential update consequence can
be reduced to update consequence.
\begin{lem}
\label{lem:update-validity}For any $\gamma_{1},\ldots,\gamma_{n},\varphi\in\mathcal{L}_{\Box;}$,
$\gamma_{1},\ldots,\gamma_{n}\vDash_{SU}\varphi$ iff $\vDash_{U}(\gamma_{1};\ldots;\gamma_{n})\rightarrowtriangle\varphi$
iff $\vDash_{U}\Box\neg(\gamma_{1};\ldots;\gamma_{n};\neg\varphi)$.
\end{lem}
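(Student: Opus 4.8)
The plan is to prove the two biconditionals in Lemma~\ref{lem:update-validity} by unfolding the definitions of sequential update consequence, support, and the update functions for $;$, $\rightarrowtriangle$, and $\Box$. The second `iff' is essentially immediate from the Fact that $s[\varphi\rightarrowtriangle\psi]_{\mathfrak{U}}=s[\Box\neg(\varphi;\neg\psi)]_{\mathfrak{U}}$: applying that Fact with $\varphi:=(\gamma_1;\ldots;\gamma_n)$ and $\psi:=\varphi$ turns $\vDash_U(\gamma_1;\ldots;\gamma_n)\rightarrowtriangle\varphi$ directly into $\vDash_U\Box\neg(\gamma_1;\ldots;\gamma_n;\neg\varphi)$, using left-associativity of $;$ to identify $(\gamma_1;\ldots;\gamma_n);\neg\varphi$ with $\gamma_1;\ldots;\gamma_n;\neg\varphi$.

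For the first `iff', I would fix an update model $\mathfrak{U}=(W,V)$ and an arbitrary state $s\subseteq W$, and write $t:=s[\gamma_1]_{\mathfrak{U}}\cdots[\gamma_n]_{\mathfrak{U}}$, which by the definition of $;$ equals $s[\gamma_1;\ldots;\gamma_n]_{\mathfrak{U}}$. The key observation is how validity of an implication unfolds: $\vDash_U(\gamma_1;\ldots;\gamma_n)\rightarrowtriangle\varphi$ means that for every $\mathfrak{U}$ and every state $s$, $s$ supports $(\gamma_1;\ldots;\gamma_n)\rightarrowtriangle\varphi$, i.e. $s[(\gamma_1;\ldots;\gamma_n)\rightarrowtriangle\varphi]_{\mathfrak{U}}=s$. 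By the update clause for $\rightarrowtriangle$, this holds iff every $w\in s$ satisfies $s[\gamma_1;\ldots;\gamma_n]_{\mathfrak{U}}[\varphi]_{\mathfrak{U}}=s[\gamma_1;\ldots;\gamma_n]_{\mathfrak{U}}$, i.e. iff $t[\varphi]_{\mathfrak{U}}=t$, which is exactly $\mathfrak{U},t\Vdash_U\varphi$. Since $s$ ranges over all states and $t=s[\gamma_1]_{\mathfrak{U}}\cdots[\gamma_n]_{\mathfrak{U}}$ is precisely the state appearing in the definition of $\gamma_1,\ldots,\gamma_n\vDash_{SU}\varphi$, the two conditions coincide.

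The one point demanding a little care, which I would single out as the \emph{main obstacle}, is the quantifier structure in the $\rightarrowtriangle$ clause: $s[(\gamma_1;\ldots;\gamma_n)\rightarrowtriangle\varphi]_{\mathfrak{U}}$ is by definition either all of $s$ or empty, according to whether the condition $t[\varphi]_{\mathfrak{U}}=t$ holds, because that condition does not depend on the particular $w\in s$. Thus `$s$ supports the conditional' collapses to the single equation $t[\varphi]_{\mathfrak{U}}=t$ regardless of $s$, and one must note that when $s=\emptyset$ the equation holds trivially so no spurious counterexample arises. Once this collapse is recorded, both directions of the first `iff' follow by reading the equivalence off for each $s$, and the Lemma is established by combining the two `iff's.
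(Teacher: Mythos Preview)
Your proposal is correct and follows exactly the approach the paper intends: the paper's proof reads in its entirety ``Straightforward from the definitions,'' and what you have written is precisely that unfolding, including the appropriate use of the Fact $s[\varphi\rightarrowtriangle\psi]_{\mathfrak{U}}=s[\Box\neg(\varphi;\neg\psi)]_{\mathfrak{U}}$ for the second biconditional. Your attention to the $s=\emptyset$ case is a nice bit of care beyond what the paper spells out, but it does not constitute a different route.
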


\begin{proof}
Straightforward from the definitions.
\end{proof}
Now we prove that update consequence can be defined by global consequence.
\begin{defn}
Given a relational model $\mathfrak{M}=(W,R,V)$, define the truth
condition for $\varphi;\psi$ as follows.
\begin{itemize}
\item $\mathfrak{M},w\Vdash\varphi;\psi$ iff $\mathfrak{M},w\Vdash\varphi$
and $\mathfrak{M}^{\varphi},w\Vdash\psi$, where $\mathfrak{M}^{\varphi}=(W^{\varphi},R^{\varphi},V^{\varphi})$
is given below:
\[
\begin{aligned}W^{\varphi} & =\{w\in W\mid\mathfrak{M},w\Vdash\varphi\}\\
R^{\varphi} & =R\cap(W^{\varphi}\times W^{\varphi})\\
V^{\varphi}(p) & =W^{\varphi}\cap V(p),\text{for all }p\in PV.
\end{aligned}
\]
\end{itemize}
\end{defn}

Given a relational model $\mathfrak{M}=(W,R,V)$, we write $\llbracket\varphi\rrbracket^{\mathfrak{M}}$
for the truth set of $\varphi$ in $\mathfrak{M}$, i.e. $\llbracket\varphi\rrbracket^{\mathfrak{M}}=\{w\in W\mid\mathfrak{M},w\Vdash\varphi\}$.
\begin{lem}
\label{lem:update-model-invariance}For any update models $\mathfrak{U}=(W,V)$
and $\mathfrak{U}'=(W',V')$ such that $W\subseteq W'$ and $V=V'\upharpoonright_{W}$,
for any $s\subseteq W$,
\[
s[\varphi]_{\mathfrak{U}}=s[\varphi]_{\mathfrak{U}'}.
\]
\end{lem}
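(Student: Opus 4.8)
The plan is to proceed by induction on the structure of $\varphi$, proving the equality \emph{uniformly for all} $s\subseteq W$ at once; keeping $s$ universally quantified inside the induction hypothesis is essential, since the clause for $;$ reduces an update by $\varphi;\psi$ to successive updates carried out on two different information states, and only a hypothesis about all $s$ will apply to the second of these.

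First, for the atomic case $\varphi=p$, I would observe that $s[p]_{\mathfrak{U}}=s\cap V(p)$ and $s[p]_{\mathfrak{U}'}=s\cap V'(p)$; since $s\subseteq W$ and $V=V'\upharpoonright_{W}$, the sets $V(p)$ and $V'(p)$ agree at every point of $s$, so the two intersections coincide. The Boolean cases $\neg\varphi$ and $\varphi\land\psi$ are then immediate: each is defined by a set-theoretic operation (subtraction from $s$, respectively intersection) applied to the updates of the immediate subformulas, and the induction hypothesis makes those subupdates equal in the two models. The modal case $\Box\varphi$ is equally direct, since the test $s[\varphi]_{\mathfrak{U}}=s$ defining $s[\Box\varphi]_{\mathfrak{U}}$ depends on $\mathfrak{U}$ only through $s[\varphi]_{\mathfrak{U}}$, which by the induction hypothesis equals $s[\varphi]_{\mathfrak{U}'}$; hence the two resulting subsets of $s$ are identical.

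The one step requiring care is the dynamic conjunction $\varphi;\psi$, where $s[\varphi;\psi]_{\mathfrak{U}}=s[\varphi]_{\mathfrak{U}}[\psi]_{\mathfrak{U}}$. Applying the induction hypothesis to $\varphi$ gives $s[\varphi]_{\mathfrak{U}}=s[\varphi]_{\mathfrak{U}'}$; call this common set $t$. To apply the induction hypothesis to $\psi$ on $t$, I first need $t\subseteq W$, and this follows from the elementary fact noted earlier that $s[\varphi]_{\mathfrak{U}}\subseteq s\subseteq W$. With $t\subseteq W$ secured, the induction hypothesis for $\psi$ yields $t[\psi]_{\mathfrak{U}}=t[\psi]_{\mathfrak{U}'}$, and composing the two equalities gives $s[\varphi;\psi]_{\mathfrak{U}}=s[\varphi;\psi]_{\mathfrak{U}'}$, as desired.

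The main (indeed only) obstacle is exactly this threading of the intermediate state through the $;$ clause: without both the universally quantified induction hypothesis and the downward-closure fact $s[\varphi]_{\mathfrak{U}}\subseteq s$, one could not guarantee that the second update again takes place on a state contained in $W$, and the appeal to the induction hypothesis would be illegitimate. Once these two ingredients are in place, the remaining cases are routine, and the induction closes.
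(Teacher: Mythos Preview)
Your proof is correct and follows essentially the same approach as the paper: a structural induction on $\varphi$ with $s$ universally quantified, handling the atomic, Boolean, $\Box$, and $;$ cases in turn. The paper's proof is terser---its chain for the $;$ case simply writes $s[\psi]_{\mathfrak{U}}[\chi]_{\mathfrak{U}}=s[\psi]_{\mathfrak{U}}[\chi]_{\mathfrak{U}'}=s[\psi]_{\mathfrak{U}'}[\chi]_{\mathfrak{U}'}$ without comment---whereas you make explicit the two ingredients (the universal quantification over $s$ in the induction hypothesis, and the fact $s[\varphi]_{\mathfrak{U}}\subseteq s\subseteq W$) that justify applying the hypothesis to the intermediate state; this is a welcome clarification but not a different argument.
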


\begin{proof}
By induction on $\varphi$.
\begin{itemize}
\item $\varphi=p$. Then $s[\varphi]_{\mathfrak{U}}=s[p]_{\mathfrak{U}}=s\cap V(p)=s\cap V'(p)=s[p]_{\mathfrak{U}'}=s[\varphi]_{\mathfrak{U}'}$.
\item The Boolean cases are easily verified.
\item $\varphi=\psi;\chi$. Then $s[\varphi]_{\mathfrak{U}}=s[\psi;\chi]_{\mathfrak{U}}=s[\psi]_{\mathfrak{U}}[\chi]_{\mathfrak{U}}=s[\psi]_{\mathfrak{U}}[\chi]_{\mathfrak{U}'}=s[\psi]_{\mathfrak{U}'}[\chi]_{\mathfrak{U}'}=s[\psi;\chi]_{\mathfrak{U}'}=s[\varphi]_{\mathfrak{U}'}$.
\item $\varphi=\Box\psi$. Then $s[\varphi]_{\mathfrak{U}}=s[\Box\psi]_{\mathfrak{U}}=\{w\in s\mid s[\varphi]_{\mathfrak{U}}=s\}=\{w\in s\mid s[\varphi]_{\mathfrak{U}'}=s\}=s[\Box\psi]_{\mathfrak{U}'}=s[\varphi]_{\mathfrak{U}'}$.
\end{itemize}
\end{proof}
\begin{lem}
\label{lem:relational2update}For any relational model $\mathfrak{M}=(W,R,V)$
with $R=W\times W$ and its underlying update model $\mathfrak{U}^{\mathfrak{M}}=(W,V)$,
for any $\varphi\in\mathcal{L}_{\Box}$,
\[
W[\varphi]_{\mathfrak{U}^{\mathfrak{M}}}=\llbracket\varphi\rrbracket^{\mathfrak{M}}.
\]
Hence, $\mathfrak{U}^{\mathfrak{M}},W\Vdash_{U}\varphi$ iff $\mathfrak{M}\Vdash\varphi$.
\end{lem}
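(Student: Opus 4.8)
The plan is to prove the displayed set-identity $W[\varphi]_{\mathfrak{U}^{\mathfrak{M}}}=\llbracket\varphi\rrbracket^{\mathfrak{M}}$ by induction on the structure of $\varphi\in\mathcal{L}_{\Box}$, and then read off the final biconditional as an immediate consequence. The conceptual point that makes a clean induction possible is that $\mathcal{L}_{\Box}$ contains no dynamic conjunction $;$: the only update clause that shifts the input state to a different one is $s[\psi;\chi]_{\mathfrak{U}}=s[\psi]_{\mathfrak{U}}[\chi]_{\mathfrak{U}}$, and since this operator does not occur, every subformula encountered while computing $W[\varphi]_{\mathfrak{U}^{\mathfrak{M}}}$ is itself evaluated at the \emph{full} state $W$. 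Hence I can keep the induction hypothesis in the simple form ``$W[\chi]_{\mathfrak{U}^{\mathfrak{M}}}=\llbracket\chi\rrbracket^{\mathfrak{M}}$'' for all proper subformulas $\chi$, rather than having to strengthen it to arbitrary states $s$.

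For the base and Boolean cases the verification is routine. The atomic case is $W[p]_{\mathfrak{U}^{\mathfrak{M}}}=W\cap V(p)=V(p)=\llbracket p\rrbracket^{\mathfrak{M}}$. For negation, $W[\neg\chi]_{\mathfrak{U}^{\mathfrak{M}}}=W-W[\chi]_{\mathfrak{U}^{\mathfrak{M}}}$, which by the induction hypothesis equals $W-\llbracket\chi\rrbracket^{\mathfrak{M}}=\llbracket\neg\chi\rrbracket^{\mathfrak{M}}$; conjunction is handled the same way, using $s[\psi\land\chi]_{\mathfrak{U}}=s[\psi]_{\mathfrak{U}}\cap s[\chi]_{\mathfrak{U}}$ together with the fact that truth sets commute with intersection.

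The modal case is where the hypothesis $R=W\times W$ does the work, and it is the step I expect to require the most care. By definition $W[\Box\chi]_{\mathfrak{U}^{\mathfrak{M}}}=\{w\in W\mid W[\chi]_{\mathfrak{U}^{\mathfrak{M}}}=W\}$, so this set equals $W$ when $W[\chi]_{\mathfrak{U}^{\mathfrak{M}}}=W$ and $\emptyset$ otherwise. On the relational side, because $R$ is the universal relation, $\mathfrak{M},w\Vdash\Box\chi$ holds iff $\mathfrak{M},u\Vdash\chi$ for every $u\in W$; thus $\llbracket\Box\chi\rrbracket^{\mathfrak{M}}$ is likewise $W$ when $\llbracket\chi\rrbracket^{\mathfrak{M}}=W$ and $\emptyset$ otherwise. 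Applying the induction hypothesis $W[\chi]_{\mathfrak{U}^{\mathfrak{M}}}=\llbracket\chi\rrbracket^{\mathfrak{M}}$, the two ``all-or-nothing'' tests fire on exactly the same condition, so the two sets coincide. This completes the induction, the essential match being between the test-like update semantics of $\Box$ evaluated at the full state and the ``true at every world'' reading forced by the universal relation.

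Finally, the ``Hence'' clause follows with no further work: $\mathfrak{U}^{\mathfrak{M}},W\Vdash_{U}\varphi$ means $W[\varphi]_{\mathfrak{U}^{\mathfrak{M}}}=W$, which by the identity just proved is $\llbracket\varphi\rrbracket^{\mathfrak{M}}=W$, i.e. $\varphi$ is true at every world of $\mathfrak{M}$, i.e. $\mathfrak{M}\Vdash\varphi$.
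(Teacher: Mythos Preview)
Your proof is correct and follows the same inductive structure as the paper's. The only difference is that the paper's proof also treats the case $\varphi=\psi;\chi$ (invoking Lemma~\ref{lem:update-model-invariance} to handle the shift of input state), which suggests the lemma was really intended for $\mathcal{L}_{\Box;}$ rather than $\mathcal{L}_{\Box}$---indeed Theorem~\ref{thm:update-2-S5} applies it to $\mathcal{L}_{\Box;}$-formulas---but for the statement as written your omission of that case is entirely justified.
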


\begin{proof}
By induction on $\varphi$.
\begin{itemize}
\item $\varphi=p\in PV$. Then $W[\varphi]_{\mathfrak{U}^{\mathfrak{M}}}=W[p]_{\mathfrak{U}^{\mathfrak{M}}}=W\cap V(p)=V(p)=\llbracket\varphi\rrbracket^{\mathfrak{M}}$.
\item The Boolean cases are easily verified.
\item $\varphi=\psi;\chi$. Then $W[\varphi]_{\mathfrak{U}^{\mathfrak{M}}}=W[\psi]_{\mathfrak{U}^{\mathfrak{M}}}[\chi]_{\mathfrak{U}^{\mathfrak{M}}}=\llbracket\psi\rrbracket^{\mathfrak{M}}[\chi]_{\mathfrak{U}^{\mathfrak{M}}}=W'[\chi]_{\mathfrak{U}^{\mathfrak{M}}}=W'[\chi]_{\mathfrak{U}'}=\llbracket\chi\rrbracket^{\mathfrak{M}^{\psi}}=\llbracket\psi;\chi\rrbracket^{\mathfrak{M}}=\llbracket\varphi\rrbracket^{\mathfrak{M}},$
where $W'=\llbracket\psi\rrbracket^{\mathfrak{M}}$ and $\mathfrak{U}'=(W',V\upharpoonright_{W'})$.
Note that the fourth identity follows from Lemma~\ref{lem:update-model-invariance}.
\item $\varphi=\Box\psi$. Then $W[\varphi]_{\mathfrak{U}^{\mathfrak{M}}}=W[\Box\psi]_{\mathfrak{U}^{\mathfrak{M}}}=\begin{cases}
W & \text{if }W[\psi]_{\mathfrak{U}^{\mathfrak{M}}}=W\\
\emptyset & \text{otherwise}
\end{cases}$

$=\begin{cases}
W & \text{if }\llbracket\psi\rrbracket^{\mathfrak{M}}=W\\
\emptyset & \text{otherwise}
\end{cases}=\begin{cases}
\llbracket\Box\psi\rrbracket^{\mathfrak{M}} & \text{if }\llbracket\psi\rrbracket^{\mathfrak{M}}=W\\
\llbracket\Box\psi\rrbracket^{\mathfrak{M}} & \text{otherwise}
\end{cases}=\llbracket\Box\psi\rrbracket^{\mathfrak{M}}=\llbracket\varphi\rrbracket^{\mathfrak{M}}$.

\end{itemize}
\end{proof}
\begin{lem}
\label{lem:update2relational}Given an update model $\mathfrak{U}=(W,V)$
and an information state $s\subseteq W$, define $\mathfrak{M}^{s}=(s,s\times s,V^{s})$,
where $V^{s}(p)=s\cap V(p)$. Then for any $\varphi\in\mathcal{L}_{\Box}$,
\[
s[\varphi]_{\mathfrak{U}}=\llbracket\varphi\rrbracket^{\mathfrak{M}^{s}}.
\]
Hence, $\mathfrak{U},s\Vdash_{U}\varphi$ iff $\mathfrak{M}^{s}\Vdash\varphi$.
\end{lem}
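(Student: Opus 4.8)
The plan is to avoid a fresh induction on $\varphi$ and instead reduce the claim to the two preceding lemmas. The key observation is that $\mathfrak{M}^{s}=(s,s\times s,V^{s})$ is precisely a relational model whose accessibility relation is the \emph{total} relation on its domain $s$, so it is an instance of the relational models treated in Lemma~\ref{lem:relational2update}. Moreover, its underlying update model $\mathfrak{U}^{\mathfrak{M}^{s}}=(s,V^{s})$ is exactly the restriction of $\mathfrak{U}=(W,V)$ to $s$, since $V^{s}(p)=s\cap V(p)=(V\upharpoonright_{s})(p)$ for every $p\in PV$.

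First I would apply Lemma~\ref{lem:relational2update} to the relational model $\mathfrak{M}^{s}$ (reading its ``$W$'' as our $s$ and its ``$R=W\times W$'' as $s\times s$), which yields
\[
s[\varphi]_{\mathfrak{U}^{\mathfrak{M}^{s}}}=\llbracket\varphi\rrbracket^{\mathfrak{M}^{s}}.
\]
Next I would invoke Lemma~\ref{lem:update-model-invariance} with the smaller model $(s,V^{s})$ and the larger model $(W,V)$: the hypotheses hold because $s\subseteq W$ and $V^{s}=V\upharpoonright_{s}$, and taking the state to be $s$ itself (trivially a subset of $s$) gives $s[\varphi]_{\mathfrak{U}^{\mathfrak{M}^{s}}}=s[\varphi]_{\mathfrak{U}}$. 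Chaining the two identities delivers $s[\varphi]_{\mathfrak{U}}=\llbracket\varphi\rrbracket^{\mathfrak{M}^{s}}$, as required. The final ``hence'' clause is then immediate: $\mathfrak{U},s\Vdash_{U}\varphi$ means $s[\varphi]_{\mathfrak{U}}=s$, while $\mathfrak{M}^{s}\Vdash\varphi$ means $\llbracket\varphi\rrbracket^{\mathfrak{M}^{s}}=s$ (truth throughout the whole domain $s$), and these coincide by the displayed equality.

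The only real care needed is in lining up the two models in the hypotheses of Lemma~\ref{lem:update-model-invariance}: one must place $(s,V^{s})$ in the role of the submodel and $(W,V)$ in the role of the supermodel, and verify the valuation-restriction condition $V^{s}=V\upharpoonright_{s}$. Alternatively, one could prove the lemma by a direct induction on $\varphi$ mirroring Lemma~\ref{lem:relational2update}; there the genuinely delicate case is the dynamic conjunction $\varphi=\psi;\chi$ (together with the case $\varphi=\Box\psi$, where the totality of $s\times s$ is used), because after the first update the state shrinks to $s[\psi]_{\mathfrak{U}}$ and one must pass from the restricted update model to the correspondingly restricted relational model $(\mathfrak{M}^{s})^{\psi}$ — exactly the bookkeeping that Lemma~\ref{lem:update-model-invariance} was designed to absorb. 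This is why I expect the reduction route to be both shorter and less error-prone.
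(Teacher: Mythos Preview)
Your reduction is correct and is a genuinely different argument from the paper's. The paper proves the lemma by a direct induction on $\varphi$, essentially rerunning the induction of Lemma~\ref{lem:relational2update} with $s$ in place of $W$; the dynamic case $\varphi=\psi;\chi$ is handled by observing that after updating by $\psi$ the shrunk state $s'=\llbracket\psi\rrbracket^{\mathfrak{M}^{s}}$ gives rise to $\mathfrak{M}^{s'}=(\mathfrak{M}^{s})^{\psi}$, and the $\Box$ case uses the totality of $s\times s$. Your route instead observes that $\mathfrak{M}^{s}$ literally satisfies the hypothesis of Lemma~\ref{lem:relational2update} (with domain $s$), so $s[\varphi]_{\mathfrak{U}^{\mathfrak{M}^{s}}}=\llbracket\varphi\rrbracket^{\mathfrak{M}^{s}}$ comes for free, and then Lemma~\ref{lem:update-model-invariance} applied to the inclusion $(s,V^{s})\subseteq(W,V)$ with state $s$ bridges the gap $s[\varphi]_{\mathfrak{U}^{\mathfrak{M}^{s}}}=s[\varphi]_{\mathfrak{U}}$. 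This is shorter, avoids duplicating an induction, and makes explicit that Lemma~\ref{lem:update2relational} is a formal consequence of the two preceding lemmas rather than an independent computation; the paper's direct induction, on the other hand, is more self-contained and does not rely on correctly instantiating the roles in Lemma~\ref{lem:update-model-invariance}. Both approaches need the extended language $\mathcal{L}_{\Box;}$ (the paper's inductions in all three lemmas cover the $\psi;\chi$ case despite the statements mentioning $\mathcal{L}_{\Box}$), and your reduction inherits this automatically.
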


\begin{proof}
By induction on $\varphi$.
\begin{itemize}
\item $\varphi=p\in PV$. Then $s[\varphi]_{\mathfrak{U}}=s[p]_{\mathfrak{U}}=s\cap V(p)=V^{s}(p)=\llbracket p\rrbracket^{\mathfrak{M}^{s}}$.
\item The Boolean cases are easily verified.
\item $\varphi=\psi;\chi$. Then x$s[\varphi]_{\mathfrak{U}}=s[\psi;\chi]_{\mathfrak{U}}=s[\psi]_{\mathfrak{U}}[\chi]_{\mathfrak{U}}=\llbracket\psi\rrbracket^{\mathfrak{M}^{s}}[\chi]_{\mathfrak{U}}=s'[\chi]_{\mathfrak{U}}=\llbracket\chi\rrbracket^{\mathfrak{M}^{s'}}=\llbracket\chi\rrbracket^{(\mathfrak{M}^{s})^{\varphi}}=\llbracket\psi;\chi\rrbracket^{\mathfrak{M}^{s}}$,
where $s'=\llbracket\psi\rrbracket^{\mathfrak{M}^{s}}$.
\item $\varphi=\Box\psi$. Then $s[\varphi]_{\mathfrak{U}}=s[\Box\psi]_{\mathfrak{U}}=\begin{cases}
s & \text{if }s[\psi]_{\mathfrak{U}}=s\\
\emptyset & \text{otherwise}
\end{cases}=\begin{cases}
s & \text{if }\llbracket\psi\rrbracket^{\mathfrak{M}^{s}}=s\\
\emptyset & \text{otherwise}
\end{cases}$

$=\begin{cases}
\llbracket\Box\psi\rrbracket^{\mathfrak{M}^{s}} & \text{if }\llbracket\psi\rrbracket^{\mathfrak{M}^{s}}=s\\
\llbracket\Box\psi\rrbracket^{\mathfrak{M}^{s}} & \text{otherwise}
\end{cases}=\llbracket\Box\psi\rrbracket^{\mathfrak{M}^{s}}=\llbracket\varphi\rrbracket^{\mathfrak{M}^{s}}$.

\end{itemize}
\end{proof}
\begin{thm}
\label{thm:update-2-S5}For any $\Gamma\cup\{\varphi\}\subseteq\mathcal{L}_{\Box;}$,
$\Gamma\vDash_{U}\varphi$ iff $\Gamma\vDash_{\mathsf{S5}}^{g}\varphi$.
\end{thm}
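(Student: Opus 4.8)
The plan is to route everything through the \emph{total} models, i.e. relational models of the form $\mathfrak{M}=(W,W\times W,V)$, since these are precisely the models that Lemmas~\ref{lem:update2relational} and~\ref{lem:relational2update} place in correspondence with update models. First I would establish that update consequence coincides with global consequence restricted to total models: $\Gamma\vDash_{U}\varphi$ iff for every total model $\mathfrak{N}$, $\mathfrak{N}\Vdash\Gamma$ implies $\mathfrak{N}\Vdash\varphi$. Both directions flow from Lemma~\ref{lem:update2relational}, which matches an arbitrary pair $(\mathfrak{U},s)$ with the total model $\mathfrak{M}^{s}$ and gives $\mathfrak{U},s\Vdash_{U}\psi$ iff $\mathfrak{M}^{s}\Vdash\psi$ for every $\psi$; conversely every total model $\mathfrak{N}=(U,U\times U,V_{0})$ is itself of the form $\mathfrak{M}^{s}$ for the update model $(U,V_{0})$ with $s=U$ (and Lemma~\ref{lem:relational2update} supplies the same translation directly). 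Quantifying $\psi$ over $\Gamma\cup\{\varphi\}$ then turns update consequence into global truth-preservation across all total models.

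Second, I would show that global consequence over the full class $\mathsf{S5}$ agrees with global consequence over total models. One direction is immediate, since every total model is an $\mathsf{S5}$ model. For the converse I would use that an $\mathsf{S5}$ frame is a disjoint union of clusters, each carrying the universal relation, together with the fact that $\mathsf{S5}$ is closed under point generated subframes. Given an $\mathsf{S5}$ model $\mathfrak{M}$ with $\mathfrak{M}\Vdash\Gamma$, I restrict to the cluster $C_{w}$ of an arbitrary world $w$ to obtain a total submodel $\mathfrak{M}\upharpoonright_{C_{w}}$, transfer $\mathfrak{M}\upharpoonright_{C_{w}}\Vdash\Gamma$, conclude $\mathfrak{M}\upharpoonright_{C_{w}}\Vdash\varphi$ by the total-model hypothesis, and finally read truth at $w$ back into $\mathfrak{M}$. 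Composing the two reductions yields the theorem.

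The main obstacle is this last transfer step, because $\mathcal{L}_{\Box;}$ contains the dynamic connective $;$, whose semantics passes to the submodel $\mathfrak{M}^{\psi}$ rather than evaluating locally, so the usual generated-submodel invariance is not automatic. I therefore would prove an invariance lemma for the whole language by induction on $\chi$: for every $\mathsf{S5}$ model $\mathfrak{N}$ and world $v$, $\mathfrak{N},v\Vdash\chi$ iff $\mathfrak{N}\upharpoonright_{C_{v}},v\Vdash\chi$, where $C_{v}$ is the cluster of $v$. The atomic and Boolean cases are routine and the $\Box$ case is standard, using that in $\mathsf{S5}$ the accessible worlds of $v$ are exactly $C_{v}$. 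The delicate case is $\chi=\psi;\rho$, where I must check that restricting to the cluster commutes with the update operation, namely that $(\mathfrak{N}^{\psi})\upharpoonright_{C_{v}\cap W^{\psi}}$ and $(\mathfrak{N}\upharpoonright_{C_{v}})^{\psi}$ coincide as models.

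To make that case go through, I would note that $\mathfrak{N}^{\psi}$ is again an $\mathsf{S5}$ model, since restricting an equivalence relation to $W^{\psi}$ keeps it an equivalence relation, and that its clusters are exactly the $\psi$-definable restrictions $C\cap W^{\psi}$ of the clusters $C$ of $\mathfrak{N}$. Using the induction hypothesis for the subformula $\psi$ to identify the carrier $C_{v}\cap W^{\psi}$ on both sides, the two models have the same domain, the same universal relation on it, and the same valuation, hence are equal; the induction hypothesis for $\rho$, applied to the $\mathsf{S5}$ model $\mathfrak{N}^{\psi}$ at $v$, then closes the case. Since the induction statement is universally quantified over all $\mathsf{S5}$ models, applying the hypothesis to the auxiliary model $\mathfrak{N}^{\psi}$ is legitimate, and the invariance lemma secures the remaining transfer, completing the equivalence $\Gamma\vDash_{U}\varphi$ iff $\Gamma\vDash_{\mathsf{S5}}^{g}\varphi$.
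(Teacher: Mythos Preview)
Your argument is correct and follows essentially the same route as the paper: both directions go through total (universal) models via Lemmas~\ref{lem:update2relational} and~\ref{lem:relational2update}, and the passage from an arbitrary $\mathsf{S5}$ model to a total one is by taking the point generated submodel at a witnessing world, which in $\mathsf{S5}$ is exactly your cluster $C_w$. Your two-step decomposition (update $\Leftrightarrow$ total, total $\Leftrightarrow$ $\mathsf{S5}$) is just a repackaging of the paper's two contrapositive arguments.

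Where you go beyond the paper is in isolating and proving the invariance lemma for $\mathcal{L}_{\Box;}$ under restriction to clusters. The paper simply asserts ``Then $\mathfrak{M}_{w}\Vdash\Gamma$ and $\mathfrak{M}_{w}\nVdash\varphi$'' without comment, tacitly relying on generated-submodel invariance; as you correctly observe, this is not automatic once the dynamic connective $;$ is present, since evaluating $\psi;\rho$ jumps to the globally defined model $\mathfrak{M}^{\psi}$. Your inductive argument---showing that $(\mathfrak{N}^{\psi})\upharpoonright_{C_v\cap W^{\psi}}$ and $(\mathfrak{N}\upharpoonright_{C_v})^{\psi}$ coincide, using the induction hypothesis for $\psi$ to identify their domains---is the right way to close this, and it makes explicit a step the paper leaves to the reader.
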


\begin{proof}
$\Rightarrow)$ Suppose $\Gamma\nvDash_{\mathsf{S5}}^{g}\varphi$.
Then there exists an $\mathsf{S5}$ model $\mathfrak{M}$ such that
$\mathfrak{M}\Vdash\Gamma$ and $\mathfrak{M}\nVdash\varphi$. By
the latter, there exists $w$ in $\mathfrak{M}$ such that $\mathfrak{M},w\nVdash\varphi$.
Let $\mathfrak{M}_{w}=(W_{w},R_{w},V_{w})$ be the submodel of $\mathfrak{M}$
generated by $w$. Then $\mathfrak{M}_{w}\Vdash\Gamma$ and $\mathfrak{M}_{w}\nVdash\varphi$.
Since $\mathfrak{M}_{w}$ is a universal model, by Lemma~\ref{lem:relational2update},
we have $\mathfrak{U}^{\mathfrak{M}_{w}},W_{w}\Vdash_{U}\Gamma$ and
$\mathfrak{U}^{\mathfrak{M}_{w}},W_{w}\nVdash_{U}\varphi$. Hence,
$\Gamma\nvDash_{U}\varphi$.

$\Leftarrow)$ Suppose $\Gamma\nvDash_{U}\varphi$. Then there exist
an update model $\mathfrak{U}$ and an information state $s$ in $\mathfrak{U}$
such that $\mathfrak{U},s\Vdash_{U}\Gamma$ and $\mathfrak{U},s\nVdash_{U}\varphi$.
By Lemma~\ref{lem:update2relational}, we have $\mathfrak{M}^{s}\Vdash\Gamma$
and $\mathfrak{M}^{s}\nVdash\varphi$. Since $\mathfrak{M}^{s}$ is
an $\mathsf{S5}$ model, it follows that $\Gamma\nvDash_{\mathsf{S5}}^{g}\varphi$.
\end{proof}
\begin{cor}
\label{cor:sequential-update-2-S5}For any $\gamma_{1},\ldots,\gamma_{n},\varphi\in\mathcal{L}_{\Box;}$,
\[
\gamma_{1},\ldots,\gamma_{n}\vDash_{SU}\varphi\text{ iff }\vDash_{\mathsf{S5}}^{g}\Box\neg(\gamma_{1};\ldots;\gamma_{n};\neg\varphi)\text{ iff }\vDash_{\mathsf{S5}}\Box\neg(\gamma_{1};\ldots;\gamma_{n};\neg\varphi).
\]
\end{cor}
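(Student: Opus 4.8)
The plan is to chain together three results already in hand, reducing the claim to a triviality about the coincidence of local and global validity. Write $\chi$ for the formula $\Box\neg(\gamma_{1};\ldots;\gamma_{n};\neg\varphi)$, so that the corollary asserts $\gamma_{1},\ldots,\gamma_{n}\vDash_{SU}\varphi$ iff $\vDash_{\mathsf{S5}}^{g}\chi$ iff $\vDash_{\mathsf{S5}}\chi$.

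First I would apply Lemma~\ref{lem:update-validity}, which gives directly that $\gamma_{1},\ldots,\gamma_{n}\vDash_{SU}\varphi$ iff $\vDash_{U}\chi$. This disposes of the sequential apparatus entirely and leaves a plain update-validity statement. Next I would view $\vDash_{U}\chi$ as the instance of update consequence with empty premise set: since $\mathfrak{U},s\Vdash_{U}\emptyset$ holds vacuously, $\emptyset\vDash_{U}\chi$ coincides with $\vDash_{U}\chi$. Theorem~\ref{thm:update-2-S5}, taken with $\Gamma=\emptyset$, then yields $\vDash_{U}\chi$ iff $\vDash_{\mathsf{S5}}^{g}\chi$, which is the first `iff' of the corollary. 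Here one only needs $\chi\in\mathcal{L}_{\Box;}$, which is clear.

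It remains to obtain the second `iff', namely $\vDash_{\mathsf{S5}}^{g}\chi$ iff $\vDash_{\mathsf{S5}}\chi$. This is the global-versus-local validity coincidence recorded in the first Fact of Section~\ref{sec:global-local}. Although that fact was stated for $\mathcal{L}_{\Box}$, the argument is completely language-insensitive: both $\vDash_{\mathsf{S5}}^{g}\chi$ and $\vDash_{\mathsf{S5}}\chi$ unfold to the single condition that $\chi$ be true at every world of every model whose frame is an $\mathsf{S5}$-frame, and the truth condition for $;$ from the definition preceding Lemma~\ref{lem:relational2update} makes this condition meaningful for $\chi\in\mathcal{L}_{\Box;}$ as well. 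Hence the two coincide and the chain is complete.

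There is no genuine obstacle here; the only points that need care are bookkeeping rather than mathematics: confirming that $\vDash_{U}$ is exactly the empty-premise case of $\vDash_{U}$-consequence so that Theorem~\ref{thm:update-2-S5} applies, and observing that the local/global validity coincidence is not special to $\mathcal{L}_{\Box}$, so that it may be invoked for the $;$-containing formula $\chi$. Everything else is a direct substitution into previously established equivalences.
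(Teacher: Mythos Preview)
Your proposal is correct and follows exactly the paper's approach: the paper's proof reads simply ``Straightforward from Lemma~\ref{lem:update-validity} and Theorem~\ref{thm:update-2-S5},'' which is precisely the chain you spell out, with the local/global validity coincidence for the final `iff' left implicit. Your added remark that this coincidence is language-insensitive and hence applies to $\mathcal{L}_{\Box;}$ is a useful clarification the paper omits.
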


\begin{proof}
Straightforward from Lemma~\ref{lem:update-validity} and Theorem~\ref{thm:update-2-S5}.
\end{proof}
Note that the truth condition for $\varphi;\psi$ is just the same
as that for $\langle\varphi\rangle\psi$ in public announcement logic
(PAL, henceforth. For an excellent overview of PAL and more generally
dynamic epistemic logic, see \cite{Ditmarsch2007}.). Thus $\varphi\rightarrowtriangle\psi$
is just $\Box[\varphi]\psi$ in PAL. Hence, we can define the following
translation from $\mathcal{L}_{\Box;}$ to $\mathcal{L}_{PAL}$.
\begin{defn}
Define $t:\mathcal{L}_{\Box;}\to\mathcal{L}_{PAL}$ as follows.
\begin{itemize}
\item $t(p)=p$, $p\in PV$
\item $t(\neg\varphi)=\neg t(\varphi)$
\item $t(\varphi\land\psi)=t(\varphi)\land t(\psi)$
\item $t(\varphi;\psi)=\langle t(\varphi)\rangle t(\psi)$
\item $t(\Box\varphi)=\Box t(\varphi)$
\end{itemize}
\end{defn}

Now we can define $\vDash_{SU}$ by the standard local or global consequence
within $\mathcal{L}_{PAL}$.
\begin{thm}
For any $\Gamma\cup\{\gamma_{1},\ldots,\gamma_{n},\varphi\}\subseteq\mathcal{L}_{\Box;}$,
\begin{enumerate}
\item $\Gamma\vDash_{U}\varphi$ iff $t(\Gamma)\vdash_{\mathbf{PAL}}^{g}t(\varphi)$,
\item $\gamma_{1},\ldots,\gamma_{n}\vDash_{SU}\varphi$ iff $\vdash_{\mathbf{PAL}}[\langle\cdots\langle t(\gamma_{1})\rangle t(\gamma_{2})\rangle t(\gamma_{3})\cdots\rangle t(\gamma_{n})]t(\varphi).$
\end{enumerate}
\end{thm}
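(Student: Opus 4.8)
The plan is to reduce both parts to results already established. Part (1) follows by combining the semantic characterization of update consequence with the translation $t$ and the known relationship between $\mathsf{S5}$-global consequence and the global consequence relation of $\mathbf{PAL}$. Part (2) follows from Lemma~\ref{lem:update-validity} together with the observation that $\rightarrowtriangle$ corresponds to $\Box[\cdot]$ in $\mathbf{PAL}$, after which it becomes a statement about $\mathbf{PAL}$-theoremhood. The main conceptual work is to verify that the translation $t$ is truth-preserving, i.e.\ that the relational truth condition for $\varphi;\psi$ matches the $\mathbf{PAL}$ semantics for $\langle t(\varphi)\rangle t(\psi)$ under $t$.

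\medskip
\noindent\textbf{Key translation lemma.}
First I would establish, by induction on $\varphi\in\mathcal{L}_{\Box;}$, that for any $\mathsf{S5}$ model $\mathfrak{M}$ (equivalently, any relational model with a universal accessibility relation, since point-generated $\mathsf{S5}$-submodels are universal), and any world $w$,
\[
\mathfrak{M},w\Vdash\varphi \text{ iff } \mathfrak{M},w\Vdash_{\mathbf{PAL}} t(\varphi).
\]
The atomic and Boolean clauses are immediate from the definition of $t$; the $\Box$-clause is immediate since $t(\Box\varphi)=\Box t(\varphi)$. The essential clause is $\varphi=\psi;\chi$: here the relational truth condition restricts $\mathfrak{M}$ to the submodel $\mathfrak{M}^{\psi}$ on $\llbracket\psi\rrbracket^{\mathfrak{M}}$, which is exactly the model transformation effected by the public announcement $\langle t(\psi)\rangle$ in $\mathbf{PAL}$, as already noted just before the Definition of $t$. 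One must check that restricting the universal relation $W\times W$ to $\llbracket\psi\rrbracket^{\mathfrak{M}}\times\llbracket\psi\rrbracket^{\mathfrak{M}}$ again yields a universal (hence $\mathsf{S5}$) model, so that the induction hypothesis applies inside $\mathfrak{M}^{\psi}$. This truth-preservation lemma is where the real content lies.

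\medskip
\noindent\textbf{Assembling the two parts.}
For part (1), I would chain the equivalences
\[
\Gamma\vDash_{U}\varphi \text{ iff } \Gamma\vDash^{g}_{\mathsf{S5}}\varphi \text{ iff } t(\Gamma)\vDash^{g}_{\mathsf{S5}}t(\varphi) \text{ iff } t(\Gamma)\vdash^{g}_{\mathbf{PAL}}t(\varphi),
\]
where the first equivalence is Theorem~\ref{thm:update-2-S5}, the second is the translation lemma (noting that $\vDash^{g}_{\mathsf{S5}}$ is global truth preservation across all $\mathsf{S5}$ models and $t$ preserves truth at every world of every such model), and the third is soundness and completeness of $\mathbf{PAL}$ for $\mathsf{S5}$-global consequence. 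For part (2), Lemma~\ref{lem:update-validity} gives $\gamma_{1},\ldots,\gamma_{n}\vDash_{SU}\varphi$ iff $\vDash_{U}\Box\neg(\gamma_{1};\ldots;\gamma_{n};\neg\varphi)$; applying $t$ and the translation lemma turns the right-hand side into validity in all $\mathsf{S5}$ models of $t(\Box\neg(\gamma_{1};\ldots;\gamma_{n};\neg\varphi))$, and unfolding $t$ over the nested $;$ and $\neg$ yields exactly the formula $[\langle\cdots\langle t(\gamma_{1})\rangle t(\gamma_{2})\rangle\cdots\rangle t(\gamma_{n})]t(\varphi)$ (using that $\Box\neg(\alpha;\neg\beta)$ translates to $[\langle t(\alpha)\rangle]$-style box, matching the $\rightarrowtriangle$ remark). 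Validity in all $\mathsf{S5}$ models is then $\mathbf{PAL}$-theoremhood by completeness.

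\medskip
\noindent\textbf{Anticipated obstacle.}
The hard part will be the $;$-clause of the translation lemma: one must argue carefully that the relational submodel construction $\mathfrak{M}\mapsto\mathfrak{M}^{\psi}$ coincides with the $\mathbf{PAL}$ announcement update on the translated formula, and in particular that the induction hypothesis transfers to the restricted model. Since $t(\psi)$ and $\psi$ define the same truth set by the induction hypothesis, the two restrictions agree, but stating this cleanly requires keeping track of which model the truth set is computed in. A secondary point is bookkeeping in part (2): one must confirm that iterating the clause $t(\alpha;\beta)=\langle t(\alpha)\rangle t(\beta)$ over the left-associated sequence $\gamma_{1};\ldots;\gamma_{n};\neg\varphi$ and then applying $\Box\neg(\cdot)$ produces precisely the displayed bracket expression; this is routine but must be exhibited so the reader sees the indicated nesting.
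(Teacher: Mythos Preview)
Your proposal is correct and follows essentially the same route as the paper: for (1) you chain Theorem~\ref{thm:update-2-S5}, the truth-preservation of $t$ (which the paper simply invokes by remarking that $\varphi;\psi$ has the same truth condition as $\langle\varphi\rangle\psi$), and completeness of $\mathbf{PAL}$ for global consequence; for (2) you go via Lemma~\ref{lem:update-validity} where the paper cites its Corollary~\ref{cor:sequential-update-2-S5}, which amounts to the same thing. The one small point you leave implicit is that the outer $\Box$ produced by translating $\Box\neg(\gamma_{1};\ldots;\gamma_{n};\neg\varphi)$ must be discarded; the paper handles this by noting $\vdash_{\mathbf{PAL}}\chi$ iff $\vdash_{\mathbf{PAL}}\Box\chi$, and you should make that step explicit in your bookkeeping.
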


\begin{proof}
For (1), by Theorem~\ref{thm:update-2-S5}, we have $\Gamma\vDash_{U}\varphi$
iff $\Gamma\vDash_{\mathsf{S5}}^{g}\varphi$. Since $\varphi;\psi$
has the same truth condition as $\langle\varphi\rangle\psi$ in PAL,
we have $\Gamma\vDash_{\mathsf{S5}}^{g}\varphi$ iff $t(\Gamma)\vDash_{\mathsf{S5}}^{g}t(\varphi)$.
Then by the completeness of $\mathbf{PAL}$ (for global consequence),
we have $t(\Gamma)\vDash_{\mathsf{S5}}^{g}t(\varphi)$ iff $t(\Gamma)\vdash_{\mathbf{PAL}}^{g}t(\varphi)$.
Item (2) follows from Corollary~\ref{cor:sequential-update-2-S5}
in the same way, noting that $[\varphi]\psi\leftrightarrow\neg\langle\varphi\rangle\neg\psi$
and $\vdash_{\mathbf{PAL}}\varphi$ iff $\vdash_{\mathbf{PAL}}\Box\varphi$.
\end{proof}
It is well known that (single agent) $\mathbf{PAL}$ can be reduced
to $\mathbf{S5}$. It follows that sequential update consequence in
$\mathcal{L}_{\Box;}$ can finally be defined by the local or global
consequence of $\mathbf{S5}$ within $\mathcal{L}_{\Box}$. This in
turn implies that $\mathcal{L}_{\Box}$ has the same expressive power
as $\mathcal{L}_{\Box;}$, for both update consequence and sequential
update consequence. 
\begin{cor}
For any $\Gamma\cup\{\gamma_{1},\ldots,\gamma_{n},\varphi\}\subseteq\mathcal{L}_{\Box}$,
\begin{enumerate}
\item $\Gamma\vDash_{U}\varphi$ iff $\Gamma\vdash_{\mathbf{S5}}^{g}\varphi$,
\item $\gamma_{1},\ldots,\gamma_{n}\vDash_{SU}\varphi$ iff $\vdash_{\mathbf{PAL}}[\langle\cdots\langle\gamma_{1}\rangle\gamma_{2}\rangle\gamma_{3}\cdots\rangle\gamma_{n}]\varphi.$
\end{enumerate}
\end{cor}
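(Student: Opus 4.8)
The plan is to read the statement as the purely announcement-free instance of the theorem immediately preceding it, exploiting the fact that the translation $t$ acts trivially on $\mathcal{L}_{\Box}$. So first I would record, by an immediate induction on formula structure, that $t(\psi)=\psi$ for every $\psi\in\mathcal{L}_{\Box}$: the clauses for atoms, $\neg$, $\land$ and $\Box$ are homomorphic and fix the formula, while the only clause that alters shape, $t(\varphi;\psi)=\langle t(\varphi)\rangle t(\psi)$, never fires because $\mathcal{L}_{\Box}$ contains no occurrence of $;$. Consequently, for $\Gamma\cup\{\gamma_{1},\ldots,\gamma_{n},\varphi\}\subseteq\mathcal{L}_{\Box}$ we have $t(\Gamma)=\Gamma$, $t(\gamma_{i})=\gamma_{i}$ and $t(\varphi)=\varphi$.

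Given this, item (2) is immediate: substituting $t(\gamma_{i})=\gamma_{i}$ and $t(\varphi)=\varphi$ into item (2) of the preceding theorem turns its right-hand side into exactly $\vdash_{\mathbf{PAL}}[\langle\cdots\langle\gamma_{1}\rangle\gamma_{2}\rangle\gamma_{3}\cdots\rangle\gamma_{n}]\varphi$. Here no reduction of $\mathbf{PAL}$ to $\mathbf{S5}$ is invoked, since the conclusion retains its announcement operators and we remain inside $\mathbf{PAL}$.

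For item (1) I would go most directly through Theorem~\ref{thm:update-2-S5}. Since $\Gamma\cup\{\varphi\}\subseteq\mathcal{L}_{\Box}\subseteq\mathcal{L}_{\Box;}$, that theorem gives $\Gamma\vDash_{U}\varphi$ iff $\Gamma\vDash_{\mathsf{S5}}^{g}\varphi$, so it remains only to pass from the semantic global consequence $\vDash_{\mathsf{S5}}^{g}$ to the syntactic one $\vdash_{\mathbf{S5}}^{g}$. Rather than cite global completeness of $\mathbf{S5}$ as a black box, I would assemble it from earlier results in the paper: by Theorem~\ref{thm:global-by-local} (applicable since $\mathsf{S5}$ is closed under point generated subframes) we have $\Gamma\vDash_{\mathsf{S5}}^{g}\varphi$ iff $\Box^{\omega}\Gamma\vDash_{\mathsf{S5}}\varphi$; by the ordinary (local) soundness and completeness of $\mathbf{S5}$ this is equivalent to $\Box^{\omega}\Gamma\vdash_{\mathbf{S5}}\varphi$; and by the proposition identifying $\Gamma\vdash_{\mathbf{S}}^{g}\varphi$ with $\Box^{\omega}\Gamma\vdash_{\mathbf{S}}\varphi$ this is in turn equivalent to $\Gamma\vdash_{\mathbf{S5}}^{g}\varphi$. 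Chaining these equivalences yields item (1).

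I do not expect a genuine obstacle; the only point demanding a little care is the semantic-to-syntactic passage in item (1). An alternative route I would keep in reserve is to derive item (1) from item (1) of the preceding theorem, which with $t$ the identity already gives $\Gamma\vDash_{U}\varphi$ iff $\Gamma\vdash_{\mathbf{PAL}}^{g}\varphi$, and then to replace $\mathbf{PAL}$ by $\mathbf{S5}$ using that single-agent $\mathbf{PAL}$ is a conservative extension of $\mathbf{S5}$ on announcement-free formulas, interpreting $\Box$ over the same class of $\mathsf{S5}$-models. However, the route through Theorem~\ref{thm:update-2-S5} is cleaner and avoids appealing to conservativity for the global relation.
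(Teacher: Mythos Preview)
Your proposal is correct. The paper states this corollary without a proof, relying on the preceding remark that single-agent $\mathbf{PAL}$ reduces to $\mathbf{S5}$; so the implicit argument for item~(1) is your ``alternative route'': specialise the preceding theorem to $\mathcal{L}_{\Box}$ (where $t$ is the identity) to obtain $\Gamma\vdash_{\mathbf{PAL}}^{g}\varphi$, and then replace $\mathbf{PAL}$ by $\mathbf{S5}$ via the reduction on announcement-free formulas. Your primary route instead goes through Theorem~\ref{thm:update-2-S5} and then assembles global completeness of $\mathbf{S5}$ from Theorem~\ref{thm:global-by-local}, local strong completeness, and the syntactic proposition linking $\vdash_{\mathbf{S}}^{g}$ with $\Box^{\omega}\Gamma\vdash_{\mathbf{S}}\varphi$. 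This is a genuinely cleaner path: it stays entirely within the paper's own results and avoids the somewhat delicate appeal to conservativity of $\mathbf{PAL}$ over $\mathbf{S5}$ for the \emph{global} derivability relation, which the paper does not justify. For item~(2) your argument coincides with the intended one.
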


\section{\label{sec:Conclusion}Concluding Remarks}

Though global consequence can be defined by local consequence for
some classes of frames, it has its independent value for application.
If domain semantics and update semantics are considered to be good
formalizations of natural languages, then global consequence could
also be useful for this application. Moreover, it is more flexible
than the former two, since we can consider different classes of frames,
which is absent in the former two semantics.

This paper is only a first step to the study of global consequence
in modal logic. For instance, is there a sufficient and necessary
condition on frames for global consequence to be defined by local
consequence? Is there a Sahlqvist-like correspondence between global
consequence and first-order properties? How to compare the frame definability
between local consequence extended with universal modality and global
consequence? We leave these technical issues for future research. 

\paragraph*{Acknowledgments.}

Thanks to audiences at the 2019 logic seminar and the mathematical
philosophy week \& workshop on modal logic at Peking University, where
I gave talks on part of this paper and received valuable feedback.
Thanks also to two anonymous referees of the 2019 national conference
on modern logic for their careful reading of my draft and their helpful
comments.


\begin{thebibliography}{10}

\bibitem{Blackburn2001a}
Patrick Blackburn, Maarten de~Rijke, and Yde Venema.
\newblock {\em {Modal logic}}.
\newblock Cambridge: Cambridge University Press, 2001.

\bibitem{Bledin2014}
Justin Bledin.
\newblock {Logic informed}.
\newblock {\em Mind}, 123(490):277--316, 2014.

\bibitem{Chellas1980}
Brian~F. Chellas.
\newblock {\em {Modal Logic: An Introduction}}.
\newblock Cambridge University Press, 1980.

\bibitem{deRijke2006}
Maarten de~Rijke and Heinrich Wansing.
\newblock {Proofs and expressiveness in alethic modal logic}.
\newblock In Dale Jacquette, editor, {\em A Companion to Philosophical Logic}.
  Blackwell Publishing, 2006.

\bibitem{Fitting1983}
Melvin Fitting.
\newblock {\em {Proof Methods for Modal and Intuitionistic Logics}}.
\newblock Springer, 1983.

\bibitem{Gillies2004}
A.S. Gillies.
\newblock {Epistemic conditionals and conditional epistemics}.
\newblock {\em No{\^{u}}s}, 4:585--616, 2004.

\bibitem{Goranko1992}
V.~Goranko and S.~Passy.
\newblock {Using the universal modality}.
\newblock {\em Journal of Logic and Computation}, 2(56):203--233, 1992.

\bibitem{Kracht1999}
Marcus Kracht.
\newblock {\em {Tools and Techniques in Modal Logic}}.
\newblock Elsevier, 1999.

\bibitem{Kracht2007}
Marcus Kracht.
\newblock {Modal consequence relations}.
\newblock In {\em Handbook of Modal Logic}, pages 491--545. Elsevier, 2007.

\bibitem{Kracht2011}
Marcus Kracht.
\newblock {Technical modal logic}.
\newblock {\em Philosophy Compass}, 6(5):350--359, 2011.

\bibitem{Ma2019}
Minghui Ma and Jinsheng Chen.
\newblock {Sequent Calculi for Global Modal Consequence Relations}.
\newblock {\em Studia Logica}, 107(4):613--637, 2019.

\bibitem{Schulz2010}
Moritz Schulz.
\newblock {Epistemic modals and informational consequence}.
\newblock {\em Synthese}, 174(3):385--395, 2010.

\bibitem{Ditmarsch2007}
H~van Ditmarsch, {Wiebe van der Hoek}, and {Barteld Kooi}.
\newblock {\em {Dynamic Epistemic Logic}}, volume 337.
\newblock Springer Netherlands, Dordrecht, 2007.

\bibitem{Veltman1996}
Frank Veltman.
\newblock {Defaults in update semantics}.
\newblock {\em Journal of Philosophical Logic}, 25(3):221--261, 1996.

\bibitem{Venema1992}
Yde Venema.
\newblock {\em {Many-dimensional modal logic}}.
\newblock PhD thesis, University of Amsterdam, 1992.

\bibitem{Willer2013}
Malte Willer.
\newblock {Dynamics of epistemic modality}.
\newblock {\em Philosophical Review}, 122(1):45--92, 2013.

\bibitem{Willer2015}
Malte Willer.
\newblock {An Update on Epistemic Modals}.
\newblock {\em Journal of Philosophical Logic}, 44(6):835--849, 2015.

\bibitem{Yalcin2007}
S.~Yalcin.
\newblock {Epistemic Modals}.
\newblock {\em Mind}, 116(464):983--1026, 2007.

\end{thebibliography}

\end{document}